\crefname{theorem}{Theorem}{Theorems}
\crefname{lemma}{Lemma}{Lemmas}
\crefname{claim}{Claim}{Claims}
\crefname{prop}{Proposition}{Propositions}
\newtheorem{theorem}{Theorem}[section]
\newtheorem{lemma}[theorem]{Lemma}
\newtheorem{observation}[theorem]{Observation}
\newtheorem{claim}[theorem]{Claim}
\theoremstyle{definition}
\newtheorem{definition}[theorem]{Definition}
\newcommand{\R}{\mathbb{R}}
\newcommand{\floor}[1]{\lfloor#1\rfloor}
\newcommand{\eps}{\varepsilon}
\def\ShowAuthNotes{1}
\newcommand{\authnote}[2]{\textcolor{blue}{[{\footnotesize {\bf #1:} { {#2}}}]}}
\newcommand{\authnote}[2]{}
\title{Improved Hardness-of-Approximation for Token Swapping}
\author{
Sam Hiken\thanks{\url{shiken@mit.edu}. Supported by Paglia Post-Baccalaureate Fellowship and NSF grant CNS-2150186}
\\MIT 
\and 
Nicole Wein\thanks{\url{nswein@umich.edu}. This work was initiated during the DIMACS REU program at Rutgers University. This author was supported by a grant to DIMACS from the Simons Foundation (820931). }\\University of Michigan}
\date{}
\begin{document}

\maketitle

\abstract{We study the token swapping problem, in which we are given a graph with an initial assignment of one distinct token to each vertex, and a final desired assignment (again with one token per vertex). The goal is to find the minimum length sequence of \emph{swaps} of adjacent tokens required to get from the initial to final assignment. 

The token swapping problem is known to be NP-complete. It is also known to have a polynomial-time 4-approximation algorithm. From the hardness-of-approximation side, it is known to be NP-hard to approximate with ratio better than 1001/1000. 

Our main result is an improvement of the approximation ratio of the lower bound: We show that it is NP-hard to approximate with ratio better than 14/13. 

We then turn our attention to the \emph{0/1-weighted} version, in which every token has a weight of either 0 or 1, and the cost of a swap is the sum of the weights of the two participating tokens. Unlike standard token swapping, no constant-factor approximation is known for this version, and we provide an explanation. We prove that 0/1-weighted token swapping is NP-hard to approximate with ratio better than $(1-\varepsilon) \ln(n)$ for any constant $\epsilon>0$. 

Lastly, we prove two barrier results for the standard (unweighted) token swapping problem. We show that one cannot beat the current best known approximation ratio of 4 using a large class of algorithms which includes all known algorithms, nor can one beat it using a common analysis framework.



\vfill

\pagenumbering{gobble}
\pagebreak
 \pagenumbering{arabic}
 


\section{Introduction}

In the \emph{token swapping} problem~\cite{aicholzer:2021, MR4541302, akers:1989, MR2431751, MR1137822, MR1705338, MR1334632, MR3349550, miltzow_et_al:LIPIcs.ESA.2016.66, MR3805577, cayley1849lxxvii,MR796304,MR1691876,MR3917574,yasui2015swapping,portier1990whitney}, we are given an undirected $n$-vertex graph $G=(V,E)$, $n$ distinct tokens, and two one-to-one assignments of tokens to vertices: a \emph{starting} assignment, and a \emph{target} assignment. A swap along an edge $(u,v) \in E$ switches the locations of the tokens on $u$ and $v$. The token swapping problem asks how many swaps are needed to arrive at the target configuration from the starting configuration.

The token swapping problem is a fundamental and well-studied problem, and one of the central problems in the area of \emph{reconfiguration} algorithms. It has also found relevance in a number disparate areas including network engineering~\cite{akers:1989}, robot motion planning~\cite{MR4036097,surynek2019multi}, and game theory~\cite{gourves2017object}. Token swapping (mainly its parallel variant~\cite{MR1285588,MR3710080,MR4036097,MR3917574,MR1666061}) also has an extensively studied application to qubit routing (e.g.~\cite{banerjee2022locality, siraichi2019qubit, molavi2022qubit,bapat2023advantages,MR3964104,MR4638397,sharma2023noise,MR4594484}).
Algorithms and heuristics for the token swapping problem have also undergone  experimental evaluation~\cite{MR4261033,surynek2018finding}.

The token swapping problem is NP-complete~\cite{miltzow_et_al:LIPIcs.ESA.2016.66}, even when the underlying graph is a tree~\cite{aicholzer:2021}. As a result, the research literature has focused on approximation algorithms. Miltzow, Narins, Okamoto, Rote, Thomas, and Uno gave a 4-approximation~\cite{miltzow_et_al:LIPIcs.ESA.2016.66}, which remains the best known. For the special case of trees, there is a 2-approximation, which was independently discovered 3 times using different algorithms~\cite{akers:1989,MR1334632,YAMANAKA:2015}. 

From the hardness side, the above work~\cite{miltzow_et_al:LIPIcs.ESA.2016.66} proved that token swapping is APX-hard (on general graphs). Thus, unless $\text{P}=\text{NP}$, token swapping does not admit a PTAS, and instead there exists some positive constant $c$ for which there is no $c$-approximation. Regarding this constant $c$, they state \emph{``We want to point out that a crude estimate
for the constant c in [the NP-hardness result] is $c \approx 1 + 1/
1000$. We do not believe that it is worth to
compute c exactly. Instead, we hope that future research might find reductions with better
constants.''} This question is the main focus of our work. 

We also consider the \emph{0/1-weighted} version of token swapping, in which each token has a weight of either 0 or 1, and the cost of a swap is the sum of the weights of the two tokens. Despite being studied in prior work~\cite{MR4541302, aicholzer:2021}, there is no known approximation algorithm for 0/1-weighted with any non-trivial approximation ratio. Furthermore, there is no known separation between 0/1-weighted token swapping and standard token swapping.

\subsection{Our Results}
Our main result is the first hardness of approximation result for token swapping with an explicit approximation ratio of ``reasonable'' magnitude. Specifically, we show that it is NP-hard to obtain  better than a $14/13$-approximation: 

\begin{restatable}{theorem}{mainthm}\label{main-theorem}
  For any constant $\eps> 0$, it is NP-hard to approximate token swapping on graphs within a factor of $14/13-\eps$.
\end{restatable}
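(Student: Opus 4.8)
The plan is to give a gap-preserving polynomial-time reduction from a maximum constraint satisfaction problem with a known explicit inapproximability ratio — for instance, a bounded-occurrence variant of 3SAT, or a Max-2-Lin / Max-E3-Lin-2 variant, for which it is NP-hard to distinguish fully satisfiable instances from those in which at least an $\eta$-fraction of the constraints must be violated, for an explicit constant $\eta>0$. Given such an instance $\varphi$ with $m$ constraints, I would build a token swapping instance $(G,\text{start},\text{target})$ whose minimum swap number equals a fixed value $C$ when $\varphi$ is satisfiable and is at least $C+\eta m\Delta$ when $\varphi$ is $\eta$-far from satisfiable, where $\Delta$ is a per-violated-constraint penalty built into the gadgets. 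Since all gadgets have size $O(1)$, $C=\Theta(m)$, so $(C+\eta m\Delta)/C$ is a constant strictly above $1$; the whole game is to engineer the gadgets so that this constant is $14/13$.

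For the construction I would connect two kinds of constant-size gadgets along a backbone. Each variable $x_i$ gets a gadget (a short cycle or a bundle of short paths) in which the tokens can reach their targets along exactly two cheapest schedules of equal length, one encoding $x_i=\text{true}$ and the other $x_i=\text{false}$, with every other routing strictly more expensive. Each constraint $C_j$ gets a gadget wired to the variable gadgets of its literals, designed so that it admits a routing of the minimum (displacement-matching) length exactly when some incident literal is set to satisfy $C_j$, and otherwise every routing of that gadget — given the routings chosen in the neighboring variable gadgets — costs at least $\Delta$ more. The two lower-bound tools I would use are (i) the displacement bound, that any swap sequence uses at least $\tfrac12\sum_t\mathrm{dist}(\text{start}(t),\text{target}(t))$ swaps, and (ii) the cut bound, that for any vertex set $S$ the number of swaps is at least the number of tokens that start in $S$ but have target outside $S$. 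Surrounding each gadget with an appropriate family of cuts localizes the cost and lets the global optimum decompose into a sum of per-gadget contributions up to lower-order terms.

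Completeness should then be routine: from a satisfying assignment, concatenate the corresponding cheapest variable-gadget schedules with the matching cheapest constraint-gadget schedules, observe the resulting sequence has length $C$, and check that $C$ already equals the displacement lower bound, hence is optimal. Soundness is where the real difficulty lies: if $\varphi$ forces at least $\eta m$ violated constraints, I must show that \emph{every} swap sequence — even one that moves tokens globally and ignores gadget boundaries — pays the extra $\Delta$ for each such constraint. This needs the cut bounds to be sharp enough to certify all the per-gadget penalties simultaneously, and it needs an argument that ``cheating'' cannot recover the savings: temporarily parking tokens in neighboring gadgets, or using inconsistent truth values for a variable across its several constraint gadgets, must itself be charged (inconsistency via the variable gadget's own structure). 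I expect this soundness analysis, together with the optimization of all gadget parameters to drive the ratio exactly to $14/13-\eps$ while keeping the penalty local and the completeness routing exactly displacement-optimal, to be the main obstacle; the constant $14/13$ presumably reflects the best achievable trade-off, in this family of constructions, between the unavoidable base cost of a gadget and the marginal cost of one violated constraint.
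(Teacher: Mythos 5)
Your proposal is a plan rather than a proof: the two steps you defer --- engineering the gadgets so that the trade-off comes out to exactly $14/13$, and the soundness argument that every (global, boundary-ignoring) swap sequence pays a penalty per violated constraint --- are the entire content of the theorem, and nothing in the sketch indicates how to carry them out. Moreover, the framework you commit to (constant-size gadgets for a bounded-occurrence CSP, a per-violated-constraint penalty $\Delta$, completeness cost exactly equal to the displacement lower bound, and localization via cut bounds) is essentially the framework behind the previously known $\approx 1001/1000$ bound, and there is no evidence it can be pushed anywhere near $14/13$: with constant-size gadgets the unavoidable base cost per constraint is large compared to any penalty you can charge for a violation, and requiring the completeness routing to meet the displacement bound exactly (every swap advancing both tokens) is an extremely rigid constraint that the paper does not and need not satisfy.

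The paper's route is different in all of these respects. It reduces from \textsc{GapLabel-Cover}$_{1,\gamma}$ on $d$-regular instances, where both the soundness parameter $\gamma$ and $1/d$ can be made arbitrarily small; the gadgets are not constant-size but scale with $d$ and $|\Sigma|$ (label paths of length $d-1$ and $2d-1$, satisfaction paths of length $d$), and the gap emerges in the limit $d\to\infty$, $\gamma\to 0$. In the completeness case the constructed sequence has length about $\tfrac{13}{4}d^2|X\cup Y|$, which is strictly larger than the displacement bound of $2d^2|X\cup Y|$, so optimality-via-displacement plays no role. In the soundness case there is no per-violated-constraint accounting at all; instead the proof globally counts swaps on satisfaction paths and inside gadgets, defines ``efficient'' swaps (those lying on the designated routes of two assignment tokens simultaneously), and shows via the in-/out-dominant label path argument (Claims~\ref{claim:eff-bound} and~\ref{claim:php}) that a large number of efficient gadget swaps would yield a labelling satisfying a $\gamma$-fraction of constraints, a contradiction. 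Handling ``detour'' tokens that leave the intended $4d$-length routes is also a substantial part of that analysis and has no counterpart in your sketch. So the gap is concrete: you have not constructed gadgets achieving the ratio, and the mechanism you propose for soundness (local cut-certified penalties per violated constraint with displacement-tight completeness) is not the mechanism that makes $14/13$ attainable.
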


Our result is via a reduction from the label cover problem. We note that our result does not rely on the Unique Games Conjecture, and is instead a gap-preserving reduction from a regime of the label cover problem known to be NP-complete.


For 0/1-weighted token swapping, our next result explains the aforementioned gap in the literature by showing a large separation between the 0/1-weighted and unweighted versions. Specficially, we show that it is NP-hard to obtain better than an $(\ln n)$-approximation for 0/1-weighted token swapping: 

\begin{restatable}{theorem}{thmwts} \label{thm:WTS}
    For any constant $\eps > 0$, it is NP-hard to approximate weighted token swapping with $\{0,1\}$ weights on $n$ vertices within a factor of $(1-\eps) \cdot \ln n$.
\end{restatable}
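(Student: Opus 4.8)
My plan is to prove \cref{thm:WTS} by an approximation‑preserving reduction from \emph{Set Cover} (equivalently, \emph{Dominating Set}), using the theorem of Dinur and Steurer that it is NP‑hard to approximate Set Cover over a universe of size $N$ to within a factor of $(1-\eps)\ln N$; for the logarithm to transfer with the correct constant I will want this hardness in the regime where the number of sets is $N^{1+o(1)}$ (equivalently, reduce from Dominating Set, whose $(1-\eps)\ln n$‑hardness is stated directly in terms of the number of vertices). The structural fact about $\{0,1\}$‑weighted token swapping I would exploit is that weight‑$0$ tokens can be permuted among themselves for free, so the cost of a swap sequence equals $\sum_{t:\,w(t)=1}(\#\text{ swaps that } t \text{ takes part in})$; in particular, a weight‑$1$ token whose start equals its target contributes $0$ if it never moves and at least $2$ if it moves at all (one swap would leave it on a neighbor, not back home).

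Given a set system with elements $e_1,\dots,e_N$ and sets $S_1,\dots,S_M$, I would build the following instance. For each $S_j$ create a vertex $s_j$ holding a weight‑$1$ token $U_j$ with target $s_j$, and a pendant vertex $s_j'$ holding a weight‑$0$ token with target $s_j'$. For each $e_i$ create two pendant vertices $a_i,b_i$, each joined to precisely the vertices $\{\,s_j : e_i\in S_j\,\}$; the weight‑$0$ token starting at $a_i$ has target $b_i$ and the one at $b_i$ has target $a_i$. Thus the only weight‑$1$ tokens are the $U_j$, and the only weight‑$0$ tokens forced to move are the $2N$ element tokens. The graph has $n=2M+2N+O(1)$ vertices.

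For completeness, from a cover $\mathcal{C}$ I would slide each $U_j$ with $S_j\in\mathcal{C}$ onto $s_j'$ (cost $1$ apiece), then route each pair $a_i\leftrightarrow b_i$ through some $s_j$ with $S_j\in\mathcal{C}$ and $e_i\in S_j$ (now holding a weight‑$0$ token), restore the remaining weight‑$0$ tokens — all of this using weight‑$0$ swaps only, cost $0$ — and finally slide each moved $U_j$ back (cost $1$), for total cost $2|\mathcal{C}|$. For soundness, in any swap sequence each element token must leave its start vertex, and its first swap moves it onto some $s_j$ with $e_i\in S_j$; at that instant $U_j$ is either the other participant in that swap or has already moved, so $U_j$ is disturbed. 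Hence $\{S_j : U_j \text{ ever moves}\}$ is a cover, so at least $\mathrm{OPT}_{\mathrm{SC}}$ of the $U_j$ move, and since cost $=\sum_j(\#\text{ swaps } U_j\text{ is in})\ge 2\cdot\#\{j : U_j\text{ moves}\}$, the cost is at least $2\,\mathrm{OPT}_{\mathrm{SC}}$. Combining the bounds, $\mathrm{OPT}_{\mathrm{TS}}=2\,\mathrm{OPT}_{\mathrm{SC}}$; with $n=N^{1+o(1)}$ we get $\ln n=(1+o(1))\ln N$, so a $(1-\eps)\ln n$‑approximation for $\{0,1\}$‑weighted token swapping would yield a $(1-\eps')\ln N$‑approximation for Set Cover, a contradiction.

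The step I expect to be the main obstacle is making the soundness argument airtight against an adversary that behaves globally cleverly — e.g.\ sending an element token on a long detour weaving through many vertices $s_j$, or interleaving the movements of many $U_j$'s in intricate ways. The ``first‑swap'' charging above should be robust to this, but I will need to argue carefully that it never undercounts (no $U_j$ is credited for covering an element it does not actually serve) and that every $U_j$ that moves genuinely costs at least $2$ (this is where ``start equals target'' is used). A secondary, more routine obstacle is the instance‑size bookkeeping that keeps $n$ at $N^{1+o(1)}$, which is precisely why one wants Set Cover hardness with a near‑linear number of sets, or cleanest of all, a reduction phrased through Dominating Set.
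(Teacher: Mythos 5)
Your proposal is correct and takes essentially the same route as the paper: a reduction from Set Cover in which each element is represented by a pair of weight-$0$ tokens that must route through a set vertex holding a weight-$1$ token whose start equals its target, giving $\mathrm{OPT}_{\mathrm{TS}} = 2\cdot \mathrm{OPT}_{\mathrm{SC}}$ with the identical soundness charge of at least $2$ per displaced weight-$1$ token. The only cosmetic differences are that you park each displaced weight-$1$ token on an added pendant vertex $s_j'$, whereas the paper parks it on an element vertex of an element uniquely covered by that set in a minimal cover, and that you handle the $\ln n$ bookkeeping by insisting on near-linearly many sets (or Dominating Set), whereas the paper invokes the Dinur--Steurer hardness stated directly as $(1-\delta)\ln(|U|+k)$ and absorbs the factor $2$ in $n = 2|U|+k$ by taking $\eps$ slightly smaller.
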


Our reduction uses a completely different technique from our main result, and is a much simpler reduction, from the set cover problem. This is notable in comparison to the known reductions in the token swapping literature, which tend to be quite complicated~\cite{aicholzer:2021,MR4541302,miltzow_et_al:LIPIcs.ESA.2016.66,MR3805577}.

Our final two results are barrier results regarding the algorithm and analysis techniques that could possibly achieve better than the current best-known approximation ratio of 4 (for standard unweighted token swapping). All known token swapping algorithms have the natural property of \emph{local optimality}: they never perform a swap that brings \emph{both} tokens farther from their destinations.\footnote{We compare the notion of a locally optimal algorithm to that of an \emph{$\ell$-straying} algorithm, introduced in prior work on token swapping for \emph{trees}~\cite{aicholzer:2021}. These two notions are incomparable: a swap sequence can have either property without having the other. However, any algorithm that solves token swapping on general graphs must generate an $\Omega(n)$-straying swap sequence: consider the example of a cycle where each token wants to shift over by 1. For this reason, we do not consider $\ell$-straying algorithms for general graphs, and focus instead on locally optimal algorithms.} We show that, strangely, this property must be violated to achieve any approximation ratio better than 4. 

\begin{restatable}{theorem}{local} \label{thm:local-optimal}
    For any $\delta > 0$, there exists a token swapping instance $K$ so that for any locally optimal swap sequence of length $k$, $(4 - \delta) \cdot OPT(K) \leq k$.
\end{restatable}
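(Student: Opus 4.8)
The plan is to exhibit a single small gadget (or a family indexed by a parameter $m \to \infty$) on which a locally optimal swap sequence is forced to be wasteful by roughly a factor of $4$. The guiding intuition is the lower bound half of the $4$-approximation: any swap sequence has length at least $\frac{1}{2}\sum_v d(v, \pi(v))$, where $\pi$ is the target permutation, since each swap decreases this sum by at most $2$. A locally optimal algorithm is only guaranteed \emph{not} to increase this potential, so in the worst case it makes progress of essentially $0$ per swap for a long time while still being ``locally optimal.'' I want a graph where almost every swap a locally optimal algorithm is allowed to make is one that moves one token closer and one token farther (net change $0$ in the potential), and where escaping this region requires many such swaps, while $OPT$ is tiny.

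Concretely, I would use a long cycle $C_N$ (or a cycle with a few chords / a ``theta''-like graph) together with a target permutation that cyclically rotates the tokens by one step. On a bare cycle, rotating all tokens by one costs $N-1$ swaps and that is optimal, so that alone is not enough — I need the \emph{optimal} solution to be cheap while the locally-optimal-forced solution is expensive. So instead I would take a graph built from a short ``highway'' path of length $O(1)$ connecting two points, plus a long ``detour'' cycle between the same two points; place tokens so that a small number of them genuinely need to traverse from one side to the other (cheaply, via the highway, giving small $OPT$), but arrange the remaining tokens and their targets so that every swap along the highway is not locally optimal (both tokens would be pushed away), forcing any locally optimal sequence to route everything the long way around the detour, at cost $\approx 4\cdot OPT$. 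The factor $4$ rather than something larger should emerge from the same tightness present in the analysis of the known $4$-approximation: the detour is about twice as long as necessary in distance, and locally optimal routing wastes another factor of about $2$ by shuffling tokens back and forth.

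The steps, in order, would be: (1) define the gadget graph $G_\delta$ and the start/target configurations precisely, with a parameter controlling the length of the detour; (2) compute $OPT(G_\delta)$ by exhibiting an explicit short swap sequence (using the highway) and matching it with the $\frac{1}{2}\sum d(v,\pi(v))$ lower bound, or a small additive correction thereof; (3) prove the key structural claim that in \emph{every} locally optimal swap sequence starting from this configuration, no swap is ever performed along the highway edges (because at every reachable configuration, both endpoints of a highway edge hold tokens whose destinations lie on the same side, so swapping strictly increases both their distances) — this is the crux; (4) conclude that a locally optimal sequence is effectively solving token swapping on $G_\delta$ minus the highway, i.e.\ essentially on the long cycle, and lower-bound its length by $(4-\delta)\,OPT(G_\delta)$ using the cycle's distance structure; (5) let $\delta \to 0$ by taking the detour long enough.

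The main obstacle I expect is step (3): maintaining the invariant that highway edges stay ``locally forbidden'' throughout \emph{any} adversarial locally optimal run, not just the natural one. A locally optimal algorithm could conceivably shuffle tokens around on the detour in a way that eventually brings a highway-destined token adjacent to the highway; I need to show that even then the swap across the bridge is disallowed, or that by the time such a token arrives the configuration has already ``paid'' enough swaps. Getting this invariant to be robust will likely require choosing the token pattern carefully — e.g.\ making \emph{all} tokens want to move consistently in one rotational direction around the detour so that the highway is always ``against the flow'' — and possibly arguing via a monovariant that is preserved by every locally optimal move. A secondary, more routine obstacle is pinning down the constant to be exactly $4-\delta$ (as opposed to, say, $3-\delta$); this is where I would need the detour length and the number of ``extra'' shuffled tokens balanced precisely so that the ratio of (locally-optimal-forced length) to $OPT$ approaches $4$.
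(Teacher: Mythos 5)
Your overall framing---bar locally optimal sequences from a cheap route and force them onto an expensive one---is in the same spirit as the paper, but the mechanism in your step (3), which you correctly identify as the crux, cannot work. Local optimality only forbids swaps in which \emph{both} tokens move farther from their destinations; any swap that moves one token strictly closer is permitted regardless of what happens to the other token. In your construction the highway is, by design, the shortest route for the tokens that genuinely need to cross, so every step such a token takes toward and then across the highway strictly decreases its distance to its target and is therefore locally optimal---the fact that the token it displaces is pushed away is irrelevant. Since the theorem must hold for \emph{all} locally optimal sequences, an adversarial (i.e.\ clever) sequence can simply bubble each crossing token along its shortest path through the highway, and afterwards bubble the displaced stationary tokens back home (each of those swaps again moves some token closer, hence is locally optimal). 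This yields a locally optimal sequence of length on the order of the summed start-to-target distances, i.e.\ within roughly a factor of $2$ of $OPT$, not $4-\delta$. So the invariant ``no highway swap is ever available'' is not just hard to maintain; it is false for the configurations a smart locally optimal sequence will deliberately create, and no choice of token pattern can rescue it as long as the forbidden route is the strictly shorter one.

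The paper's construction inverts the roles of the two routes, and that inversion is the missing idea. Each token's individually shortest route (an ``inner cycle'' of length $2q-2$ joining its start and target) is exactly where locally optimal sequences are confined, while the route that $OPT$ exploits (the outer cycle, length $2q$ between the same pair) is slightly \emph{longer}, so that the swap moving a token onto it pushes both participating tokens away from their targets and is thus forbidden to locally optimal sequences. $OPT$ profits from the outer cycle not because it is shorter but because opposing flows there (even-segment tokens rotating one way, odd-segment tokens the other) make every swap advance two tokens simultaneously; on each inner cycle, by contrast, all tokens must rotate in the same direction, which costs about twice the summed distances. The factor $4$ is then $2\times 2$: one factor $2$ from losing the pairing of opposing flows, one from unidirectional rotation on a cycle. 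If you want to salvage your plan, this is the repair: the ``forbidden'' route must be slightly longer than each token's own shortest path, and $OPT$'s advantage must come from pairwise-productive swaps along it rather than from shorter individual distances.
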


In our second barrier result, we show that a proof technique used by all known analyses of approximation algorithms for token swapping, cannot yield better than a 4-approximation. See~\cref{sec:bar2} for more details.

Lastly, in the appendix we provide an alternative algorithm that gives a 4-approximation for token swapping (in addition to the known algorithm~\cite{miltzow_et_al:LIPIcs.ESA.2016.66}). While the algorithm of~\cite{miltzow_et_al:LIPIcs.ESA.2016.66} is an extension of the 2-approximation ``happy swap algorithm'' for trees~\cite{akers:1989}, our algorithm is an extension of the 2-approximation ``cycle algorithm'' for trees~\cite{YAMANAKA:2015}.

\subsection{Additional Related Work}
The token swapping problem and its variants have been studied from many angles. Exponential-time algorithms and hardness under the Exponential Time Hypothesis (ETH) were studied in~\cite{miltzow_et_al:LIPIcs.ESA.2016.66}. Parameterized complexity was studied in~\cite{MR3805577}, where the authors show that token swapping is W[1]-hard when the parameter is the number of swaps, and show further hardness under ETH. Token swapping has also been studied on a variety of special classes of graphs including cycles~\cite{MR796304}, stars~\cite{portier1990whitney,MR1691876}, brooms~\cite{MR3917574,MR4541302,MR1705338}, complete bipartite graphs~\cite{MR3349550}, complete split graphs~\cite{yasui2015swapping}, and cliques~\cite{cayley1849lxxvii} (dating back to Cayley in 1849). \emph{Colored} token swapping has also been studied, where each token has a color and same-colored tokens are indistinguishable~\cite{miltzow_et_al:LIPIcs.ESA.2016.66,MR4541302, MR3805577,MR3917573,MR3917573}. There are also several other models of token movement on graphs such as token sliding, token rotation, and token permutation
 (see e.g.~\cite{surynek2019multi}).
 See also the introductions of~\cite{MR4541302} and~\cite{aicholzer:2021} for a more details on the wealth of related work.



\section{Preliminaries}

\begin{definition}
    A \textsc{Token Swapping} instance $K = (G, T, f_1, f_2)$ consists of a graph $G = (V,E)$, a set of tokens $T$ where $|T| = |V|$, and two one-to-one assignments $f_1, f_2 : T \rightarrow V$. We call $f_1$ and $f_2$ the \emph{starting} and \emph{target} configurations respectively. A swap along an edge $(u,v) \in E$ switches the locations of the tokens on $u$ and $v$. The token swapping problem asks how many swaps are needed to arrive at the target configuration from the starting configuration. 
\end{definition}

For a token swapping instance $K$, we denote by $OPT(K)$ the length of the shortest swap sequence. For a vertex $v$, we denote by $f_1^{-1}(v)$ and $f_2^{-1}(v)$ the tokens that begin and end on $v$. When a token $t$ lies on the vertex $v_1$ of the path $p = v_1, v_2, ..., v_k$, we use the phrase \emph{bubbling t across p} to denote the sequence of swaps $(v_1, v_2), (v_2, v_3), ..., (v_{k-1}, v_k)$.

We also consider the following weighted variant of \textsc{Token Swapping}:
\begin{restatable}[Weighted Token Swapping]{definition}{weightedTS} \label{def:weightedTS}
     An instance of \textsc{Weighted Token Swapping} $W = (G, T, w, f_1, f_2)$ consists of a graph $G = (V,E)$, together with a set of tokens $T$ of size $|V|$. The weight function $w: T \rightarrow \R^{\geq 0}$ assigns to each token a non-negative real weight. As in standard \textsc{Token Swapping}, the one-to-one functions $f_1, f_2: T \rightarrow V$ map each token to a unique vertex, giving the starting aand target configurations of $W$. The weight of an edge swap is the sum of the weights of the tokens being swapped. The weight of a swap sequence is the sum of the weights of the constitutive edge swaps. The output to $W$ is the weight of the lowest-weight swap sequence from the starting to the target configuration.
\end{restatable}

\section{Hardness for standard token swapping}\label{sec:standard}
In this section, we prove \cref{main-theorem}.
\mainthm* 
This improves on the lower bound presented in \cite{miltzow_et_al:LIPIcs.ESA.2016.66}, which is roughly $1001/1000$. We obtain our lower bound via a gap-preserving reduction from \textsc{Label-Cover}, defined below.

\begin{definition}[Label Cover]
    An instance of \textsc{Label-Cover} $\Phi = (X, Y, E, \Sigma, \Pi)$ consists of a bipartite graph with vertex set $X \cup Y$ and edge set $E$, together with a finite alphabet $\Sigma$, and a set of constraints $\Pi = \{\Pi_e : e \in E\}$. Each constraint is a function $\Pi_e: \Sigma \rightarrow \Sigma$. 
    
    A labelling of $X\cup Y$ is a function $\lambda: X \cup Y \rightarrow \Sigma$ which assigns a label to each vertex. For $x\in X$, $y\in Y$, a constraint $\Pi_{(x,y)}$ is satisfied if $\Pi_{(x,y)}(\lambda(x)) = \lambda(y)$. 
\end{definition} 

We denote by $OPT(\Phi)$ the maximum fraction of constraints in $\Phi$ which can be satisfied by any labelling. An instance of the promise problem $\textsc{GapLabel-Cover}_{1,\gamma}(\Sigma)$ consists of a label-cover instance $\Phi$ with alphabet $\Sigma$, together with the guarantee that either $OPT(\Phi) = 1$ or $OPT(\Phi) < \gamma$. Here, $\gamma$ is a positive constant close to $0$. The following is a seminal result in the theory of hardness of approximation.

\begin{theorem} \label{lab-cov-hardness}
    For any constant $\gamma>0$, there exists a sufficiently large constant $|\Sigma|$ (dependent only on $\gamma$) such that $\textsc{GapLabel-Cover}_{1,\gamma}(\Sigma)$ is NP-hard. 
\end{theorem}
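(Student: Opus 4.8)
The plan is to derive \cref{lab-cov-hardness} by combining two landmark results: the PCP theorem and the parallel repetition theorem. Neither is reproved here; the content of the sketch is how they fit together to give exactly the stated form---perfect completeness, arbitrarily small soundness, and constant alphabet depending only on $\gamma$.

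\textbf{Step 1: a constant gap.} The PCP theorem (Arora--Safra; Arora--Lund--Motwani--Sudan--Szegedy; or Dinur's gap-amplification proof) implies that there is an absolute constant $\eps_0 > 0$ such that distinguishing satisfiable 3CNF formulas from formulas in which no assignment satisfies a $(1-\eps_0)$-fraction of the clauses is NP-hard. I would encode such a formula $\phi$ as a \textsc{Label-Cover} instance $\Phi_0$ via the standard clause--variable game: let $Y$ be the clauses with labels ranging over the (at most) $7$ satisfying local assignments to a clause's variables, let $X$ be the variables with labels in $\{0,1\}$, let $E$ connect each clause to its three variables, and let $\Pi_{(x,y)}$ project a local assignment of $y$ to the value it gives $x$. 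If $\phi$ is satisfiable, the labelling induced by a satisfying assignment satisfies every constraint, so $OPT(\Phi_0)=1$; if $\phi$ is $(1-\eps_0)$-unsatisfiable, then for any labelling the induced Boolean assignment falsifies at least an $\eps_0$-fraction of the clauses, and each falsified clause contributes at least one violated edge, so $OPT(\Phi_0)\le 1-\eps_0/3$. To match the symmetric-alphabet formulation used above, take $\Sigma$ large enough to contain both label sets and extend the $\Pi_e$ arbitrarily on the extra symbols; this does not affect the analysis. Thus $\textsc{GapLabel-Cover}_{1,1-\delta}(\Sigma_0)$ is NP-hard for the absolute constants $\delta=\eps_0/3$ and $|\Sigma_0|=O(1)$.

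\textbf{Step 2: amplify the gap.} Given the target soundness $\gamma>0$, apply Raz's parallel repetition theorem to $\Phi_0$. The $k$-fold repeated instance $\Phi_0^{\otimes k}$ has vertex set $X^k \cup Y^k$, alphabet $\Sigma_0^k$, edges given by $k$-tuples of edges of $\Phi_0$, and coordinatewise constraints. It satisfies $OPT(\Phi_0^{\otimes k})=1$ whenever $OPT(\Phi_0)=1$, and $OPT(\Phi_0^{\otimes k})\le c^{\,k}$ for some constant $c=c(\delta,|\Sigma_0|)<1$ whenever $OPT(\Phi_0)\le 1-\delta$ (Raz's bound, or the sharper bounds of Holenstein and of Rao for projection games). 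Choosing $k=\ceil{\ln(1/\gamma)/\ln(1/c)}$---a constant, since $\delta$ and $|\Sigma_0|$ are absolute constants---gives $OPT(\Phi_0^{\otimes k})<\gamma$, an alphabet of constant size $|\Sigma|=|\Sigma_0|^{k}$ depending only on $\gamma$, and a reduction running in time $|\Phi_0|^{O(k)}=\mathrm{poly}$. Composing with Step 1 yields NP-hardness of $\textsc{GapLabel-Cover}_{1,\gamma}(\Sigma)$.

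The only genuinely hard ingredient is the parallel repetition theorem: the value of $\Phi_0^{\otimes k}$ is \emph{not} simply $(1-\delta)^k$---parallel repetition can fail to drive the error down at the naive rate---and Raz's proof, together with its later simplifications, is the technical crux. Everything else is routine: the PCP theorem is invoked as a black box, and the clause--variable encoding together with the choice of $k$ involve only elementary bookkeeping.
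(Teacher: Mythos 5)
Your overall route is exactly the one the paper relies on: the paper does not prove this theorem at all, but cites it as a known result, pointing to the Arora--Lund reduction from \textsc{3-SAT} together with Raz's Parallel Repetition Theorem, which is precisely your Step 1 plus Step 2 (PCP theorem, clause--variable label cover with an absolute constant gap, then constant-round parallel repetition to push soundness below $\gamma$ with alphabet $\Sigma_0^k$ of size depending only on $\gamma$). So in substance you have reconstructed the standard derivation behind the citation, and the quantitative bookkeeping (soundness $1-\eps_0/3$ for the clause--variable game, $k=\ceil{\ln(1/\gamma)/\ln(1/c)}$ repetitions, polynomial blow-up since $k$ is a constant) is correct.

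Two details deserve tightening. First, the paper's convention is that $\Pi_{(x,y)}$ is applied to the $X$-side label, i.e.\ the constraint is $\Pi_{(x,y)}(\lambda(x)) = \lambda(y)$; with your assignment of variables to $X$ and clauses to $Y$, your projections go the wrong way, so you should swap the roles (clauses on the $X$ side, variables on the $Y$ side) to match the stated format --- cosmetic, but it matters for the definition to parse. Second, the claim that you may ``extend the $\Pi_e$ arbitrarily on the extra symbols'' of the padded common alphabet and that ``this does not affect the analysis'' is false as stated: if, say, every $\Pi_e$ maps some illegal symbol $\sigma$ to the same illegal symbol $\tau$, then the labelling assigning $\sigma$ to every $X$-vertex and $\tau$ to every $Y$-vertex satisfies all constraints, destroying soundness. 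The fix is easy but must be said: extend each $\Pi_e$ so that illegal $X$-labels map to the image of some fixed legal label (equivalently, so that the range of $\Pi_e$ stays inside the legal $Y$-labels); then any labelling using illegal symbols satisfies no more constraints than a corresponding legal labelling, and the soundness argument via the induced Boolean assignment goes through unchanged.
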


Arora and Lund \cite{arora:1996} describe a reduction from \textsc{3-SAT} which, together with Raz's Parallel Repetition Theorem \cite{doi:10.1137/S0097539795280895}, implies that \cref{lab-cov-hardness} remains true even on instances where the underlying graph is regular. Moreover, we may take the degree to be at least any arbitrarily large constant by creating many copies of our graph $X \cup Y$ and adding a copy of the constraint $\Pi_{(x,y)}$ between every copy of $x$ and every copy of $y$. Note that in such an instance $|X| = |Y|$.
We use a gap-preserving reduction from label-cover to prove \cref{main-theorem}.

\subsection{Construction}

Our reduction maps a label cover instance $\Phi = (X, Y, E, \Sigma, \Pi)$, where the base graph has degree $d$, to a token swapping instance $K(\Phi) = (G, T, f_1, f_2)$ as follows.

We construct $G$ by building a gadget $\textbf{Gad}(v)$ for each $v \in X \cup Y$, as follows. We begin building $\textbf{Gad}(v)$ with a base vertex $\textbf{bas}(v)$. We add $d$ additional vertices adjacent to $\textbf{bas}(v)$, which we call \emph{assignment vertices}. To each assignment vertex in $\textbf{Gad}(v)$ we identify a unique edge $(v,w) \in E$ (where $E$ is the label cover edge set) incident to $v$; 
we call this assignment vertex $\textbf{asg}(v,w)$. We then create $|\Sigma|$ paths, each with $\textbf{bas}(v)$ as an endpoint. We call these paths \emph{label paths}. If $v \in X$, then each label path in $\textbf{Gad}(v)$ contains $d-1$ edges; if $v \in Y$, then each label path in $\textbf{Gad}(v)$ contains $2d-1$ edges. To each label path in $\textbf{Gad}(v)$ we associate a unique $\sigma \in \Sigma$, and denote this label path by $\textbf{lab}(v, \sigma)$.

If $v \in X$, then we call $\textbf{Gad}(v)$ a \emph{left gadget}; otherwise, we call $\textbf{Gad}(v)$ a \emph{right gadget}. We call an assignment vertex a left (resp. right) assignment vertex if it is in a left (resp. right) gadget. We denote the set of left gadgets by $\textbf{L}$ and the set of right gadgets by $\textbf{R}$.

Whenever it is the case that the label pair $(\sigma_x, \sigma_y)$ satisfies the constraint $\Pi_{(x,y)}$, we add a path of $d$ edges connecting the endpoint of $\textbf{lab}(x, \sigma_x)$ opposite $\textbf{bas}(x)$ with the endpoint of $\textbf{lab}(y, \sigma_y)$ opposite $\textbf{bas}(y)$. We call such a path a \emph{satisfaction path}, and denote it by $\textbf{sat}(x, \sigma_x, y, \sigma_y)$.

\begin{figure}
    \centerline{\includegraphics[scale=.45]{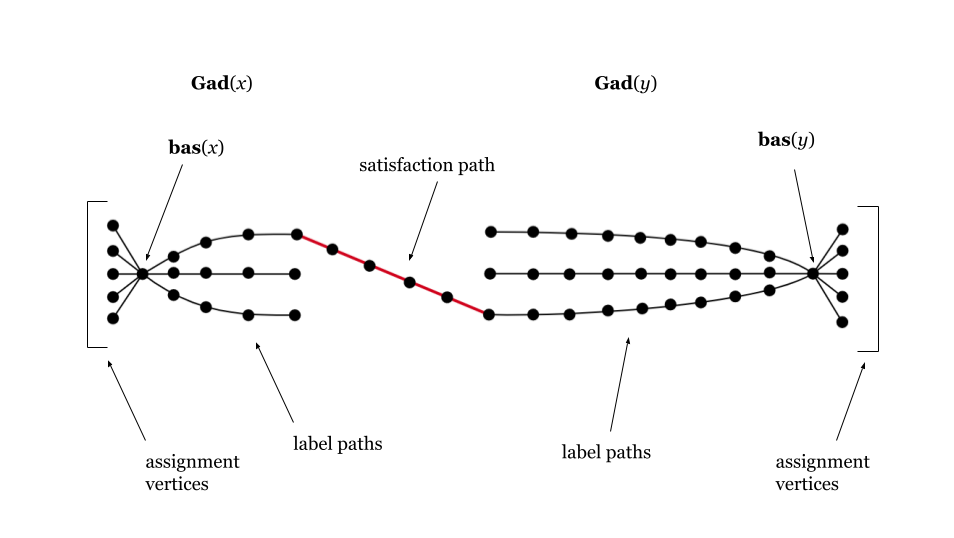}}
    \caption{A left and right gadget, joined by a satisfaction path; $d = 5$, $|\Sigma| = 3$. 
    }
    \label{fig:left-right-gadget}
\end{figure}

This gives us our construction for $G$. An illustration of a left and right gadget connected by a satisfaction path is given in \ref{fig:left-right-gadget}. We construct $f_1$ and $f_2$ as follows. For each $(x,y) \in E$, the tokens on $\textbf{asg}(x,y)$ and $\textbf{asg}(y,x)$ wish to swap positions with each other. That is, $f_2(f_1^{-1}(\textbf{asg}(x,y))) = \textbf{asg}(y,x)$, and $f_2(f_1^{-1}(\textbf{asg}(y,x))) = \textbf{asg}(x,y)$. 
We will refer to such a token as an \emph{assignment token}. We will call an assignment token $t$ where $f_1(t)$ is a left (resp. right) assignment vertex a left (resp. right) assignment token. For $t \in T$ that are not assignment tokens, $f_1(t) = f_2(t)$. That is, $t$ does not wish to move from its starting position.

The intuition behind the reduction is as follows: if there is a labelling $\lambda$ satisfying every constraint in $\Pi$, then all the assignment tokens that start in $\textbf{Gad}(v)$ can move to their destination along the single label path associated with $\lambda(v)$. Then, any assignment token seeking to \emph{enter} $\textbf{Gad}(v)$ can travel along this path, and will in the process swap with many of the assignment tokens leaving $\textbf{Gad}(v)$, bringing them closer to their destination. If, on the other hand, no good labelling exists, then having so many efficient swaps becomes impossible.

Now for the formal argument: we obtain our inapproximability lower bound via the following lemma.
\begin{lemma} \label{two-part-lemma}
  Let $\Phi = (X, Y, \Sigma, \Pi)$ be a label-cover instance whose graph is $d$-regular. The following two claims hold:
  \begin{itemize}
      \item  If $OPT(\Phi) = 1$, then $OPT(K(\Phi)) \leq \left( \frac{13}{4}d^2 + d \right) |X \cup Y |$.
      \item If $OPT(\Phi) < \gamma$, then $OPT(K(\Phi)) > \left( \frac{7 - \gamma}{2} d^2 - 5d\right)|X \cup Y |$.
  \end{itemize}
\end{lemma}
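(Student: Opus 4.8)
I would prove the two bullets separately: the first (the ``completeness'' direction) by exhibiting an explicit swap sequence built from a satisfying labelling of $\Phi$, and the second (the ``soundness'' direction) by a lower-bound argument showing that any short swap sequence for $K(\Phi)$ would read off a labelling satisfying many constraints. Throughout I would exploit the elementary identity that, for \emph{any} swap sequence, the number of swaps equals $\tfrac12\sum_{t\in T}\mathrm{trav}(t)$, where $\mathrm{trav}(t)$ is the total distance token $t$ moves during the sequence (each swap increments exactly two tokens' travel by $1$), together with $\mathrm{trav}(t)\ge d_G(f_1(t),f_2(t))$.

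\textbf{Completeness.} Given $\lambda$ with $OPT(\Phi)=1$, declare $\textbf{lab}(v,\lambda(v))$ to be the \emph{active} label path of $v$, with far endpoint $e_v$. Since $\Pi_{(x,y)}(\lambda(x))=\lambda(y)$ for every edge $(x,y)$, the satisfaction path $\textbf{sat}(x,\lambda(x),y,\lambda(y))$ exists and joins $e_x$ to $e_y$. For each edge $(x,y)$ I route $t_{x\to y}$ along $\textbf{asg}(x,y)\to\textbf{bas}(x)\to_{\textbf{lab}(x,\lambda(x))}e_x\to_{\textbf{sat}}e_y\to_{\textbf{lab}(y,\lambda(y))}\textbf{bas}(y)\to\textbf{asg}(y,x)$, a route of $1+(d-1)+d+(2d-1)+1=4d$ edges, and $t_{y\to x}$ along the reverse. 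The swaps are scheduled in a pipelined fashion along each active label path and satisfaction path, so that, as far as possible, every edge-traversal of an ``ascending'' assignment token is realized by a direct swap with a ``descending'' assignment token (a swap that advances both), and the fixed tokens sitting on the active label paths are pushed back and forth only during the pipeline ``start-up'' and ``shut-down'' at the path ends, keeping their residual travel small. Using $\#\mathrm{swaps}=\tfrac12\sum_t\mathrm{trav}(t)$, the $d|X\cup Y|$ assignment tokens contribute exactly $\tfrac12\cdot 4d\cdot d|X\cup Y| = 2d^2|X\cup Y|$, and a careful count of the leftover fixed-token displacement along the (length $d-1$ on the left, $2d-1$ on the right) active label paths, together with the $O(d)$-per-gadget shuffling at the base and assignment vertices and at the junction vertices $e_v$, adds at most $\left(\tfrac54 d^2 + d\right)|X\cup Y|$, for a total of $\left(\tfrac{13}{4}d^2 + d\right)|X\cup Y|$. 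The detailed bookkeeping for the pipelining and the junction vertices is the fiddly part of this direction.

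\textbf{Soundness.} Fix an optimal swap sequence for $K(\Phi)$ of length $k=OPT(K(\Phi))$. A routine argument shows $d_G(\textbf{asg}(x,y),\textbf{asg}(y,x))=4d$ (a shortest route must cross one label path of $x$, one satisfaction path joining $x$ and $y$, and one label path of $y$), so $k=\tfrac12\sum_t\mathrm{trav}(t)\ge 2d^2|X\cup Y|$ for free; the improvement must come from extra travel. I would use the dichotomy: either a constant fraction of assignment tokens travel strictly more than $4d$ (each such detour contributes a definite amount of extra travel), or almost all assignment tokens travel exactly $4d$, in which case each is \emph{committed} --- $t_{x\to y}$ ascends some label path $\textbf{lab}(x,\sigma)$ and, since it must then leave directly to $y$ via a satisfaction path, descends $\textbf{lab}(y,\Pi_{(x,y)}(\sigma))$. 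Taking for each $v$ the label used by a plurality of the $2d$ assignment tokens incident to $\textbf{Gad}(v)$ defines a labelling $\lambda$ of $\Phi$; since $OPT(\Phi)<\gamma$, fewer than a $\gamma$ fraction of constraints are satisfied, which forces a $(1-\gamma-o(1))$ fraction of the committed tokens to disagree with $\lambda$ at one of their two endpoints. Each disagreement produces an imbalance between the number of tokens ascending and the number descending some active label path, and the key quantitative lemma is that each unit of such imbalance on a label path of length $\ell$ forces roughly $\ell$ additional units of travel, because the excess tokens cannot be ``matched'' by oppositely-directed tokens and must push the fixed tokens along the entire path. Summing the $\ell\in\{d-1,2d-1\}$ contributions over all gadgets, bounding the imbalance from below using the label-cover gap, and absorbing the $O(d|X\cup Y|)$ lower-order terms into the $-5d$ slack yields $k>\left(\tfrac{7-\gamma}{2}d^2-5d\right)|X\cup Y|$.

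\textbf{Main obstacle.} The soundness direction is the crux, and within it the heart is the quantitative ``imbalance $\Rightarrow$ extra travel'' lemma: one must turn the qualitative fact ``an inefficient swap sequence cannot route all tokens along shortest paths while leaving all fixed tokens in place'' into a bound with the precise coefficient $\tfrac72$, which is what must strictly exceed the completeness coefficient $\tfrac{13}{4}$. This requires simultaneously controlling (i) assignment tokens that deviate from shortest paths, (ii) the possibility that a token is seen to commit to different labels from its two endpoint gadgets, and (iii) the accounting at the junction vertices $e_v$ where $d$ satisfaction paths meet an active label path; keeping all the $O(d)$-per-gadget error terms within the stated $\pm 5d$, $+d$ slack is itself delicate. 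The completeness direction is conceptually straightforward but its fixed-token/pipeline bookkeeping is also nontrivial.
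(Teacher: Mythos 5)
Your overall architecture coincides with the paper's: completeness by an explicit pipelined routing of every assignment token along its $4d$-edge route (assignment vertex, active label path, satisfaction path, active label path, assignment vertex), and soundness by a detour/non-detour dichotomy, decoding a labelling from the label paths that non-detour tokens commit to, and invoking the label-cover gap. The problem is that the proposal stops exactly where the work is, and in the soundness half the quantitative mechanism you name is not the one that yields the bound. Your key lemma --- ``each unit of imbalance on a label path of length $\ell$ forces roughly $\ell$ additional units of travel, because the excess tokens must push the fixed tokens along the entire path'' --- is unjustified as stated: a fixed token displaced by a bubbling token moves only one step, and the travel identity by itself gives nothing beyond the trivial $2d^2|X\cup Y|$ bound. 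The paper instead counts swaps located inside satisfaction paths ($B_{sat}$) and inside gadgets ($B_{Gad}$) and subtracts the maximum possible number of ``efficient'' (mutually advancing) swaps. Two facts carry the coefficient $\tfrac{7-\gamma}{2}$: first, each satisfaction path is used by only the two paired assignment tokens, which can share at most one swap, so $B_{sat}\gtrsim d^2|X\cup Y|$ --- this contributes $\tfrac{d^2}{2}|X\cup Y|$ beyond your $2d^2$ baseline even when $\Phi$ is satisfiable, and your imbalance story does not account for it; second, the number of efficient gadget swaps in $\textbf{Gad}(v)$ is at most $d$ times the dominant out/in-flow of a single label path (\cref{claim:eff-bound}), and \cref{claim:php} bounds the sum of these dominant flows by $\tfrac{1+\gamma}{2}d|X\cup Y|$ plus detour terms, using a decoding that is out-dominant on $X$ and in-dominant on $Y$, not your symmetric plurality over all $2d$ incident tokens (the asymmetry is what makes the pigeonhole step in \cref{claim:php} go through with the right constant).

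Two further gaps: your dichotomy needs the structural fact that any assignment token whose route has more than $4d$ edges must traverse at least three satisfaction paths (\cref{observation:subpaths}), hence incurs about $2d$ extra swaps there; ``strictly more than $4d$'' alone gives no usable surplus. Moreover, detour tokens also escape the commitment/decoding argument, so their count must be threaded through both bounds with matching signs (the $\pm\tfrac{d}{2}(det_L+det_R)$ terms in inequalities \ref{eqn:sat-bound} and \ref{eqn:gad-bound}); your sketch does not show how the detour fraction trades off against the gap argument, which is precisely where the coefficients could fail. On the completeness side the construction is right, but the claimed $\bigl(\tfrac54 d^2+d\bigr)|X\cup Y|$ overhead for fixed-token and start-up costs is asserted rather than derived; the paper verifies it by counting each of the four stages explicitly (and needs the intermediate claims, such as \cref{claim:4(a)i} and \cref{claim:4(b)i}, to certify that the pipelined swaps are actually available when scheduled).
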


Given the label-cover instance $\Phi$, the token swapping instance $K(\Phi)$ can be computed in polynomial time. Because we can take $d$ to be an arbitrarily large constant and $\gamma$ an arbitrarily small constant, the hardness result of \cref{main-theorem} follows from \cref{two-part-lemma}. The next two sections are devoted to proving the first and second parts of \cref{two-part-lemma}, respectively.

\subsection{Completeness}
Let $\Phi$ be a label-cover instance as defined in \cref{two-part-lemma} such that $OPT(\Phi) = 1$. Let $\lambda: X \cup Y \rightarrow \Sigma$ be an optimal labelling. We prove the first half of \cref{two-part-lemma} by providing a swap sequence for $K(\Phi)$ using $(\frac{13}{4}d^2 + d)|X \cup Y|$ swaps.

We start by assigning an arbitrary order to the vertices in $X$, and an arbitrary order to the vertices in $Y$. We denote the ordered vertices $x_1, ..., x_{|X|} \in X$ and $y_1, ..., y_{|Y|} \in Y$. For each $x \in X$, the order on $Y$ induces an order on the assignment vertices of $\textbf{Gad}(x)$, given by $\textbf{asg}(x, y_{i_1}), \textbf{asg}(x, y_{i_2}), ..., \textbf{asg}(x, y_{i_d})$, where $i_1 < i_2 < ... < i_d$. We denote this order by $\textbf{asg}(x)_1, \textbf{asg}(x)_2, ..., \textbf{asg}(x)_d$. Similarly, for $y \in Y$, the order on $X$ induces an order on the assignment vertices of $\textbf{Gad}(y)$, which we denote by $\textbf{asg}(y)_1, \textbf{asg}(y)_2, ..., \textbf{asg}(y)_d$.

Our swap sequence proceeds in four stages, which we sketch here before a detailed description. In the first stage, we move the assignment tokens that begin on any left gadget $\textbf{Gad}(x)$ onto the label path associated with $\lambda(x)$, that is $\textbf{lab}(x, \lambda(x))$. This pushes the tokens which began on $\lambda(x)$ onto the assignment vertices in $\textbf{Gad}(x)$. The second phase rearranges the tokens now on $\textbf{Gad}(x)$'s assignment vertices so that at the end of Stage 4 they will end up on their target vertices (which were also their starting vertices). The third stage moves the assignment tokens that begin on any right gadget $\textbf{Gad}(y)$ onto $\textbf{lab}(x, \lambda(x))$. The fourth and final stage iterates over every $y \in Y$. A given iteration has two phases. In the first phase, we move each assignment token whose target vertex is in $\textbf{Gad}(y)$ to its target vertex, in the process pushing the assignment tokens in $\textbf{Gad}(y)$, which are now on $\textbf{lab}(y, \lambda(y))$, closer to their destination. In the second phase, we move each assignment token which started in $\textbf{Gad}(y)$ to its destination. The non-assignment tokens will finish the swap sequence on their target (that is, their starting) vertices.

Here is a formal description: 
\begin{enumerate}
    \item For $\textbf{Gad}(x) \in \textbf{L}$: 
        \begin{enumerate}
            \item For $k = 1, ..., d$: 
                \begin{enumerate}
                    \item Swap along the edge $(\textbf{asg}(x)_k, \textbf{bas}(x))$.
                    \item Bubble $f_1^{-1}(\textbf{asg}(x)_k)$ along the first $d-k$ edges of $\textbf{lab}(x, \lambda(x))$. 
                \end{enumerate}
        \end{enumerate}
    \item For $\textbf{Gad}(x) \in \textbf{L}$: 
        \begin{enumerate}
            \item For $k = 1, ..., \floor{d/2}$: 
                \begin{enumerate}
                    \item Swap along the edge $(\textbf{asg}(x)_k, \textbf{bas}(x))$.
                    \item Swap along the edge $(\textbf{asg}(x)_{d-k}, \textbf{bas}(x))$.
                    \item Swap along the edge $(\textbf{asg}(x)_k, \textbf{bas}(x))$.
                \end{enumerate}
        \end{enumerate}
    \item For $\textbf{Gad}(y) \in \textbf{R}$: 
        \begin{enumerate}
            \item For $k = 1, ..., d$:
            \begin{enumerate}
                \item Swap along the edge $(\textbf{bas}(y), \textbf{asg}(y)_k)$.
                \item Bubble $f_1^{-1}(\textbf{asg}(y)_k)$ along the first $d-k$ edges of $\textbf{lab}(y, \lambda(y))$.
            \end{enumerate}
        \end{enumerate}
    \item For $\ell = 1, ..., |Y|$:
    \begin{enumerate}
        \item For $k = 1, ..., d$: 
        \begin{enumerate}
            \item Let $\tilde{t}$ be the token with destination $\textbf{asg}(y_{\ell})_{d-k+1}$. Let $\tilde{x} \in X$ be chosen so that $\tilde{t}$ starts in $\textbf{Gad}(\tilde{x})$ (in particular, it will have started on $\textbf{asg}(\tilde{x},y_{\ell})$). 
            We will claim (see: \cref{claim:4(a)i}) that $\tilde{t}$ currently lies on the endpoint of $\textbf{lab}(\tilde{x}, \lambda(\tilde{x}))$ opposite $\textbf{bas}(\tilde{x})$. 
            Bubble $\tilde{t}$ along the $d$ edges of the satisfaction path it is incident to, which is $\textbf{sat}(\tilde{x}, \lambda(x), y_{\ell}, \lambda(y_{\ell}))$.
            \item Bubble $\tilde{t}$ along the $2d-1$ edges of the right label path it is now incident to, which is $\textbf{lab}(y, \lambda(y))$.
            \item Swap along the edge which brings $\tilde{t}$ to its destination, which is $(\textbf{bas}(y_{\ell}), f_2^{-1}(\textbf{asg}(\tilde{x},y_{\ell}))$.
        \end{enumerate}
        \item For $k = 1, ..., d$:
        \begin{enumerate}
            \item We will claim (see: \cref{claim:4(b)i}) that $f_1^{-1}(\textbf{asg}(y_{\ell})_{k})$ is of distance $k-1$ from the endpoint of $\textbf{lab}(y_{\ell}, \lambda(y_{\ell}))$ opposite $\textbf{bas}(y_{\ell})$. 
            Bubble $f_1^{-1}(\textbf{asg}(y_{\ell})_{k})$ along the $k-1$ edges to this endpoint.
            \item Let $\Bar{x} \in X$ be the vertex such that $f_2^{-1}(\textbf{asg}(\Bar{x},y_{\ell})) = f_1^{-1}(\textbf{asg}(y_{\ell})_{k})$. 
            Bubble $f_1^{-1}(\textbf{asg}(y_{\ell})_{k})$ along the $d$ edges of $\textbf{sat}(\Bar{x}, \lambda(\Bar{x}), y_{\ell}, \lambda(y_{\ell}))$.
            \item Bubble $f_1^{-1}(\textbf{asg}(y_{\ell})_{k})$ along the $d-1$ edges of $\textbf{lab}(\Bar{x}, \lambda(\Bar{x}))$.
            \item Swap along the edge $(\textbf{bas}(\Bar{x}), \textbf{asg}(\Bar{x}, y_{\ell}))$. 
        \end{enumerate}
    \end{enumerate}
\end{enumerate}

\begin{claim} \label{claim:4(a)i}
    Consider the start of the $\ell$-th iteration, $k$-th sub-iteration of Stage 4, Phase (a). 
    Let $\tilde{t}$ and $\tilde{x}$ be as defined in this sub-iteration. At this point in the swap sequence, $\tilde{t}$ lies on the endpoint of $\textbf{lab}(\tilde{x}, \lambda(x))$ opposite $\textbf{bas}(\tilde{x})$.
\end{claim}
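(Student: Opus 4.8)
The plan is to track the locations of the assignment tokens through Stages 1--3 and the first $\ell-1$ iterations of Stage 4, and show by induction on $\ell$ (and within that, on $k$) that $\tilde{t}$ is where the claim says it is. First I would establish the effect of Stage 1: for each left gadget $\textbf{Gad}(x)$, iteration $k$ of Stage 1(a) pushes $f_1^{-1}(\textbf{asg}(x)_k)$ onto $\textbf{bas}(x)$ and then bubbles it $d-k$ edges down $\textbf{lab}(x,\lambda(x))$, so after Stage 1 the token originating on $\textbf{asg}(x)_k$ sits at distance $d-k$ from $\textbf{bas}(x)$ along that label path; in particular the $k=1$ token sits at the far endpoint (distance $d-1$), and the $d$ tokens exactly fill the $d-1$ internal vertices of the path plus $\textbf{bas}(x)$ — wait, more precisely they occupy distances $0,1,\dots,d-1$, i.e.\ all $d$ vertices of the label path including $\textbf{bas}(x)$. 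The non-assignment tokens that started on $\textbf{lab}(x,\lambda(x))$ get shifted up onto the assignment vertices. Stage 2 is a permutation internal to $\textbf{Gad}(x)$'s assignment vertices and $\textbf{bas}(x)$ and does not touch the label paths, so it leaves the assignment tokens on the label path untouched (I would note that the three-swap block in Stage 2(a) moves whatever is on $\textbf{bas}(x)$ out of the way and back, net effect a transposition among assignment vertices). Stage 3 does the analogous thing for right gadgets, and crucially operates only within right gadgets, so it does not disturb any left label path.

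Next I would handle Stage 4. The key point: in iteration $\ell$, the token $\tilde{t}$ with destination $\textbf{asg}(y_\ell)_{d-k+1}$ is, by the construction of $f_1,f_2$, the token that started on $\textbf{asg}(\tilde{x}, y_\ell)$ for the appropriate $\tilde{x}\in X$, namely the $\tilde{x}$ such that $\textbf{asg}(\tilde x)_{\,?}$ corresponds to $y_\ell$ — I need to match up the induced orderings: $\textbf{asg}(y_\ell)_{d-k+1}$ corresponds under the ordering on $X$ to some $\tilde x = x_{i}$, and symmetrically $\textbf{asg}(\tilde x, y_\ell)$ is some $\textbf{asg}(\tilde x)_j$. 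From Stage 1 we know this token sits at distance $d-j$ from $\textbf{bas}(\tilde x)$ along $\textbf{lab}(\tilde x, \lambda(\tilde x))$ at the end of Stage 3. I then need to argue it has not moved during iterations $1,\dots,\ell-1$ of Stage 4. This is where I would be careful: iteration $\ell'$ of Stage 4 Phase (a) bubbles tokens \emph{out of} left gadgets across satisfaction paths into $\textbf{Gad}(y_{\ell'})$, and Phase (b) bubbles tokens \emph{into} left gadgets; both touch left label paths. So I must check that iteration $\ell'$ only touches label paths $\textbf{lab}(\tilde x', \lambda(\tilde x'))$ for the specific $\tilde x'$ paired with $y_{\ell'}$ along satisfaction paths, and that when it does, it completely clears the relevant token off the path (Phase (a)) or deposits a token that ends up back on an assignment vertex (Phase (b), last step swaps onto $\textbf{asg}(\bar x, y_{\ell'})$). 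I also need the induced-ordering bookkeeping to confirm: after iteration $\ell'$ finishes, on each label path $\textbf{lab}(x,\lambda(x))$ the tokens still sitting there are exactly those destined for $\textbf{asg}(y_\ell)$ with $\ell > \ell'$, and they occupy the far end of the path in the correct order, so that when iteration $\ell$ comes, the first one to be processed (sub-iteration $k=1$, destination $\textbf{asg}(y_\ell)_d$) is at the far endpoint. Actually the cleanest formulation: maintain the invariant that at the start of iteration $\ell$, the assignment tokens remaining on $\textbf{lab}(x,\lambda(x))$ (for each $x$ adjacent to some $y_{\ell''}$, $\ell''\ge \ell$) occupy the $d - (\text{number already processed})$ vertices farthest from $\textbf{bas}(x)$, in order of destination index. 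A short sub-induction on $k$ within iteration $\ell$ (using that Phase (a) sub-iteration $k$ removes exactly the token currently at the far endpoint, then bubbles it across the satisfaction path and down the right label path and onto its destination) then gives the claim: at the start of sub-iteration $k$, the token destined for $\textbf{asg}(y_\ell)_{d-k+1}$ has just become the one at the far endpoint of its (left) label path.

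The main obstacle I expect is purely the combinatorial bookkeeping of the two induced orderings — matching $\textbf{asg}(y_\ell)_{d-k+1}$ on the right side to the correct $\textbf{asg}(\tilde x, y_\ell) = \textbf{asg}(\tilde x)_j$ on the left side, and then verifying that Stage 1's ``bubble $d-j$ edges'' leaves that token at exactly distance $d-j$, which after the prior iterations of Stage 4 have peeled off tokens becomes distance $d-1$ at the moment it is processed. There is no real analytic difficulty; the risk is an off-by-one in how the orderings interact with the ``first $d-k$ edges'' convention and with which of the $d$ sub-iterations of a previous $\ell'$ consumed which token. I would organize the write-up around a single clean invariant (the ``farthest occupied vertices, in destination order'' statement above) proved by induction on $(\ell,k)$ in lexicographic order, from which \cref{claim:4(a)i} is immediate, and relegate the ordering identifications to a couple of lines using the definitions of $\textbf{asg}(x)_k$ and $\textbf{asg}(y)_k$.
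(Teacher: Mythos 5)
Your proposal is correct and follows essentially the same route as the paper's proof: it tracks where Stage 1 deposits each assignment token on its label path (so the token destined for $\textbf{Gad}(y_\ell)$ starts $j-1$ vertices from the far endpoint, where $j$ is the rank of $\ell$ among $\tilde{x}$'s neighbors), notes that Stages 2--3 leave left label paths undisturbed, and credits the Step 4(b)iii bubblings in the earlier relevant iterations of Stage 4 with advancing that token to the endpoint opposite $\textbf{bas}(\tilde{x})$. The paper just performs this count directly for the single token $\tilde{t}$ rather than packaging it as your global ``farthest occupied vertices, in destination order'' invariant, but the content is the same.
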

\begin{proof}
    Let $j$ be the integer so that $\ell$ is the $j$-th smallest integer for which there is a satisfaction path from $\textbf{Gad}(\tilde{x})$ to $\textbf{Gad}(y_{\ell})$. By our construction, this means the start vertex of $\tilde{t}$ is $\textbf{asg}(\tilde{x})_j$. Therefore, Step 1(a)ii moves $\tilde{t}$ along the first $d-j$ edges of $\textbf{lab}(\tilde{x}, \lambda(x))$ to the endpoint opposite $\textbf{bas}(\tilde{x})$. This means that, at the beginning of Step 4, $\tilde{t}$ is on $\textbf{lab}(\tilde{x}, \lambda(x))$, and is of distance $j-1$ from the endpoint opposite $\textbf{bas}(\tilde{x})$.

    There are $j-1$ iterations of Step 4 where Step 4(a)i considers a token from $\tilde{x}$. During each of these iterations, there is one sub-iteration of 4(b) where Step 4(b)iii bubbles a token from $\textbf{Gad}(y_{\ell})$ along $\textbf{lab}(\tilde{x}, \lambda(x))$ towards $\textbf{bas}(\tilde{x})$. In the process, this token will swap with $\tilde{t}$, bringing it one spot closer to the endpoint of $\textbf{lab}(\tilde{x}, \lambda(x))$. At the start of the iteration considering of 4 where $\tilde{t}$ is considered in 4(a), $\tilde{t}$ will be on the endpoint of $\textbf{lab}(\tilde{x}, \lambda(x))$ opposite $\textbf{bas}(\tilde{x})$.
\end{proof}

\begin{claim} \label{claim:4(b)i}
    Consider the start of the $\ell$-th iteration, $k$-th sub-iteration of Stage 4, Phase (b). At this point in the swap sequence, $f_1^{-1}(\textbf{asg}(y_{\ell})_{k})$ is of distance $k-1$ from the endpoint of $\textbf{lab}(y_{\ell}, \lambda(y_{\ell}))$ opposite $\textbf{bas}(y_{\ell})$.
\end{claim}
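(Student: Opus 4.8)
The plan is to track the single token $t_k := f_1^{-1}(\textbf{asg}(y_\ell)_k)$ explicitly through the swap sequence. Write $y := y_\ell$ and name the vertices of $\textbf{lab}(y,\lambda(y))$ in order as $p_0 = \textbf{bas}(y), p_1, \dots, p_{2d-1}$, so that the endpoint ``opposite $\textbf{bas}(y)$'' is $p_{2d-1}$; then the claim asserts precisely that $t_k$ sits on $p_{2d-k}$ at the start of the $k$-th sub-iteration of Phase~(b) of the $\ell$-th iteration of Stage~4. I would first locate $t_k$ just before that iteration. Stages~1 and~2 touch only left gadgets, so they leave $t_k$ on $\textbf{asg}(y)_k$. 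In Stage~3, the sub-iteration for index $k'$ in $\textbf{Gad}(y)$ swaps $f_1^{-1}(\textbf{asg}(y)_{k'})$ onto $\textbf{bas}(y) = p_0$ and then bubbles it onto $p_{d-k'}$; since $d-k'$ strictly decreases in $k'$, no two of these bubbling moves collide and no later sub-iteration disturbs $\textbf{asg}(y)_{k'}$ or $p_{d-k'}$, so after Stage~3 the token $t_k$ lies on $p_{d-k}$. Finally, Stage~3 on other right gadgets and each iteration $\ell' < \ell$ of Stage~4 touch only vertices of $\textbf{Gad}(y_{\ell'})$, of left gadgets, and of satisfaction paths incident to $\textbf{Gad}(y_{\ell'})$ --- in particular never a vertex of $\textbf{lab}(y,\lambda(y))$, even though Phase~(b) of an earlier iteration does bubble tokens along \emph{left} label paths --- so $t_k$ is still on $p_{d-k}$ entering the $\ell$-th iteration.

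Next I would analyze Phase~(a) of the $\ell$-th iteration, which runs $d$ sub-iterations. By \cref{claim:4(a)i}, in each of them the token $\tilde t$ begins at the far end of its left label path, so Step~4(a)i carries it across a satisfaction path onto $p_{2d-1}$ and Step~4(a)ii bubbles it along all $2d-1$ edges of $\textbf{lab}(y,\lambda(y))$ down to $p_0$, whereupon Step~4(a)iii removes it onto an assignment vertex without touching the interior of the path again. Consequently each such sub-iteration shifts every token then lying on $p_0, \dots, p_{2d-2}$ one step toward $p_{2d-1}$; since $t_k$ never reaches $p_{2d-1}$ before the final sub-iteration, the $d$ sub-iterations of Phase~(a) carry it from $p_{d-k}$ to $p_{d-k+d} = p_{2d-k}$.

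It remains to check that the first $k-1$ sub-iterations of Phase~(b) of the $\ell$-th iteration leave $t_k$ on $p_{2d-k}$; I would do this by induction on $k$, the inductive hypothesis being the claim itself for the smaller indices. For $j < k$, the hypothesis puts $t_j$ on $p_{2d-j}$ at the start of sub-iteration $j$, and then Steps~4(b)i--iv bubble $t_j$ from $p_{2d-j}$ up to $p_{2d-1}$ and then off the path onto a satisfaction path, so the only vertices of $\textbf{lab}(y,\lambda(y))$ whose occupants change during sub-iteration $j$ are $p_{2d-j}, \dots, p_{2d-1}$. Since $2d-k < 2d-j$, the vertex $p_{2d-k}$ is not among these, so $t_k$ is untouched and therefore still on $p_{2d-k}$ at the start of sub-iteration $k$. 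The base case $k=1$ is exactly the conclusion of the previous paragraph, namely that Phase~(a) ends with $t_1$ on $p_{2d-1}$.

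The step I expect to be the main obstacle is the bookkeeping underlying these assertions: one must check carefully that the bubblings in Stage~3 and in Phase~(a) displace the relevant tokens exactly as stated with no unanticipated collision, and --- the delicate point --- that no operation outside the $\ell$-th iteration ever modifies a vertex of $\textbf{lab}(y_\ell,\lambda(y_\ell))$. A further structural subtlety is that \cref{claim:4(a)i} and the present claim are mildly interdependent across iterations (\cref{claim:4(a)i} for iteration $\ell$ uses the present claim for iterations $\ell' < \ell$), so a fully rigorous treatment would establish both by a single induction on $\ell$; granting \cref{claim:4(a)i}, the argument above is otherwise routine.
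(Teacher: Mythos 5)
Your proof is correct and follows essentially the same route as the paper's: after Stage 3 the token $f_1^{-1}(\textbf{asg}(y_\ell)_k)$ sits at distance $d-k$ from $\textbf{bas}(y_\ell)$ on $\textbf{lab}(y_\ell,\lambda(y_\ell))$, and each of the $d$ bubblings in Phase (a) advances it one step toward the opposite endpoint, giving distance $k-1$. You are in fact somewhat more thorough than the paper, which only establishes the position at the start of Phase (b) and leaves implicit both the non-interference of the first $k-1$ sub-iterations of Phase (b) and the joint induction on $\ell$ with \cref{claim:4(a)i} that you correctly flag.
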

\begin{proof}
    In Step 3(a)ii, the token starting on $\textbf{asg}(y_{\ell})_k$ was bubbled along the first $d-k$ edges of $\textbf{lab}(y_{\ell},\lambda(y_{\ell}))$. As such, at the start of Step 4, $f_1^{-1}(\textbf{asg}(y_{\ell})_{k})$ is of distance $d+k$ from the endpoint of $\textbf{lab}(y_{\ell}, \lambda(y_{\ell}))$ opposite $\textbf{bas}(y_{\ell})$. In Step 4(a), all $d$ assignment tokens with destination in $\textbf{Gad}(y_{\ell})$ were bubbled along $\textbf{lab}(y_{\ell},\lambda(y_{\ell}))$, each one swapping with $f_1^{-1}(\textbf{asg}(y_{\ell})_{k})$ once. At the start of 4(b), then, $f_1^{-1}(\textbf{asg}(y_{\ell})_{k})$ is of distance $k-1$ from the endpoint of $\textbf{lab}(y_{\ell}, \lambda(y_{\ell}))$ opposite $\textbf{bas}(y_{\ell})$.
\end{proof}

Now, we finish the proof for the completeness case. 
\begin{claim}
    The above swap sequence brings every token in $T$ to its target vertex.
\end{claim}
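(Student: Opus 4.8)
The plan is to verify that the four-stage swap sequence described above correctly transports every token to its target vertex, by tracking the trajectory of each token type through the stages. Since non-assignment tokens never participate in a swap that permanently displaces them --- each such token is only ever swapped with along a path that some assignment token bubbles across, and the bubbling operation returns every intermediate token to its original position --- it suffices to check that the non-assignment tokens return to their starting (= target) vertices and that each assignment token reaches the correct destination assignment vertex.

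First I would handle the left-gadget tokens and the tokens displaced from the label paths. In Stage 1, for each $x \in X$, the token on $\textbf{asg}(x)_k$ is moved onto $\textbf{bas}(x)$ and then bubbled $d-k$ steps along $\textbf{lab}(x,\lambda(x))$; I would check that after all $d$ iterations the $d$ assignment tokens occupy the last $d$ vertices of this length-$(d-1)$ path together with $\textbf{bas}(x)$ --- wait, more carefully, they occupy $\textbf{bas}(x)$ and the $d-1$ internal/endpoint vertices of the label path --- and that the $d-1$ tokens that originally sat on $\textbf{lab}(x,\lambda(x))$ have been pushed back onto the assignment vertices, with one remaining on $\textbf{bas}(x)$. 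Stage 2 is the parity-fixing shuffle: three swaps per value of $k \le \lfloor d/2 \rfloor$ implement the transposition of the tokens on $\textbf{asg}(x)_k$ and $\textbf{asg}(x)_{d-k}$ via $\textbf{bas}(x)$; I would confirm this reversal is exactly what is needed so that, after Stage 4 pushes tokens back off the label path in reverse order, the non-assignment tokens land on their original assignment vertices. Stage 3 mirrors Stage 1 on the right gadgets.

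Next I would analyze Stage 4, using Claims~\ref{claim:4(a)i} and~\ref{claim:4(b)i} as given. For a fixed $\ell$, Phase (a) iterates $k = 1,\dots,d$: by Claim~\ref{claim:4(a)i} the incoming token $\tilde t$ with destination $\textbf{asg}(y_\ell)_{d-k+1}$ sits at the far endpoint of $\textbf{lab}(\tilde x, \lambda(\tilde x))$, so bubbling it across $\textbf{sat}(\tilde x,\lambda(\tilde x),y_\ell,\lambda(y_\ell))$, then across $\textbf{lab}(y_\ell,\lambda(y_\ell))$, then one more edge, deposits it on $\textbf{asg}(y_\ell)_{d-k+1} = \textbf{asg}(\tilde x, y_\ell)$ --- its target. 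Simultaneously these $d$ bubblings each swap once with every token currently resident on $\textbf{lab}(y_\ell,\lambda(y_\ell))$, which (by Claim~\ref{claim:4(b)i}'s setup) repositions the right-gadget assignment tokens so that Phase (b) can then bubble $f_1^{-1}(\textbf{asg}(y_\ell)_k)$ out: $k-1$ edges to the endpoint, $d$ edges across the satisfaction path, $d-1$ edges along $\textbf{lab}(\bar x,\lambda(\bar x))$, and one final swap onto $\textbf{asg}(\bar x, y_\ell)$, which by the definition of $\bar x$ is exactly the target of this token. I would also check the crucial consistency point that the satisfaction path $\textbf{sat}(\tilde x,\lambda(\tilde x),y_\ell,\lambda(y_\ell))$ actually exists --- this is where $OPT(\Phi)=1$ is used: since $\lambda$ satisfies every constraint, in particular $\Pi_{(\tilde x, y_\ell)}(\lambda(\tilde x)) = \lambda(y_\ell)$, so the construction included this path.

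The main obstacle I anticipate is the bookkeeping of \emph{which} token sits where after the interleaved bubblings in Stage 4, and in particular verifying that every non-assignment token that gets swapped along a path in Stages 1--4 is restored: a bubbling of $t$ across $v_1,\dots,v_m$ leaves $v_2,\dots,v_m$ holding the tokens that were on $v_1,\dots,v_{m-1}$, so a token on the label path only stays put if it is bubbled over an even number of times or the net effect cancels. I would make this precise by arguing that each label path, satisfaction path, and base vertex is traversed in a balanced way --- every token pushed one direction in an early stage is pushed back in Stage 4 --- so I would set up an invariant for each gadget recording the multiset of (token, position) pairs at the start of each stage and check it is preserved, then conclude by noting the final configuration equals $f_2$ on assignment vertices and $f_1 = f_2$ everywhere else. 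The parity distinction between $X$ (label paths of length $d-1$) and $Y$ (length $2d-1$), and the Stage 2 reversal, are exactly the devices that make these invariants close up, so I would be careful to track the off-by-one counts there.
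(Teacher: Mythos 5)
Your overall strategy matches the paper's: fix a token, classify it by where it starts, and trace it stage by stage, invoking \cref{claim:4(a)i} and \cref{claim:4(b)i} for the Stage-4 bookkeeping; your treatment of the assignment tokens (including the observation that $OPT(\Phi)=1$ guarantees the needed satisfaction path exists) is essentially the paper's Cases 1 and 2. But there is a genuine gap where the real work lies: the non-assignment tokens. Your opening justification --- that bubbling ``returns every intermediate token to its original position,'' so it suffices to check the assignment tokens --- is false: bubbling a token across a path shifts every intermediate token one step backward, and in this construction the non-assignment tokens are not merely jostled and restored. For example, a token starting at distance $j$ from $\textbf{bas}(x)$ on $\textbf{lab}(x,\lambda(x))$ is pushed off the label path entirely during Stage 1 (it ends on the assignment vertex $\textbf{asg}(x)_{j+1}$), is relocated again by the Stage-2 reversal, and only returns to its start vertex because the swap 4(b)iv at exactly the right iteration of Stage 4 kicks it back to $\textbf{bas}(x)$ and the remaining 4(b)iii bubblings push it back out to distance $j$; tokens on the far half of a right label path are even temporarily pushed onto a satisfaction path by 4(a)ii and must be pushed back by 4(b)ii. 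None of this is a cancellation of bubblings over a stationary token, and the order-reversal of Stage 2 is needed precisely to make the return land on the correct vertex.

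You do flag this bookkeeping as ``the main obstacle,'' but you only promise an argument (``set up an invariant for each gadget recording the multiset of (token, position) pairs at the start of each stage and check it is preserved'') rather than give one, and that invariant is not true as stated: within a gadget the configuration is badly scrambled at the ends of Stages 1--3 and is only restored for non-assignment tokens during the single iteration of Stage 4 that visits it. The paper's proof consists exactly of carrying out this verification, via a six-case analysis (left/right assignment tokens; left label-path tokens; right label-path tokens at distance at most $d-1$ and at distance at least $d$ from the base; satisfaction-path tokens), with explicit position tracking in each case. To complete your proof you would need to supply that case analysis (or a correct invariant that survives the interleaving of Phases (a) and (b) in Stage 4), not just assert that the traversals balance out.
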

\begin{proof}
Let $t \in T$. We consider six cases. In each case, we show that $t$ ends the swap sequence on $f_2(t)$.


\paragraph{Case 1: $f_1(t)$ is an assignment vertex in a left gadget $\textbf{Gad}(x)$.} It follows by our construction that $f_2(t)$ is an assignment vertex in a right gadget $\textbf{Gad}(y)$. Because $(\lambda(x), \lambda(y))$ satisfies $(x,y)$, there exists a satisfaction path $\textbf{sat}(x, \lambda(x), y, \lambda(y))$, so there is a path $p$ of $4d$ edges from $f_1(t)$ to $f_2(t)$ consisting of: the edge $(f_1(t), \textbf{bas}(x))$, the paths $\textbf{lab}(x,\lambda(x))$, $\textbf{sat}(x, \lambda(x), y, \lambda(y))$, and $\textbf{lab}(y,\lambda(y))$, and the edge $(\textbf{bas}(y), f_2(t))$. We argue that the above swap sequence moves $t$ along $p$.

Let $k_X$ be the index of $f_1(t)$ in the order of the assignment vertices of $\textbf{Gad}(x)$. Step 1 moves $t$ along $d-k_X + 1$ edges of $p$. Step 2 only affects $t$ if $k_X = d$, in which case $t$ both begins and ends Step 2 on $\textbf{bas}(x)$. Step 3 does not affect $t$.

Because the order on $\textbf{Gad}(x)$'s assignment vertices is induced by the order on $Y$, there are $k_X-1$ iterations of Step 4 where there exists $(x,y_{\ell}) \in E$ that occur before the iteration where $y_{\ell} = y$. In each of these iterations, Step 4(b) has a sub-iteration where $\Bar{x} = x$, and 4(b)iii bubbles a token along $\textbf{lab}(x, \lambda(x))$, bringing $t$ one edge closer to $f_2(t)$.

Therefore, in the iteration of Step 4 where $y_{\ell} = y$, $t$ begins Step 4(a) on the endpoint of $\textbf{lab}(x, \lambda(x))$ opposite $\textbf{bas}(x)$. There is a sub-iteration of 4(a) where $\tilde{x} = x$. During this sub-iteration, 4(a)i bubbles $t$ along the $d$ edges of $\textbf{sat}(x, \lambda(x), y, \lambda(y))$. Step 4(a)ii bubbles $t$ along the $2d-1$ edges of $\textbf{lab}(y,\lambda(y))$, and 4(a)iii brings $t$ to $f_2(t)$. All the remaining steps do not affect $t$.

\paragraph{Case 2: $f_1(t)$ is an assignment vertex in a right gadget $\textbf{Gad}(y)$.} By our construction, $f_2(t)$ is an assignment vertex in a left gadget $\textbf{Gad}(x)$. Similarly to Case (1), there is a path $p$ of length $4d$ from $f_1(t)$ to $f_2(t)$ consisting of the edge $(f_1(t), \textbf{bas}(y))$, the paths $\textbf{lab}(y,\lambda(y))$, $\textbf{sat}(x, \lambda(x), y, \lambda(y))$, and $\textbf{lab}(x,\lambda(x))$, and the edge $(\textbf{bas}(x), f_2(t))$.

Let $k_Y$ be the index of $f_1(t)$ in the order of the assignment vertices of $\textbf{Gad}(y)$. Steps 1 and 2 do not affect $t$. Step 3 moves $t$ along $d-k_Y+1$ edges of $p$. There is one iteration of Step 4 where $y_{\ell} = y$. During this iteration, $t$ is swapped one edge along $p$ for each iteration of 4(a). Thus, at the beginning of 4(b), $t$ is distance $k_Y-1$ from the endpoint of $\textbf{lab}(y, \lambda(y))$ opposite $\textbf{bas}(y)$.

During the $k_Y$-th iteration of 4(b), $t$ is bubbled along $k_Y-1$ edges to the end of $\textbf{lab}(y, \lambda(y))$ in 4(b)i. During 4(b)ii, $\Bar{x} = x$, and $t$ is bubbled along the $d$ edges of $\textbf{sat}(x, \lambda(x), y, \lambda(y))$. During 4(b)iii, $t$ is brought to $\textbf{bas}(x)$, and in 4(b)iv $t$ is brought to $f_2(t)$.

\paragraph{Case 3: $f_1(t)$ is a vertex in a label path in a left gadget $\textbf{Gad}(x)$.} By our construction $f_2(t) = f_1(t)$. Let $j = \text{dist}(\textbf{bas}(x), f_1(t))$. Consider the iteration of Step 1 associated with $\textbf{Gad}(x)$. During the first $j$ sub-iterations of 1(a), $t$ is unaffected by 1(a)i, but 1(a)ii brings $t$ one edge closer to $\textbf{bas}(x)$. On the $j+1$-th sub-iteration of 1(a), $t$ begins 1(a)i on $\textbf{bas}(x)$, and 1(a)i brings $t$ to $\textbf{asg}(x)_{j+1}$. Step 2 brings $t$ to $\textbf{asg}(x)_{d-j}$. Step 3 does not affect $t$. 

There are $d$ distinct iterations of Step 4 where $\ell$ is such that $(x, y_{\ell}) \in E$. Because the order on the assignment vertices of $\textbf{Gad}(x)$ is induced by the order on $Y$, during the $i$-th of these $d$ iterations, it is the case that $f_2(f_1^{-1}(\textbf{asg}(x, y_{\ell}))) = \textbf{asg}(x)_{i}$. Therefore, on the $(d-j)$-th such iteration, $f_2(f_1^{-1}(\textbf{asg}(x, y_{\ell}))) = \textbf{asg}(x)_{d-j}$, and there is an iteration of Step 4(b) where 4(b)iv performs a swap along $(\textbf{bas}(x), \textbf{asg}(x)_{d-j})$. During each of the remaining $j$ iterations of 4 where $(x, y_{\ell}) \in E$, there is a sub-iteration of 4(b) where 4(b)iii swaps a token along $\textbf{lab}(x, \lambda(x))$, each reducing $t$'s distance from $f_1(t)$ by 1. Thus, $t$ concludes the swap sequence on its start vertex.

\paragraph{Case 4: $f_1(t)$ is a vertex in a label path in a right gadget $\textbf{Gad}(y)$, such that $0 \leq \text{dist}(f_1(t), \textbf{bas}(y)) \leq d-1$.} Let $j = \text{dist}(\textbf{bas}(x), f_1(t))$. Again, $f_2(t) = f_1(t)$. Steps 1 and 2 do not affect $t$. Consider the iteration of Step 3 associated with $\textbf{Gad}(y)$. During of the first $j$ sub-iterations of 3(a), Step 3(a)ii swaps a token with $t$, decreasing the distance from $t$ to $\textbf{bas}(y)$ by 1, until $t$ begins the $(j+1)$-th sub-iteration of 3(a) on $\textbf{bas}(y)$. Then, on the $(j+1)$-th sub-iteration of 3(a), Step 3(a)i brings $t$ onto $\textbf{asg}(y)_{d-j}$.

Consider the iteration of Step 4 where $y_{\ell} = y$. During the $(d-j)$-th sub-iteration of 4(a), Step 4(a)iii performs a swap along $(\textbf{bas}(y), \textbf{asg}(y)_{d-j})$, bringing $t$ to $\textbf{bas}(y)$. During each of the $j$ subsequent sub-iterations of 4(a), Step 4(a)ii brings $t$ one spot further from $\textbf{bas}(y)$ along $\textbf{lab}(y, \lambda(y))$. Step 4(b) does not affect $t$, nor do any subsequent iterations of 4, so $t$ ends the swap sequence on its target vertex.

\paragraph{Case 5: $f_1(t)$ is a vertex in a label path in a right gadget $\textbf{Gad}(y)$, such that $d \leq \text{dist}(f_1(t), \textbf{bas}(y)) \leq 2d-1$.} Let $j = \text{dist}(\textbf{bas}(x), f_1(t))$. Again, $f_2(t) = f_1(t)$. Steps 1-3 do not affect $t$. Consider the iteration of 4 where $y_{\ell} = y$. During each of the first $2d-1-j$ sub-iterations of 4(a), Step 4(a)ii performs a swap involving $t$, bringing $t$ one vertex closer to the endpoint of $\textbf{lab}(y, \lambda(y))$ opposite $\textbf{bas}(y)$. During the $(2d-j)$-th sub-iteration of 4(a), 4(a)ii swaps $t$ onto $\textbf{sat}(x, \lambda(x), y, \lambda(y))$, where $x \in X$ is the vertex such that $f_2(f_1^{-1}(\textbf{asg}(y)_{d-j}))$ is an assignment vertex in $\textbf{Gad}(x)$. All the remaining sub-iterations of 4(a) do not affect $t$.

During the $(j-d+1)$-th sub-iteration of 4(b), Step 4(b)ii swaps $t$ back onto $\textbf{lab}(y, \lambda(y))$. During each of the remaining $2d-j-1$ sub-iterations of 4(b), 4(b)ii brings $t$ one spot closer to $\textbf{bas}(y)$, so $t$ ends 4(b) at distance $j$ from $\textbf{bas}(y)$, on $f_1(t)$.

\paragraph{Case 6: $f_1(t)$ is a vertex in a satisfaction path $\textbf{sat}(x,\sigma_x, y, \sigma_y)$, and is not also an endpoint of a label path.} In this case, $f_2(t) = f_1(t)$. If $\lambda(x) \neq \sigma_x$ or $\lambda(y) \neq \sigma_y$, then $t$ is never affected by the swap sequence. Therefore, we assume that $\lambda(x) = \sigma_x$ and $\lambda(y) = \sigma_y$. Steps 1, 2, and 3 do not affect $t$. Step 4 only affects $t$ on the iteration where $y_{\ell} = y$. During this iteration, there is one sub-iteration of 3(a) where $\tilde{x} = x$. During this sub-iteration, Step 4(a)ii swaps $t$ with a neighboring token, but remains on the satisfaction path. Once Step 4(a) finishes executing, there is one sub-iteration of 4(b) where $\tilde{x} = x$. During this sub-iteration, Step 4(b)ii swaps $t$ back to $f_1(t)$.
\end{proof}

Now, we count the swaps in the sequence. For each $\textbf{Gad}(x) \in \textbf{L}$, Step 1 performs $\frac{d(d+1)}{2}$ swaps, resulting in a total of $\frac{d(d+1)}{4}|X \cup Y|$ swaps. For each $\textbf{Gad}(x)$, Step 2 performs at most $\frac{3d}{2}$ swaps, for a total of $\frac{3d}{4}|X \cup Y|$. Like Step 1, Step 3 performs a total of $\frac{d(d+1)}{4}|X \cup Y|$ swaps. Each time Step 4(a) is executed, it performs $3d$ swaps on each of $d$ assignment tokens. Each time Step 4(b) is executed, it performs $d(d-1)/2$ swaps on $\textbf{lab}(y, \lambda(y))$, as well as $2d$ swaps for each of $d$ right-outer tokens. In sum, Step 4 performs $3d^2+(d^2-d)/2+2d^2 = 11d^2/2 - d/2$ swaps on each iteration, and is executed $|Y| = \frac{1}{2}|X \cup Y|$ times, accounting for a total of $\frac{11}{4}d^2|X \cup Y| - \frac{1}{4}d|X \cup Y|$ swaps. In total, the procedure performs $\left(\frac{13}{4} d^2 + d\right)|X \cup Y|$ swaps.

\subsection{Soundness}
In this section, we will prove the second half of \cref{two-part-lemma}: if $OPT(\Phi) < \gamma$, then $OPT(K(\Phi)) > \left( \frac{7 - \gamma}{2} d^2 - 5d\right)|X \cup Y |$.

Let $\Phi$ be a label-cover instance such that $OPT(\Phi) < \gamma$, and let $S_{OPT}$ denote the optimal sequence of swaps on $K(\Phi)$. For a token $t \in T$, consider the subsequence of $S_{OPT}$ in which $t$ is one of the tokens being swapped. This subsequence traces out a walk in $G$. Consider the path from $f_1(t)$ to $f_2(t)$ obtained by removing all the closed sub-walks from this walk. We denote this path $\textbf{Swap}(t)$. 

We now partition the assignment tokens into two types: \emph{detour tokens} and \emph{non-detour} (assignment) tokens.

\begin{definition}[Detour token]
    A token $t$ is a \emph{detour token} if:
    \begin{itemize}
        \item $f_1(t)$ is an assignment vertex.
        \item $\textbf{Swap}(t)$ has more than $4d$ edges.
    \end{itemize}
\end{definition}

We will refer to all other assignment tokens as \emph{non-detour tokens}. This allows us to introduce the following notation: $det_L$ (resp. $det_R$) is the number of detour tokens $t$ such that $f_1(t)$ is a left (resp. right) assignment vertex. We will denote by $Det$ the set of detour tokens, and $Ndet$ the set of non-detour tokens.

We obtain lower bounds on two quantities: $B_{sat}$, the number of swaps occurring within satisfaction paths, and $B_{Gad}$, the number of swaps occurring within gadgets.
\begin{align}
    B_{sat} &\geq (d^2-3d)|X \cup Y| + \frac{d}{2}(det_L + det_R) \label{eqn:sat-bound} \\
    B_{Gad} &> \left( \frac{5 - \gamma}{2} d^2 -2d \right)|X \cup Y|  - \frac{d}{2}(det_L + det_R) \label{eqn:gad-bound}
\end{align}   

Combining these inequalities proves the second half of \cref{two-part-lemma}.

\subsubsection{Swaps on satisfaction paths}

First, we prove inequality \ref{eqn:sat-bound}.

\begin{observation}\label{observation:subpaths}
    If $t$ is a detour token, then $\textbf{Swap}(t)$ contains at least three satisfaction paths as subpaths.
\end{observation}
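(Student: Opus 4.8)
The plan is to exploit the rigid tree-like structure of $G$ together with the fact that $\textbf{Swap}(t)$ is, by construction, a simple path from $f_1(t)$ to $f_2(t)$. First I would record the relevant local structure of $G$: inside any gadget $\textbf{Gad}(v)$, the vertices form a ``spider'' rooted at $\textbf{bas}(v)$ --- the $d$ assignment vertices are leaves adjacent only to $\textbf{bas}(v)$, and the $|\Sigma|$ label paths are internally disjoint paths leaving $\textbf{bas}(v)$ --- so this induced subgraph is a tree. Moreover, every internal vertex of a satisfaction path has degree $2$, so any path that enters a satisfaction path at one of its endpoints is forced to traverse all $d$ of its edges and exit at the other endpoint. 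Consequently, the only edges of $G$ joining a left gadget to a right gadget lie on satisfaction paths, and a simple path may use each satisfaction path at most once (using all its edges when it does).

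Next I would run a parity and counting argument on the number $m$ of satisfaction paths appearing as subpaths of $\textbf{Swap}(t)$. Since $t$ is a detour token, $f_1(t)$ is an assignment vertex, say in the left gadget $\textbf{Gad}(x_0)$; by the construction $f_2(t)$ is then an assignment vertex in some right gadget $\textbf{Gad}(y_0)$. Each time $\textbf{Swap}(t)$ traverses a satisfaction path it switches from a left gadget to a right gadget or vice versa, and these are the only moves that change ``side''; since $\textbf{Swap}(t)$ starts on the left and ends on the right, $m$ is odd, and in particular $m \geq 1$. It then remains only to rule out $m = 1$.

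Finally, I would show that if $m = 1$ then $\textbf{Swap}(t)$ has exactly $4d$ edges, contradicting the definition of a detour token. If $\textbf{Swap}(t)$ crosses exactly one satisfaction path, that path must connect $\textbf{Gad}(x_0)$ directly to $\textbf{Gad}(y_0)$, so $\textbf{Swap}(t)$ decomposes as: the portion inside $\textbf{Gad}(x_0)$ from the leaf $f_1(t)$ to the leaf at the far endpoint of some label path $\textbf{lab}(x_0,\sigma_x)$; the satisfaction path itself; and the portion inside $\textbf{Gad}(y_0)$ from the far endpoint of $\textbf{lab}(y_0,\sigma_y)$ to the leaf $f_2(t)$. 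Because each gadget's induced subgraph is a tree, each in-gadget portion is the \emph{unique} path between its endpoints: the first is $f_1(t) \to \textbf{bas}(x_0) \to \textbf{lab}(x_0,\sigma_x)$, of length $1 + (d-1) = d$, and the second is $\textbf{lab}(y_0,\sigma_y) \to \textbf{bas}(y_0) \to f_2(t)$, of length $(2d-1) + 1 = 2d$; together with the $d$ edges of the satisfaction path this gives $d + d + 2d = 4d$ edges. Hence $m \neq 1$, and combined with $m$ odd we conclude $m \geq 3$, i.e.\ $\textbf{Swap}(t)$ contains at least three satisfaction paths as subpaths. The main obstacle is the first step: one must argue carefully that the global structure of $G$ genuinely forces a simple path to be ``piecewise tree plus full satisfaction paths,'' with no way to shortcut or to cheaply re-enter a gadget; once that is nailed down, the parity observation and the length computation are routine.
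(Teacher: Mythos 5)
Your proof is correct and takes essentially the same approach as the paper: the paper's justification is a one-line assertion that, by the construction of $G$, any path of length greater than $4d$ between a left and a right assignment vertex traverses at least three satisfaction paths, which is precisely the structural fact you establish. Your parity argument plus the unique-tree-path computation showing the single-satisfaction-path case has length exactly $4d$ simply fills in the details the paper leaves implicit.
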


 \cref{observation:subpaths} follows from the construction of $G$; any path of length greater than $4d$ with a left assignment vertex and a right assignment vertex as endpoints traverses at least three satisfaction paths. 

Thus we can associate to each detour token $t$ a sequence of at least three satisfaction paths. We refer to the path(s) in this sequence which are not the first or last path as the \emph{intermediate path(s)} of $t$. Each detour token has at least one intermediate path.

\begin{claim} \label{claim:det-inter}
    $S_{OPT}$ contains at least $\frac{d}{2}(det_L + det_R)$ swaps where at least one token is a detour token visiting one of its intermediate paths.
\end{claim}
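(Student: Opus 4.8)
# Proof Proposal for Claim \ref{claim:det-inter}

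The plan is to show that each detour token must make "substantial progress" through each of its intermediate satisfaction paths, and that this progress cannot be double-counted too aggressively across different detour tokens. The key structural observation is that an intermediate satisfaction path $P$ of a detour token $t$ is one that $t$ enters from one end and exits from the other (as opposed to the first/last paths, which $t$ may only partially traverse). Since each satisfaction path has $d$ edges, traversing it fully requires $d$ swaps \emph{on that path} in which $t$ participates. So naively each detour token contributes $d$ swaps per intermediate path, and has at least one intermediate path, giving $d \cdot (det_L + det_R)$ swaps — but these swaps might be shared between two detour tokens swapping with each other on the same path, which is exactly why the bound in the claim has a factor of $\frac{d}{2}$ rather than $d$.

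First I would make precise the notion that $\textbf{Swap}(t)$ traverses an intermediate satisfaction path $P$ "all the way across": by definition $\textbf{Swap}(t)$ is a simple path from $f_1(t)$ to $f_2(t)$, and by \cref{observation:subpaths} it decomposes into segments through label paths, base vertices, assignment vertices, and satisfaction paths; the intermediate satisfaction paths appear as full subpaths of length $d$. Next I would argue that for $t$ to move along all $d$ edges of such a $P$, the swap sequence $S_{OPT}$ must contain, for each edge $e$ of $P$, at least one swap along $e$ involving $t$ — this is because $\textbf{Swap}(t)$ is obtained from the actual walk of $t$ by deleting closed sub-walks, so every edge of $\textbf{Swap}(t)$ is an edge along which $t$ is actually swapped at least once. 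Thus $t$ contributes at least $d$ swaps per intermediate path, each such swap occurring on an edge internal to a satisfaction path and involving at least one detour token (namely $t$).

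Then I would handle the double-counting. A single swap along an edge $e$ of a satisfaction path involves exactly two tokens, so it can be "charged" to at most two distinct detour tokens for which that swap lies on an intermediate path. Summing the per-token contribution of $\geq d$ swaps over all $det_L + det_R$ detour tokens and dividing by $2$ to account for this two-fold overcount yields at least $\frac{d}{2}(det_L + det_R)$ distinct swaps in $S_{OPT}$, each of which has at least one token that is a detour token visiting one of its intermediate paths. I would be slightly careful to only count \emph{one} intermediate path per detour token (each has at least one), so the charging argument stays clean and no token's contribution is inflated.

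The main obstacle I anticipate is the bookkeeping needed to ensure the charging is genuinely to \emph{distinct} swaps: a subtlety is that a detour token $t$ might traverse the \emph{same} satisfaction path edge more than once in its full walk, but since we only count edges of $\textbf{Swap}(t)$ (the closed-sub-walk-deleted version), each edge of an intermediate path is counted once per token, and the two-fold division handles the inter-token sharing. A second subtlety worth a sentence is confirming that swaps on intermediate-path edges are not confused with swaps on the first or last satisfaction path of some other detour token — but since we are lower-bounding (we only need that \emph{at least one} participating token is a detour token visiting an intermediate path), this causes no trouble; an edge shared as an intermediate edge for $t$ and a first/last edge for $t'$ is still counted, just attributed to $t$. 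I would close by noting this establishes the claim, and that \eqref{eqn:sat-bound} will follow by adding the $d$ full traversals of non-detour tokens' single satisfaction paths (handled separately).
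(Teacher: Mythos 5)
Your proposal is correct and follows essentially the same argument as the paper: each of the $det_L + det_R$ detour tokens must participate in at least $d$ swaps on an intermediate satisfaction path (one per edge of that length-$d$ path), and since a swap involves at most two tokens, dividing by two gives the stated bound. The paper's proof is just a terser version of this; your added care about edges of $\textbf{Swap}(t)$ corresponding to actual swaps and about charging each swap to at most two detour tokens fills in the same reasoning the paper leaves implicit.
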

\begin{proof}
There are $det_L + det_R$ detour tokens, each of which participates in at least $d$ swaps on intermediate paths. Each swap involves at most $2$ tokens, hence the bound of $\frac{d}{2}(det_L + det_R)$. 
\end{proof}

Furthermore, let $ext(t)$ denote the number of swaps that $t$ participates in on satisfaction paths that are not intermediate paths. For $t \in Det$, $ext(t) \geq 2d$. For $t \in Ndet$, $ext(t) \geq d$.
Unfortunately, we cannot simply add these quantities together to obtain a lower bound on the number of swaps on satisfaction paths. This is because a swap may be between two different assignment tokens, so adding these quantities would risk double-counting. However, we can obtain an upper-bound on the amount of double-counting that can take place, using the following definition. 

\begin{definition}[Efficient satisfaction swap] \label{def:eff-sat}
    A swap between assignment tokens $t_1$ and $t_2$ is an \emph{efficient satisfaction swap} if one of the following is the case: 
    \begin{enumerate}
        \item neither $t_1$ nor $t_2$ is a detour token,
        \item one of $t_1$ or $t_2$ is not a detour token, the other is a detour token not on an intermediate path, and the swap is contained in $\textbf{Swap}(t)$, where $t$ is the detour token, or 
        \item $t_1$ and $t_2$ are detour tokens, neither is on an intermediate path, and the swap is contained in $\textbf{Swap}(t)$, where $t$ is $t_1$ or $t_2$.
    \end{enumerate}
\end{definition}
Let $k_{sat}$ denote the number of efficient satisfaction swaps in $S_{OPT}$. Combining \cref{def:eff-sat} with \cref{claim:det-inter}, we obtain the following inequality:  
\begin{equation*}
    B_{sat} \geq \frac{d}{2}\left( det_L + det_R \right) + \sum_{t \in Det} ext(t) + \sum_{t \in Ndet} ext(t) - k_{sat}
\end{equation*}
which implies that
\begin{equation} \label{ineq:k-sat}
B_{sat} \geq \frac{d}{2}\left( det_L + det_R \right) + 2d(det_L + det_R) + d \Big( d|X \cup Y| - (det_L + det_R) \Big) - k_{sat}.
\end{equation}
We give an upper bound on $k_{sat}$. We do so by proving individual upper bounds on each of the three types of efficient satisfaction swaps outlined in \cref{def:eff-sat}.

Swaps of type 1: \emph{neither $t_1$ not $t_2$ is a detour token}. Because the label-cover graph is simple, each edge $(x,y) \in E$ is mapped by our reduction to a unique pair of assignment vertices, those being $\textbf{asg}(x,y)$ and $\textbf{asg}(y,x)$. Therefore, any satisfaction path $\textbf{sat}(x, \sigma_x, y, \sigma_y)$ is traversed by at most two non-detour assignment tokens, those being $f_1^{-1}(\textbf{asg}(x,y)$ and $f_1^{-1}(\textbf{asg}(y,x)$. It follows that any non-detour token can swap on a satisfaction path with at most one other non-detour token 
Because there are $d|X \cup Y| - (det_L + det_R)$ non-detour assignment tokens, we obtain the following upper bound on swaps of type 1:
\[ \frac{1}{2}\Big( d|X \cup Y| - (det_L + det_R) \Big). \]
Swaps of type 2: \emph{one of $t_1$ or $t_2$ is not a detour token, and the other is a detour token not on an intermediate path}. Again, any satisfaction path $\textbf{sat}(x, \sigma_x, y, \sigma_y)$ is traversed by at most two non-detour assignment tokens, those being $f_1^{-1}(\textbf{asg}(x,y)$ and $f_1^{-1}(\textbf{asg}(y,x)$. A detour token moving across $\textbf{sat}(x, \sigma_x, y, \sigma_y)$ can swap with only one, depending on the direction in which it is moving. 
Because each detour token traverses exactly two non-intermediate satisfaction paths, the number of swaps of type 2 is at most:
\[2(det_L + det_R)\]
Swaps of type 3: \emph{$t_1$ and $t_2$ are detour tokens, but neither is on an intermediate path}. Each detour token participates in at most 
a combined $2d$ swaps on the first and last satisfaction paths that it visits. Each of these swaps involves at most $2$ such tokens, implying that the number of swaps of type 3 is at most:
\[ d(det_L + det_R). \]
Combining these bounds with inequality \ref{ineq:k-sat}, we are left with the following lower bound on $B_{sat}$:
\begin{align*}
    B_{sat} &\geq  \frac{d}{2}\left( det_L + det_R \right) + d \Big( d|X \cup Y| - (det_L + det_R) \Big) + 2d(det_L + det_R) \\
    &\quad -\frac{1}{2}\Big( d|X \cup Y| - (det_L + det_R) \Big) - 2(det_L + det_R) - d(det_L + det_R) \\
    &= d^2|X \cup Y| + \frac{3d}{2}(det_L + det_R) - \frac{d}{2}|X \cup Y| - \left( d + \frac{5}{2} \right)(det_L + det_R) \\
    &= \left( d^2 -\frac{d}{2} \right)|X \cup Y| + \left( \frac{d-5}{2} \right)(det_L + det_R) \\
    &\geq  \left( d^2 - 3d \right)|X \cup Y| + \frac{d}{2} (det_L + det_R)
\end{align*}
as desired. 

\subsubsection{Swaps on gadgets}

Next, we prove inequality \ref{eqn:gad-bound}:
\[ B_{Gad} > \left( \frac{5 - \gamma}{2} d^2 -2d \right)|X \cup Y|  - \frac{d}{2}(det_L + det_R). \]
Any assignment token $t$ participates in at least $3d-2$ swaps within label paths: $d-1$ swaps on a left label path and $2d-1$ swaps on a right label path. Moreover, we can identify the two label paths, one in a left gadget and one in a right gadget, which come at either end of $\textbf{Swap}(t)$. We refer to these paths as the \emph{first} and \emph{last legs} of $t$. Note that the first leg of $t$ comes in the gadget containing $t$'s start vertex, and the last leg comes in the gadget containing $t$'s target vertex.

A swap which corresponds to an edge in the first leg of one token $t_1$ and the last leg of another token $t_2$ we will call an \emph{efficient gadget swap}.
Note that such a swap appears in both $\textbf{Swap}(t_1)$ and $\textbf{Swap}(t_2)$. We denote the number of efficient gadget swaps in $S_{OPT}$ by $k_{Gad}$, and we denote the number of efficient gadget swaps occurring within the gadget $\textbf{Gad}(v)$ by $k_{Gad}(v)$. We observe that
\begin{equation} \label{ineq:k-gad}
    B_{Gad} \geq (3d^2-2d)|X \cup Y| - k_{Gad}
\end{equation}
and we seek an upper bound on $k_{Gad}$ accordingly.

To this end, we define the \emph{outflow} of a label path $p$ to be the number of tokens whose first leg is $p$, and the \emph{inflow} of $p$ to be the number of tokens whose last leg is $p$. For $\textbf{Gad}(v) \in \textbf{L} \cup \textbf{R}$, we call the label path in $\textbf{Gad}(v)$ with the maximum outflow (resp. inflow) the \emph{out-dominant} (resp. \emph{in-dominant}) label path of $\textbf{Gad}(v)$. We denote by $out(\textbf{Gad}(v))$ (resp. $in(\textbf{Gad}(v))$) 
the outflow (resp.~inflow) of the out-dominant (resp.~in-dominant) label path in $\textbf{Gad}(v)$. We obtain two inequalities.
\begin{claim} \label{claim:eff-bound}
    For any gadget $\textbf{Gad}(v)$, 
    \begin{itemize}
        \item $k_{Gad}(v) \leq d ( out(\textbf{Gad}(v)))$
        \item $k_{Gad}(v) \leq d ( in(\textbf{Gad}(v)))$.
    \end{itemize}
\end{claim}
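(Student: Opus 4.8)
The two inequalities are symmetric (one via outflow, one via inflow), so I would prove the outflow bound and note the inflow bound follows by the identical argument with "first leg" and "last leg" swapped. Fix a gadget $\textbf{Gad}(v)$. The key structural observation is that every efficient gadget swap inside $\textbf{Gad}(v)$ lies on an edge that is simultaneously the first leg of some token $t_1$ and the last leg of some token $t_2$, and a first leg (resp. last leg) is, by definition, one of the $|\Sigma|$ label paths of $\textbf{Gad}(v)$. The plan is to charge each efficient gadget swap in $\textbf{Gad}(v)$ to the label path that serves as the \emph{first leg} of the token $t_1$ involved, and to bound the total charge.

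\textbf{Main steps.} First I would argue that all efficient gadget swaps inside $\textbf{Gad}(v)$ that are charged to a fixed label path $p$ must actually occur on edges of $p$: the first leg of $t_1$ is a single label path, so if it is $p$, the swap lies on $p$. Second, I would bound, for a fixed label path $p$, the number of efficient gadget swaps occurring on $p$. Here is the crux: a label path has $d-1$ edges (left gadget) or $2d-1$ edges (right gadget), but that is not directly the bound we want — instead, each token $t_1$ whose first leg is $p$ contributes swaps on at most all the edges of $p$ that it actually traverses, and the relevant count is that each such token can be "efficiently paired" along $p$ only a bounded number of times. The cleaner route: an efficient gadget swap on $p$ requires a token whose \emph{first} leg is $p$; there are exactly $out(p)$ such tokens, and each of them, while traversing its first leg $p$, passes along at most $d$ edges of $p$ that can host such a swap (using that in a left gadget $p$ has $d-1 < d$ edges, and in a right gadget the portion of $p$ relevant to a first-leg traversal before reaching $\textbf{bas}(v)$ is $d$ edges). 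Hence the number of efficient gadget swaps on $p$ is at most $d \cdot out(p) \le d \cdot out(\textbf{Gad}(v))$, and summing — carefully, so as not to overcount swaps that could be charged to two different label paths — I would observe each efficient gadget swap in $\textbf{Gad}(v)$ occurs on exactly one label path (the first leg of $t_1$ is unique), so in fact we should be bounding $k_{Gad}(v) = \sum_p (\text{efficient swaps on } p)$, and the bound $d \cdot out(\textbf{Gad}(v))$ should follow from the stronger statement that \emph{all} efficient gadget swaps in $\textbf{Gad}(v)$ lie on a \emph{single} label path.

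\textbf{The real argument.} Re-examining, I believe the intended proof is: an efficient gadget swap on an edge $e$ of $\textbf{Gad}(v)$ requires $e$ to be in the first leg of $t_1$ \emph{and} the last leg of $t_2$; since distinct label paths of $\textbf{Gad}(v)$ share only the vertex $\textbf{bas}(v)$ and no edges, the set of edges hosting efficient gadget swaps that are "first-leg edges" lies within the union of the out-paths, but any single such swap's edge is determined. The quantity $k_{Gad}(v)$ counts swaps, each on one edge, each edge in one label path. For label path $p$, a token $t_1$ with first leg $p$ traverses $p$ once (modulo backtracking removed in $\textbf{Swap}$), crossing each of its $\le d$ relevant edges once, so contributes $\le d$ to the count of first-leg-edge incidences; summing over the $out(p)$ such tokens and then over all $p$, and using that each efficient gadget swap is counted once this way, gives $k_{Gad}(v) \le d \sum_p out(p)$. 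To sharpen $\sum_p out(p)$ down to $out(\textbf{Gad}(v))$, I would use that the efficient gadget swaps must concentrate: the inflow/outflow accounting forces nearly all first legs through the out-dominant path, or more precisely one shows directly that any efficient gadget swap inside $\textbf{Gad}(v)$ occurs on the out-dominant label path — because the partner token $t_2$'s last leg (also a label path of $\textbf{Gad}(v)$) must coincide with $t_1$'s first leg, and a short exchange argument (swapping labels to reduce $OPT$) shows all this traffic is funneled through one label path in an optimal-cost configuration.

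\textbf{Expected main obstacle.} The delicate point is getting from a per-label-path bound $d\cdot out(p)$ to the gadget-wide bound $d\cdot out(\textbf{Gad}(v))$ — i.e., showing that in an (approximately) optimal swap sequence the efficient gadget swaps inside a single gadget all occur on a single label path, rather than being spread across several label paths where their outflows could sum to something larger. I expect the resolution is a local-exchange / rerouting argument showing that if two label paths of $\textbf{Gad}(v)$ both hosted efficient gadget swaps, one could reroute some tokens to a common label path without increasing the total number of swaps, contradicting optimality (or at least preserving the bound). The rest — each token crossing $\le d$ edges of its first leg, and distinct label paths being edge-disjoint — is routine given the construction.
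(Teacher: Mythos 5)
There is a genuine gap, and it sits exactly where you flagged it. Your charging scheme (charge each efficient gadget swap to the first leg of $t_1$, bound swaps on a path $p$ by $d\cdot out(p)$, then sum over $p$) yields $k_{Gad}(v)\leq d\sum_p out(p)$, which is weaker than the claim unless only one label path has nonzero outflow. The rescue you propose --- that in an optimal sequence all efficient gadget swaps inside $\textbf{Gad}(v)$ are funneled through the single out-dominant label path, via a local exchange/rerouting argument --- is unproven and cannot simply be assumed: all of the soundness bookkeeping ($\textbf{Swap}(t)$, detour vs.\ non-detour tokens, $out$, $in$, $k_{sat}$, $k_{Gad}$) is defined relative to the \emph{fixed} optimal sequence $S_{OPT}$, so rerouting tokens to a common label path changes these quantities and would force you to re-derive the other inequalities for the modified sequence. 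Moreover there is no reason an optimal sequence must concentrate a gadget's traffic on one label path; in the soundness case (no good labelling exists) spreading the outgoing tokens of a gadget over several label paths is precisely the kind of behavior the adversarial sequence may exhibit, and the claim must hold for it anyway.

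The paper's proof avoids this entirely by counting per \emph{incoming} token rather than per outgoing path. Fix $\textbf{Gad}(v)$ and consider any token $t$ with $f_2(t)$ an assignment vertex of $\textbf{Gad}(v)$; there are exactly $d$ such tokens, and each efficient gadget swap in $\textbf{Gad}(v)$ involves one of them (its edge lies in the last leg of one of the two swapped tokens, and last legs live in the gadget of the destination). The last leg of $t$ is a single label path $p$ of $\textbf{Gad}(v)$; since distinct label paths of a gadget are edge-disjoint, every efficient gadget swap involving $t$ lies on $p$ and pairs $t$ with a token whose first leg is $p$. Each such partner traverses $p$ in the direction opposite to $t$, so the pair swaps at most once, giving at most $\mathrm{outflow}(p)\leq out(\textbf{Gad}(v))$ efficient swaps involving $t$. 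Summing over the $d$ incoming tokens gives $k_{Gad}(v)\leq d\cdot out(\textbf{Gad}(v))$, and the inflow bound is symmetric (count per outgoing token, whose partners' last legs must coincide with its first leg). In short: the maximum $out(\textbf{Gad}(v))$ is used only as a pointwise upper bound on the outflow of whichever path each incoming token happens to use --- no concentration statement is needed --- whereas your route makes such a concentration statement the crux without a proof, and your per-edge count ($\leq d$ edges of $p$ per first-leg token) is also not the right granularity compared to the one-swap-per-opposing-pair bound.
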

\begin{proof}
    We prove the first inequality. Let $t \in T$ be an assignment token. Suppose $f_2(t)$ is in $\textbf{Gad}(v)$. Then $t$ participates in at most $out(\textbf{Gad}(v))$ efficient gadget swaps on  $\textbf{Gad}(v)$, as at most $out(\textbf{Gad}(v))$ such swaps can occur on the label path taken by $t$. 
    There are $d$ tokens $t$ where $f_2(t)$ is an assignment vertex in $\textbf{Gad}(v)$, hence the bound of $d ( out(\textbf{Gad}(v)))$. The proof of the second inequality is symmetric.
\end{proof}

\begin{claim} \label{claim:php}
    If $OPT(\Phi) < \gamma$, then
    \begin{align*}
         \sum_{x \in X} out(\textbf{Gad}(x)) + \sum_{y \in Y} in(\textbf{Gad}(y)) &< \left( \frac{1+ \gamma}{2} \right) d |X \cup Y| + det_L \\
         \sum_{x \in X} in(\textbf{Gad}(x)) + \sum_{y \in Y} out(\textbf{Gad}(y)) &< \left( \frac{1+ \gamma}{2} \right) d |X \cup Y| + det_R
    \end{align*}
\end{claim}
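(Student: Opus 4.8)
The plan is to read a labelling of $\Phi$ off of the dominant label paths, and to show that a non-detour assignment token can be counted by \emph{both} a term in an out-sum and a term in an in-sum only when it certifies a satisfied constraint. Since $OPT(\Phi)<\gamma$ forces few such certificates, an inclusion--exclusion argument then gives the bound. I will prove the first inequality; the second is the mirror image.

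First I would fix the labelling $\lambda : X\cup Y\to\Sigma$ that assigns to each $x\in X$ the label of the out-dominant label path of $\textbf{Gad}(x)$ and to each $y\in Y$ the label of the in-dominant label path of $\textbf{Gad}(y)$ (ties broken arbitrarily). Recall that since the label-cover graph is simple there is exactly one left assignment token per edge, so there are $|E|$ of them, of which $det_L$ are detour tokens; also $|X|=|Y|$, so $|E|=\tfrac{d}{2}|X\cup Y|$. The crucial structural point is: a non-detour left token $t$ going from $\textbf{asg}(x,y)$ to $\textbf{asg}(y,x)$ has $\textbf{Swap}(t)$ of length exactly $4d$ (the minimum distance between those two vertices), so $\textbf{Swap}(t)$ must be the edge into $\textbf{bas}(x)$, then a full left label path (its first leg), then a full satisfaction path, then a full right label path (its last leg), then the edge into $\textbf{asg}(y,x)$. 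Hence if $t$'s first leg is $\textbf{lab}(x,\sigma_x)$ and last leg is $\textbf{lab}(y,\sigma_y)$, then the path $\textbf{sat}(x,\sigma_x,y,\sigma_y)$ exists, i.e.\ $\Pi_{(x,y)}(\sigma_x)=\sigma_y$. In particular, if $t$'s first leg is the out-dominant path of $\textbf{Gad}(x)$ and its last leg is the in-dominant path of $\textbf{Gad}(y)$, then the edge $(x,y)$ is satisfied by $\lambda$.

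Next comes the counting. Let $A$ be the set of non-detour left tokens whose first leg is the out-dominant path of the gadget containing their start vertex, and let $B$ be the set of non-detour left tokens whose last leg is the in-dominant path of the gadget containing their target vertex. For each $x$, the outflow of the out-dominant path of $\textbf{Gad}(x)$ is realized by the tokens of $A$ starting in $\textbf{Gad}(x)$ together with at most all of the detour left tokens starting in $\textbf{Gad}(x)$, so $\sum_{x\in X}out(\textbf{Gad}(x))\le |A|+det_L$; symmetrically $\sum_{y\in Y}in(\textbf{Gad}(y))\le |B|+det_L$. By the structural point, sending $t\in A\cap B$ to the unique edge of $\Phi$ it lies on injects $A\cap B$ into the set of constraints satisfied by $\lambda$, whence $|A\cap B|\le OPT(\Phi)\cdot|E|<\gamma|E|$; and $|A\cup B|$ is at most the number of non-detour left tokens, which is $\tfrac{d}{2}|X\cup Y|-det_L$. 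Therefore $|A|+|B|=|A\cup B|+|A\cap B|<\tfrac{(1+\gamma)d}{2}|X\cup Y|-det_L$, and adding $2\,det_L$ yields $\sum_{x\in X}out(\textbf{Gad}(x))+\sum_{y\in Y}in(\textbf{Gad}(y))<\big(\tfrac{1+\gamma}{2}\big)d|X\cup Y|+det_L$.

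I expect the only subtle point to be the accounting for detour tokens: bounding each of the two sums separately charges every detour token twice, giving an unwanted $2\,det_L$. The fix is precisely the inclusion--exclusion step, in which $|A\cup B|\le \tfrac{d}{2}|X\cup Y|-det_L$ absorbs one copy of $det_L$ so that only one survives. The other thing worth checking carefully is the claim that a length-$4d$ simple path between a left and a right assignment vertex necessarily has the leg/satisfaction-path/leg shape; this is forced by the gadget construction, since assignment vertices are leaves off base vertices, left label paths have $d-1$ edges, right label paths $2d-1$, and satisfaction paths $d$, so $1+(d-1)+d+(2d-1)+1=4d$ admits no slack. The second displayed inequality is proved verbatim after swapping left$\leftrightarrow$right and out$\leftrightarrow$in, using the labelling $\mu$ with $\mu(x)$ the in-dominant label of $\textbf{Gad}(x)$ and $\mu(y)$ the out-dominant label of $\textbf{Gad}(y)$ and right assignment tokens in place of left ones; this produces the $+\,det_R$ term.
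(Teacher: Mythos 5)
Your proposal is correct and is essentially the paper's argument: both define the labelling from the out-dominant/in-dominant label paths, use the fact that a non-detour token whose first leg is out-dominant and whose last leg is in-dominant certifies a satisfied constraint (via the forced length-$4d$ shape of $\textbf{Swap}(t)$), and count left (resp.\ right) assignment tokens against $\frac{d}{2}|X\cup Y|$ with a $det_L$ (resp.\ $det_R$) correction. The only difference is presentational: the paper phrases the count as a pigeonhole argument by contradiction, while you do the same bookkeeping directly via inclusion--exclusion on $A$ and $B$.
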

\begin{proof}
    We prove the first half of the claim; the proof of the second half is symmetric. Suppose for contradiction the first inequality does not hold. The left side of the inequality counts two quantities: (a) the total outflow from the out-dominant label path of each left gadget, plus (b) the total inflow to the in-dominant label path of each right gadget. Each left assignment token can contribute at most 1 to (a) and at most 1 to (b). Because $d |X \cup Y|$ is the total number of assignment tokens and the graph is regular (so $|X|=|Y|$), there are exactly $(d/2) |X \cup Y|$ left assignment tokens. As a result, even if the maximum number of left assignment tokens are contributing to only one of (a) or (b), at least $\frac{\gamma}{2}\left( d|X \cup Y| \right) + det_L$ contribute to both.
    Therefore, there are at least $\frac{\gamma}{2}\left( d|X \cup Y| \right)$ non-detour tokens starting in left gadgets which have an out-dominant label path as a first leg and an in-dominant label path as a last leg. Let $t$ be such a token. Because $t$ is a non-detour token, there is a satisfaction path between $t$'s first and last legs.
    By our construction, the constraint in the label cover instance associated with $t$ is satisfied by the labelling corresponding to $t$'s first and last legs. Consider a labelling of $\Phi$ which assigns to $x \in X$ the label corresponding to the out-dominant label path of $\textbf{Gad}(x)$, and to $y \in Y$ the label corresponding to the in-dominant label path of $\textbf{Gad}(y)$. There are at least $\gamma |E|$ constraints satisfied by this labelling, which is a contradiction.
\end{proof}
We average the inequalities in \cref{claim:eff-bound} and rearrange some terms to obtain the new bound for any $\textbf{Gad}(v)$: 
\[ \frac{k_{Gad}(v)}{d} \leq \frac{out(\textbf{Gad}(v)) + in(\textbf{Gad}(v))}{2}\]
and we average the two inequalities in \cref{claim:php} to obtain
\[  \sum_{v \in X \cup Y} \frac{out(\textbf{Gad}(v)) + in(\textbf{Gad}(v))}{2} < \left( \frac{1+ \gamma}{2} \right) d |X \cup Y| + \frac{det_L + det_R}{2}. \]
Combining these bounds, we arrive at the following inequality: 
\[ \sum_{v \in X \cup Y} \frac{k_{Gad}(v)}{d} <  \frac{1+ \gamma}{2}d |X \cup Y| + \frac{det_L + det_R}{2} \]
which implies 
\[ k_{Gad} <  \frac{1+ \gamma}{2} d^2 |X \cup Y| + \frac{d}{2}(det_L + det_R)  \]
which, combined with inequality \ref{ineq:k-gad}:
\[ B_{Gad} \geq (3d^2-2d)|X \cup Y| - k_{Gad}\]
implies inequality \ref{eqn:gad-bound}: 
\[B_{Gad} > \left( \frac{5 - \gamma}{2} d^2 -2d \right)|X \cup Y|  - \frac{d}{2}(det_L + det_R). \]

\section{Hardness for 0/1-Weighted Token Swapping}


In this section, we will prove the following theorem:

\thmwts*
We proceed via a reduction from the \textsc{Set Cover} problem, defined below.
\begin{definition}
    [Set Cover] An instance of \textsc{Set-Cover} $\Phi = (U,\{S_i\}_{1 \leq i \leq k} )$ consists of a finite universe set $U$, together with subsets $S_1, ..., S_k \subseteq U$. We seek to find the size of the smallest collection of subsets in $\{S_1, ..., S_k \}$ so that each element of $U$ is in at least one subset.
\end{definition}
Dinur and Steurer \cite{dinur:2014} prove the following theorem:
\begin{theorem}
    For any constant $\delta > 0$, it is NP-hard to approximate \textsc{Set Cover} within a factor of $(1-\delta) \cdot \ln ( |U| + k)$. 
\end{theorem}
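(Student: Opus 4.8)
The plan is to reduce from the two-prover one-round game (equivalently \textsc{Label-Cover}), following Feige's classical set-cover reduction but instantiated with a sharp parallel-repetition theorem so as to obtain \textsc{NP}-hardness rather than the weaker quasi-polynomial-time conditional hardness. First I would fix a constant-factor-hard instance of \textsc{Label-Cover}: by the PCP theorem there is a constant $\rho_0 < 1$ and a polynomial-time reduction from \textsc{3-SAT} producing bipartite \textsc{Label-Cover} instances $\Psi_0$ with the projection property that are either fully satisfiable or at most $\rho_0$-satisfiable. I would then amplify the gap by parallel repetition: the $r$-fold repeated game $\Psi$ has value at most $\rho_0^{\,\Omega(r)}$ and size $|\Psi_0|^{O(r)}$. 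The crucial point I would import from Dinur--Steurer is an \emph{analytic} parallel-repetition bound strong enough that choosing $r$ to be a suitably large constant already drives the soundness below any prescribed threshold while the blow-up $|\Psi_0|^{O(r)}$ stays polynomial; this keeps the whole reduction polynomial-time, hence yields \textsc{NP}-hardness. I would treat this amplification as a cited black box and build the combinatorial gadget on top of it.

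Second, the set-cover gadget. Pair $\Psi$ with a \emph{partition system}: a ground set $[m]$ together with one part-indexing of $[m]$ per question of $\Psi$, designed so that on each query block $(q_1,q_2)$ the two parts contributed by prover answers $a$ and $b$ cover $[m]$ exactly when $(a,b)$ satisfies the constraint of that query, and so that covering a block without such a ``consistent pair'' of parts costs at least $d$ parts. A probabilistic construction gives such a system with $m$ polynomial when $d = \Theta(\log m / \log(L|\Sigma|))$, where $L$ is at least the query-degree of $\Psi$ and $|\Sigma|$ its alphabet. The Set Cover universe is $U = (\text{queries of }\Psi)\times[m]$, one fresh copy of $[m]$ per query; for each question $q$ and answer $a$ there is one set which, on every query block through $q$, equals the $a$-indexed part. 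Then the instance size $|U|+k$ is polynomial in $|\Psi|\cdot m$, so $m$ may be taken a large enough polynomial that $\ln m = (1-o(1))\ln(|U|+k)$.

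Third, the gap. For completeness, a satisfying assignment of $\Psi$ yields a cover consisting of one set per question (the set of its assigned answer), i.e.\ two sets per query block. For soundness, suppose a cover uses fewer than roughly $\tfrac{d}{2}$ sets per query block on average; then on all but an $o(1)$ fraction of blocks it must contain a prover-$1$ set and a prover-$2$ set whose parts already cover that block, since by the partition-system property covering a block without a consistent pair costs at least $d$ sets, which summed over blocks would exceed the budget. Reading off, for each question, a uniformly random answer among those whose set appears in the cover on the question's incident blocks then satisfies a constant fraction of $\Psi$'s constraints in expectation, contradicting the soundness of $\Psi$. Tuning the partition system to $d = (1-\delta/2)\ln m$ and choosing $m$ to dominate $|\Psi|$ turns the ratio between the soundness and completeness cover sizes into $(1-\delta)\ln(|U|+k)$.

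The main obstacle is making two competing demands fit together in the parameter choice. To reach the \emph{optimal} constant $(1-\delta)$ rather than merely $\Omega(\log n)$, we need $d$ as close to $\ln m$ as possible, forcing $L|\Sigma| = m^{o(1)}$, which bounds both the alphabet and the query-degree of $\Psi$ and thus limits how aggressively we may repeat; at the same time we need $\Psi$'s value small and the final instance polynomial-sized. Squaring these is exactly where an off-the-shelf Raz-type repetition bound is too lossy (it was this loss that forced Feige's original reduction to use $\omega(1)$ repetitions, producing a quasi-polynomial instance and only conditional hardness), and where the sharper Dinur--Steurer analysis is essential. By comparison, the completeness side and the probabilistic soundness-extraction argument are routine.
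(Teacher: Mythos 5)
The paper does not prove this statement at all: it is quoted as a known theorem of Dinur and Steurer \cite{dinur:2014} and used as a black box for the reduction to 0/1-weighted token swapping, so the only ``proof'' in the paper is a citation. Your proposal is an attempt to re-derive that cited result, and while you correctly identify its provenance (a Feige-style set-cover reduction powered by a stronger gap-amplification for label cover), the sketch has a genuine quantitative gap at exactly the point that makes the theorem hard: the constant in front of $\ln n$. With a two-prover/label-cover gadget in which each query block is covered in the completeness case by \emph{two} parts (one per endpoint) and in the soundness case by at least $d$ parts, your own accounting (``fewer than roughly $d/2$ sets per query block on average'') yields a gap of about $d/4$, and even the most generous version of this bookkeeping gives $d/2\approx\tfrac{1}{2}\ln m$. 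Moreover, with parts that pairwise cover $[m]$ (hence have density about $1/2$), the covering threshold $d$ cannot be pushed to $(1-\delta/2)\ln m$ as you assert. This factor-of-constant loss is precisely why Lund--Yannakakis-type reductions give only $c\ln n$ for some $c<1$, and why Feige introduced $k$-prover proof systems with $k\to\infty$ (and correspondingly layered partition systems), or, in the modern NP-hardness proof, why Moshkovitz's refined projection-games-to-set-cover reduction is needed; none of this machinery appears in your sketch, and without it the claimed ratio $(1-\delta)\ln(|U|+k)$ does not follow.

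A second, related misstep is your claim that a \emph{constant} number of repetitions suffices because Dinur--Steurer give a sharp analytic repetition bound. Driving the soundness below any fixed constant is already achievable with Raz's theorem at constant blow-up; that is not the bottleneck. The soundness-extraction step in the tight reduction must tolerate covers that use up to $\Theta(\log n)$ sets per question, so one needs label-cover soundness that is \emph{sub-constant} (inverse polylogarithmic or better) while the instance size stays polynomial and the alphabet/degree stay small relative to the partition-system ground set. Producing such instances in polynomial time is the actual contribution of Dinur--Steurer that converts Feige's quasi-NP-hardness into NP-hardness, so treating it as ``choose $r$ a large constant'' both misstates what the black box provides and breaks the parameter balancing you flag as the main obstacle. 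Since the paper itself only cites this theorem, you are not expected to prove it; but as written, your outline would establish only $\Omega(\ln n)$-hardness, not the tight $(1-\delta)\ln(|U|+k)$ bound.
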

We give a gap-preserving reduction from \textsc{Set Cover} to \textsc{Weighted Token Swapping}.

\paragraph{Construction} Given a \textsc{Set Cover} instance $\Phi = (U,\{S_i\}_{1 \leq i \leq k} )$, we produce a \textsc{Weighted Token Swapping} instance $K$ as follows. For each $u \in U$, we create two vertices, $v_u^1$ and $v_u^2$. We also create two tokens, each of weight 0, $t_u^1$ and $t_u^2$, so that $f_1(t_u^1) = v_u^2$ and $f_1(t_u^2) = v_u^1$, and $f_2(t_u^1) = v_u^1$ and $f_2(t_u^2) = v_u^2$.  That is, the tokens on $v_u^1$ and $v_u^2$ wish to swap with each other. For each $S_i \in \{S_i\}_{1 \leq i \leq k}$, we create a single vertex $v_i$. We also create a token $t_i$ of weight 1 such that $f_1(t_i) = f_2(t_i) = v_i$, that is, the token starting on $v_i$ wishes to stay there. Finally, for each $u \in S_i$, we add the edges $(v_u^1, v_i)$ and $(v_u^2, v_i)$. An example of such a construction is given in \cref{fig:weighted-gadget}.

\begin{figure}[h]
    \centerline{\includegraphics[scale=.28]{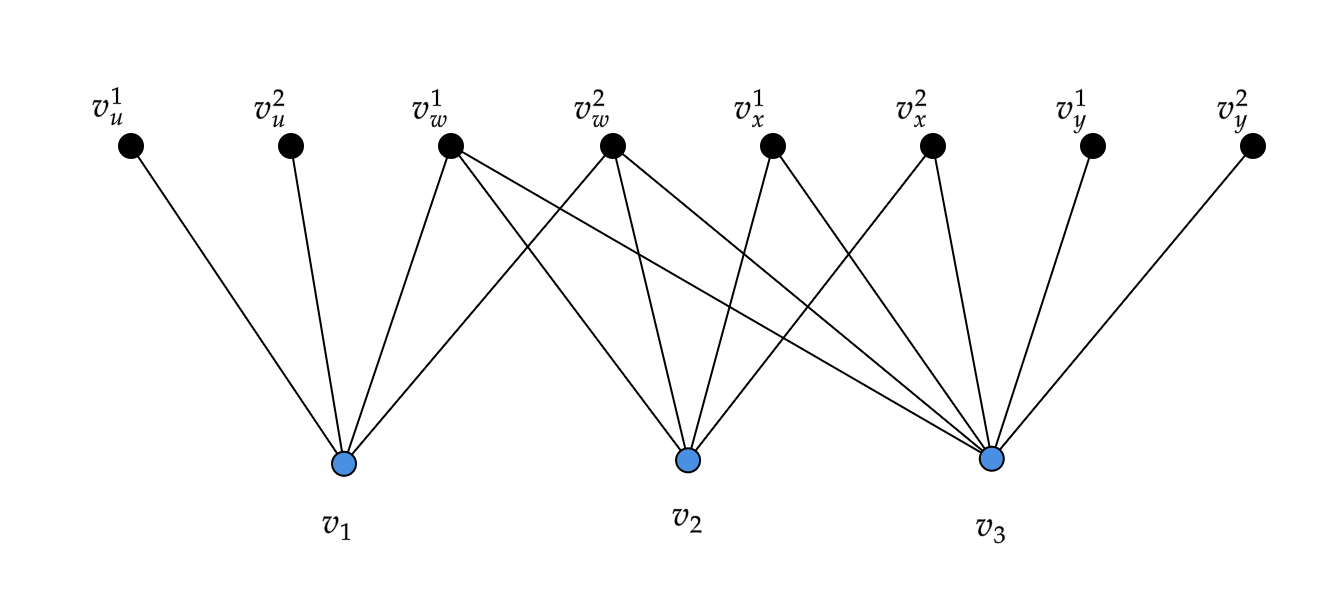}}
    \caption{The graph for a weighted token swapping instance constructed from a set cover instance with $U = \{u,w,x,y\}$ and $S_1 = \{u,w\}$, $S_2 = \{w,x\}$, and $S_3 = \{w,x,y\}$. Pairs of black vertices correspond to elements of $U$, while blue vertices correspond to sets. Two tokens from blue vertices must be displaced for tokens on black vertices to be moved to their destinations. 
    }
    \label{fig:weighted-gadget}
\end{figure}
We make the following claim.
\begin{claim} \label{claim:weighted-claim}
    $OPT(K) = 2 \cdot OPT(\Phi)$.
\end{claim}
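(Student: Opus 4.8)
The plan is to prove the two directions $OPT(K) \le 2\,OPT(\Phi)$ and $OPT(K) \ge 2\,OPT(\Phi)$ separately, exploiting the fact that only the weight-$1$ tokens $t_i$ contribute to the cost of a swap sequence (every swap involving two weight-$0$ tokens is free, and a swap between a weight-$0$ token and some $t_i$ costs $1$).

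For the upper bound, I would start from an optimal set cover $\mathcal{C} \subseteq \{S_1,\dots,S_k\}$ with $|\mathcal{C}| = OPT(\Phi)$. For each $u \in U$, pick some $S_{i(u)} \in \mathcal{C}$ with $u \in S_{i(u)}$. I then swap the tokens $t_u^1, t_u^2$ into place by routing each of them through the vertex $v_{i(u)}$: concretely, swap $t_u^2$ (currently on $v_u^1$) with $t_{i(u)}$ along edge $(v_u^1, v_{i(u)})$, then swap $t_u^1$ (currently on $v_u^2$) with $t_{i(u)}$ — wait, I need to be careful about the order so that $t_{i(u)}$ returns home. The right sequence is: move $t_u^1$ from $v_u^2$ to $v_{i(u)}$ (cost $1$, displacing $t_{i(u)}$ to $v_u^2$), move $t_u^2$ from $v_u^1$ to $v_u^2$ via... no — since $v_u^1$ and $v_u^2$ are not adjacent, both tokens must pass through a set-vertex. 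The clean description: swap along $(v_u^2, v_{i(u)})$, then along $(v_u^1, v_{i(u)})$, then along $(v_u^2, v_{i(u)})$ again — three swaps that cyclically rotate the three tokens on $\{v_u^1, v_u^2, v_{i(u)}\}$, landing $t_u^1$ on $v_u^1$, $t_u^2$ on $v_u^2$, and $t_{i(u)}$ back on $v_{i(u)}$; exactly two of these three swaps involve $t_{i(u)}$, so the cost is $2$ per element $u$. Doing this for all $u$ costs $2|U|$... which is too much. The fix is that tokens for different $u$ sharing the same $S_i$ should share the cost: I should instead, for each chosen set $S_i \in \mathcal{C}$, move $t_i$ out of $v_i$ once (one weight-$1$ swap), then perform a sequence of free weight-$0$ swaps that permutes all the $t_u^\bullet$ tokens for $u \in S_i$ handled by this set into position (these swaps are among weight-$0$ tokens only, using $v_i$ as a transit hub while it holds a weight-$0$ token), then move $t_i$ back (a second weight-$1$ swap). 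This gives total cost $2|\mathcal{C}| = 2\,OPT(\Phi)$, establishing the upper bound.

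For the lower bound I would argue that any swap sequence solving $K$ must, for each $u \in U$, use at least one swap incident to $v_u^1$ or $v_u^2$ that has nonzero cost — equivalently, at least one $t_i$ must leave its home vertex. More precisely: consider the set $I^*$ of indices $i$ such that $t_i$ is displaced from $v_i$ at some point during the sequence. I claim $\{S_i : i \in I^*\}$ covers $U$: if some $u$ lay in no $S_i$ with $i \in I^*$, then throughout the entire sequence all neighbors of $v_u^1$ and $v_u^2$ (which are exactly the set-vertices $v_i$ with $u \in S_i$) are occupied by their home tokens $t_i$, so no token can ever enter or leave $v_u^1$ or $v_u^2$ — but $t_u^1$ needs to get from $v_u^2$ to $v_u^1$, contradiction. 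Hence $|I^*| \ge OPT(\Phi)$, and each $t_i$ with $i \in I^*$ leaves $v_i$ and must return, requiring at least two swaps incident to $v_i$ each involving $t_i$; each such swap costs at least $1$ (the weight of $t_i$), and these swaps are distinct across different $i$ since they are incident to distinct vertices $v_i$. Therefore the total cost is at least $2|I^*| \ge 2\,OPT(\Phi)$.

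The main obstacle is the bookkeeping in the upper bound: ensuring that, once $t_i$ has been moved off $v_i$, the free weight-$0$ swaps can actually route every $t_u^1 \leftrightarrow t_u^2$ swap (for the $u$'s assigned to $S_i$) using only vertices that currently hold weight-$0$ tokens, and that $t_i$ can be returned to $v_i$ at the end without extra cost — i.e., that the weight-$0$ tokens can be parked so that $v_i$ is free exactly when $t_i$ needs to come back, and that no two of the weight-$1$ tokens' round trips interfere. I expect this to go through cleanly because the gadget for each set $S_i$ (the star on $v_i$ with leaves $\{v_u^1, v_u^2 : u \in S_i \text{ assigned to } S_i\}$) can be handled independently and within such a star one can realize any permutation of the leaf-tokens using the center as scratch space while the center token sits parked on a leaf, at the cost of exactly two center-incident swaps (one out, one back). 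The lower-bound direction is comparatively robust: the "frozen neighborhood" argument is the crux and needs only the observation that the $v_u^\bullet$ vertices have all their neighbors among the set-vertices.
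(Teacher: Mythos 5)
Your proposal is correct and matches the paper's proof in both directions: the upper bound displaces each chosen set's weight-$1$ token $t_i$ exactly once, parks it on an element leaf, performs the element swaps for free through the hub $v_i$, and returns $t_i$ (cost $2$ per set), while the lower bound observes that the displaced weight-$1$ tokens must correspond to a set cover and each such token incurs cost at least $2$ for its round trip. The only difference is presentational: the paper resolves the parking bookkeeping you flag by choosing, for each $S_i$ in a minimum cover, an element $a$ covered only by $S_i$ and parking $t_i$ on $v_a^2$ so that the final return swap simultaneously places $t_a^2$, which is exactly the concrete instantiation of your sketch.
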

Before giving the proof, we note that the above claim is enough to prove \cref{thm:WTS}. The number of vertices $n$ in the \textsc{Weighted Token Swapping} is $2|U| + k$. Therefore, to attain a reduction from $(1-\delta) \ln(|U|+k)$-approximate \textsc{Set Cover}, we set $\eps$ (a constant dependent on $\delta$) to be sufficiently small so that $(1-\eps) \ln(n) = (1-\eps) \ln(2|U| + k) \leq (1-\delta) \ln(|U|+k)$. The proof of \cref{claim:weighted-claim} is given below.
\begin{proof}
First, we show $OPT(K) \leq 2 \cdot OPT(\Phi)$. Let $F \subseteq \{S_i \}_{1 \leq i \leq k}$ be the smallest family of input subsets covering $U$. We perform the following sequence of swaps. For each $S_i \in F$, let $a \in S_i $ be an arbitrary element of $S_i$ which is not contained in any other set in $F$ (such an $a$ must exist, or we could remove $S_i$ from $F$). We perform the following sequence of swaps, which will bring $t_u^1$ and $t_u^2$ to their destinations, for any $u \in S_i$.
\begin{enumerate}
    \item Swap $t_a^1$ with $t_i$, so $t_a^1$ is on $v_i$.
    \item For any $u \in S_i$ so that $u \neq a$ and $t_u^1$ and $t_u^2$ have not already been moved to their target vertices:
    \begin{enumerate}
        \item Swap $t_u^1$ with $t_a^1$.
        \item Swap $t_u^1$ with $t_u^2$.
        \item Swap $t_u^2$ with $t_a^1$.
    \end{enumerate}
    This brings $t_u^1$ and $t_u^2$ to their destinations, and $t_a^1$ again on $v_i$. 
    \item Swap $t_a^1$ with $t_a^2$.
    \item Swap $t_a^2$ with $t_i$.
\end{enumerate}
This brings $t_u^1$ and $t_u^2$ to their destinations for every $u \in S_i$, and leaves $t_i$ on its destination vertex. Moreover, the only two steps in the above sub-sequence involving a weight-1 token are 1 and 4, so the total weight of the sequence is $2$. Because we repeat the sequence for every $S_i \in F$, the total weight of the optimal swap sequence is at most $2 \cdot |F| = 2 \cdot OPT(\Phi)$.

Now, we show $2 \cdot OPT(\Phi) \leq OPT(K)$. Let $T_{moved}$ be the set of weight-1 tokens moved from their starting vertex in the optimal swap sequence on $K$. Each such token contributes at least $2$ to the weight of the optimal swap sequence, as each token is moved from its start vertex, and during a later swap is moved back (because no two weight-1 tokens have adjacent start vertices, there is no double-counting between these swaps). For each $u \in U$, the tokens $t_u^1$ and $t_u^2$ want to switch positions with each other, but can only do so if an adjacent weight-1 token is displaced from its start vertex. A weight-1 token is only adjacent to $t_u^1$ and $t_u^2$, however, if it corresponds to an input subset $S_i$. Therefore, at least $OPT(\Phi)$ weight-1 tokens must be displaced from their start vertices, so $OPT(K) \geq 2 \cdot OPT(\Phi)$.
\end{proof}
\section{Additional barriers to improved algorithms}
In this section, we discuss two barriers to obtaining a $(4-\eps)$-approximation for unweighted token swapping on graphs. The first barrier is against a broad class of algorithms that encompasses all known approximation algorithms for token swapping. The second barrier shows that the technique used to prove approximation factors for existing algorithms cannot be used to prove an approximation factor of $4-\eps$ for any $\eps > 0$.
\subsection{A barrier against a general class of algorithms}
We define the following property for swap sequences on a \textsc{Token Swapping} instance. 
\begin{definition} [Local Optimality]
    A swap sequence is \emph{locally optimal} if every swap between tokens $t_1$ and $t_2$ does not move both $t_1$ and $t_2$ further from their destinations.
\end{definition}
All known algorithms for token swapping on trees or on general graphs work by returning the length of a swap sequence that is locally optimal. 

\local*
We construct a token swapping instance $K$ as follows. See~\cref{fig:outer-cycle}. Let $p$ and $q$ be parameters to be decided later. We can assume $p$ is even. We begin with a cycle $C^{out}$ of length $p \cdot q$. Starting with an arbitrary vertex, we label the vertices $v_0, ..., v_{p \cdot q - 1}$ in one direction (say, clockwise) around the cycle. We partition the vertices of $C^{out}$ into $p$ consecutive ``segments." For any $0 \leq i \leq p-1$, the vertices $\{v_{iq}, ... , v_{iq + (q-1)}\}$ form the $i$-th segment. If $i$ is even, we call such a segment an \emph{even segment}; else it is an \emph{odd segment}. For any $j$ so that $v_j$ is in an even segment, the token starting on $v_j$ has target vertex $v_{j+2q} \pmod{pq}$. If $v_j$ is in an odd segment, the token starting on $v_j$ has target vertex $v_{j-2q} \pmod{pq}$. That is, each token in an even segment wants to move to the corresponding vertex in the next even segment in the clockwise direction, while each token in an odd segment wants to move to the next odd segment over in the counter-clockwise direction. For each $0 \leq j \leq pq$, we add a path $P_j$ of $2q-2$ edges between $v_j$ and $v_{j + 2q}$. All the vertices on this path (except the endpoints in $C^{out}$) wish to stay on their start vertices. We call these paths \emph{inner paths}. We call the vertices in $C^{out}$ \emph{outer vertices}, and all other vertices \emph{inner vertices}. Tokens that begin on outer vertices (which also have destinations on outer vertices) we call \emph{outer tokens}; other tokens are \emph{inner tokens}. Outer tokens with start and target vertices in even segments we will call \emph{even tokens}; other outer tokens we will call \emph{odd tokens}. Note that each outer token begins on a vertex connected to its target vertex by an inner cycle. The outer cycle, together with a single inner cycle, is depicted in \cref{fig:outer-cycle}.

\begin{figure}[h]
    \centerline{\includegraphics[scale=.3]{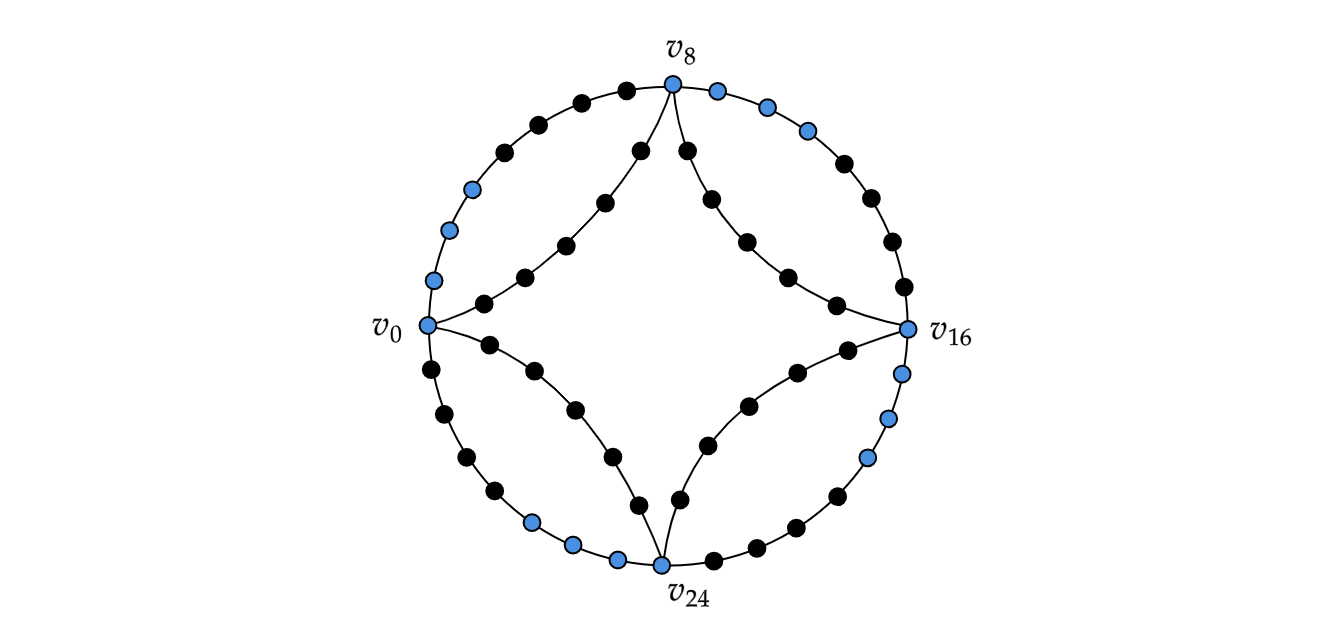}}
    \caption{An (incomplete) graph for a token swapping instance where $p = 8$, $q = 4$. The diagram depicts the outer cycle and one inner cycle, but leaves out the remaining inner cycles for legibility. In the outer cycle, vertices in even segments are colored blue, while those in odd segments are colored black. The token starting on $v_0$ has target $v_8$, the token starting on $v_8$ has target $v_{16}$, the token starting on $v_{16}$ has target $v_{24}$, and the token starting on $v_{24}$ has target $v_{0}$.}
    \label{fig:outer-cycle}
\end{figure}

\begin{claim}
    $OPT(K) \leq pq^2$.
\end{claim}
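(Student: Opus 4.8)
The plan is to give an explicit swap sequence of length exactly $pq^2$ that uses only the edges of the outer cycle $C^{out}$, never touching an inner path. Since every inner token starts on its target vertex, it then suffices to route every outer token home, and the budget works out precisely because routing an outer token the ``long way'' along $C^{out}$ (distance $2q$) rather than through its length-$(2q-2)$ shortcut inner path lets every swap be productive for \emph{both} of its tokens, matching $pq\cdot 2q/2 = pq^2$.

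Here is the setup I would use. Reading clockwise, the vertices of $C^{out}$ fall into the $p$ blocks $B_0,\dots,B_{p-1}$ (the segments), each of size $q$; every token of an even block $B_{2i}$ wants to move $2q$ steps clockwise (onto the corresponding vertex of $B_{2i+2}$), and every token of an odd block $B_{2i+1}$ wants to move $2q$ steps counterclockwise (onto the corresponding vertex of $B_{2i-1}$). Call these CW-tokens and CCW-tokens, so the cycle alternates blocks of $q$ CW-tokens with blocks of $q$ CCW-tokens. The single subroutine I need is \emph{crossing} a block $B$ holding CW-tokens $p_1,\dots,p_q$ (in clockwise order) with the block $B'$ holding CCW-tokens $n_1,\dots,n_q$ (in clockwise order) immediately clockwise of it: repeatedly perform any adjacent swap of some $p_a$ lying immediately clockwise-before some $n_b$. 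Such a swap never involves two $p$'s or two $n$'s, so each group keeps its clockwise order throughout, and the process halts exactly when all $n$'s precede all $p$'s, i.e.\ $n_1,\dots,n_q$ occupy the vertices of $B$ and $p_1,\dots,p_q$ the vertices of $B'$, each in its original order. The number of swaps equals the number of $(p_a,n_b)$ inversions, namely $q^2$, and every swap moves a CW-token one step clockwise and a CCW-token one step counterclockwise, each toward its destination.

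The full sequence then runs in two rounds. Since $p$ is even, round $1$ crosses the vertex-disjoint pairs $(B_0,B_1),(B_2,B_3),\dots,(B_{p-2},B_{p-1})$, costing $(p/2)q^2$ swaps; afterward every CW-token has advanced $q$ steps clockwise (now sitting on a vertex of the next odd block, in the correct relative order) and every CCW-token $q$ steps counterclockwise. At this point the vertices that were $B_{2i+1}$ hold CW-tokens immediately clockwise-before the vertices that were $B_{2i+2}$, which hold CCW-tokens, so round $2$ crosses the vertex-disjoint pairs $(B_1,B_2),(B_3,B_4),\dots,(B_{p-1},B_0)$ (the last pair wrapping around), another $(p/2)q^2$ swaps; this pushes every CW-token a further $q$ steps clockwise, for a total of $2q$, onto the corresponding vertex of $B_{2i+2}$, which is exactly its target $v_{j+2q}$, and symmetrically every CCW-token onto its target $v_{j-2q}$. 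In total $2\cdot(p/2)\cdot q^2 = pq^2$ swaps, so $OPT(K)\le pq^2$.

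I expect no genuine obstacle here; the content is bookkeeping, and that is where the write-up should be careful: (i) verifying that after round $1$ each block holds exactly the tokens that round $2$'s crossings require, including at the wraparound pair $(B_{p-1},B_0)$; and (ii) tracking that the within-block relative order, preserved by every crossing, brings each individual token to its precise target vertex rather than merely its target segment.
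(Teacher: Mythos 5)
Your proposal is correct and follows essentially the same approach as the paper: both give an explicit two-round swap schedule confined to the outer cycle $C^{out}$, in which every even (CW) token advances $q$ positions clockwise and every odd (CCW) token advances $q$ positions counter-clockwise per round, for a total of $pq^2$ swaps with inner tokens never disturbed. The only difference is bookkeeping --- the paper bubbles each odd token $q$ edges counter-clockwise in ascending vertex order and argues it only meets even tokens, while you organize the identical net motion as $p/2$ disjoint block crossings per round and count $q^2$ inversions per crossing --- so the two write-ups are interchangeable.
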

\begin{proof}
Here is a swap sequence bringing every token to its destination vertex:
\begin{enumerate}
    \item For $j = 0, ..., pq-1$:
    \begin{enumerate}
        \item If $f_1^{-1}(v_j)$ is an odd token, bubble it counter-clockwise around $C^{out}$ across $q$ edges.
    \end{enumerate}
    \item For $j = 0, ..., pq-1$:
    \begin{enumerate}
        \item If $f_1^{-1}(v_j)$ is an odd token, bubble it counter-clockwise around $C^{out}$ across $q$ edges. 
    \end{enumerate}
\end{enumerate}
We check that the above swap sequence brings every token to its target vertex. No swap occurs on an edge in an inner cycle, so none of the inner tokens are displaced. Because the vertices are considered in each loop in ascending order, whenever a token $f_1^{-1}(v_j)$ which began in an odd segment is bubbled $q$ edges counter-clockwise, the token $f_1^{-1}(v_{j-1})$ was just bubbled $q$ edges counter-clockwise itself during the previous iteration (if $v_{j-1}$ is in an odd segment). Therefore, $f_1^{-1}(v_j)$ only swaps with even tokens. Thus, each token which began in an odd segment winds up being bubbled counter-clockwise $2q$ times, and each token which began in an even segment is bubbled clockwise $2q$ times.
\end{proof}
\begin{claim} \label{claim:low-bound-barrier-1}
    Let $S$ be the shortest locally optimal swap sequence bringing every token to its destination, and let $k$ be the length of $S$. Then $4pq^2 - 5pq - 8q^2 \leq k$.
\end{claim}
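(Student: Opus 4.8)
The plan is to lower-bound the length of \emph{any} locally optimal swap sequence $S$ by accounting, token-by-token, for how far each token must travel, while exploiting local optimality to rule out the ``shortcut'' that makes $OPT(K)$ small. The key observation is that $OPT(K) \le pq^2$ is achieved by a sequence in which odd tokens bubble around $C^{out}$ and swap only with even tokens (so two distinct outer tokens make progress per swap). A locally optimal sequence is forbidden from ever making such a ``doubly-productive'' swap go the wrong way, but more importantly, I will argue that it is essentially forced to route most outer tokens \emph{the long way} — either all the way around $C^{out}$ in the direction away from their short arc, or through an inner path — because any attempt to move an outer token along the short arc of $C^{out}$ toward its destination would have to push the oppositely-directed outer tokens in between it and its destination \emph{backwards}, violating local optimality.

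First I would fix notation: for a token $t$, let $\ell(t)$ be the length of $\mathbf{Swap}(t)$ (the net path traced after deleting closed subwalks, as in the soundness section), so that $k \ge \tfrac12 \sum_t \ell(t)$ only if no swap is shared, but in general $k = \sum_{\text{swaps}} 1$ and each swap contributes to the $\mathbf{Swap}$-length of at most two tokens; so $k \ge \tfrac12\sum_{t}(\text{number of edges of } \mathbf{Swap}(t))$ is the wrong direction — instead I want $k \ge \sum_t (\text{private edges of } t)$ plus a careful count of shared edges. The cleaner route: I would show that in a locally optimal sequence, for each outer token $t$, $\mathbf{Swap}(t)$ has length at least $(p-2)q$ (going around the outer cycle the long way) or at least $2q-2 + \Theta(1)$ if it cuts through an inner path — but crucially, I claim the inner-path option is blocked for all but $O(q)$ tokens, because an inner path $P_j$ from $v_j$ to $v_{j+2q}$ carries the token $f_1^{-1}(v_j)$ toward its destination only if $v_j$ is in an even segment, and once one even token uses $P_j$, local optimality prevents the $2q-3$ stationary inner tokens on $P_j$ from being displaced twice cheaply enough. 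Actually the precise bound needed is $k \ge 4pq^2 - 5pq - 8q^2$, which is $\approx 4 \cdot OPT$; since there are $pq$ outer tokens and the ``long way'' around $C^{out}$ is $(p-2)q$ edges, we get $\sum_t \ell(t) \approx pq \cdot (p-2)q \approx p^2q^2 - 2pq^2$, and dividing by $2$ (for shared swaps) gives $\approx \tfrac12 p^2 q^2 - pq^2$ — this does not match $4pq^2$. Let me recompute: $4pq^2 = 4p q^2$, and $pq$ tokens each moving $\Theta(q)$ edges gives $\Theta(pq^2)$, so in fact the intended bound says each outer token travels $\Theta(q)$ edges but \emph{without} the factor-2 savings from shared swaps — i.e.\ local optimality destroys the pairing, so $k \ge \sum_t \ell(t)$ with almost no double-counting, and each of the $pq$ outer tokens contributes $\approx 4q$ (namely $2q$ to reach the next same-parity segment is the \emph{short} way, but local optimality forces roughly the opposite, costing $\approx 4q$ per token in a not-shared fashion). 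So the real structure is: (i) show each outer token's walk has length $\ge 2q$ trivially; (ii) show that in a locally optimal sequence the swaps between two outer tokens moving in \emph{compatible} directions are rare, so at least $4pq^2 - O(pq) - O(q^2)$ swaps involve at most one ``productive'' outer token.

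So the proof skeleton I would write is: (1) Each outer token has $\mathrm{dist}(f_1(t),f_2(t)) = 2q$ along $C^{out}$ and $=2q-2$ along its inner path, so $\ell(t) \ge 2q-2$; summing, $\sum_{t\text{ outer}} \ell(t) \ge pq(2q-2)$. (2) Bound the number of swaps that are counted toward two outer tokens: such a swap is between two outer tokens $t_1,t_2$ at an edge in both $\mathbf{Swap}(t_1)$ and $\mathbf{Swap}(t_2)$, and local optimality forces both to be moving toward their destinations; I would argue, using the segment structure and a parity/orientation argument on $C^{out}$, that the total number of such doubly-counted swaps is at most $\tfrac12 pq \cdot 2q - (4pq^2 - 5pq - 8q^2)/1$-ish — concretely, I would show each even token can share at most $O(q)$ swaps with odd tokens before local optimality is violated, and separately bound inner-path usage by $O(q^2)$ total (each of the $O(pq)$ inner paths either is used by $O(1)$ tokens or contributes a full $2q-2$ of private edges). (3) Combine: $k = (\text{private edges}) + (\text{shared edges}) \ge \sum_t \ell(t) - (\text{shared count}) \ge pq(2q-2) \cdot 2 - O(pq) - O(q^2)$, wait — I need $k \ge$ number of swaps, each swap has $1$ or $2$ tokens, so $2k \ge \sum_t \ell(t) + (\#\text{singly-counted swaps})$, hence $k \ge \sum_t \ell(t) - k$, giving $2k \ge \sum_t \ell(t)$; to beat this I genuinely need a \emph{lower} bound on how many swaps are singly-counted, i.e.\ an \emph{upper} bound on doubly-counted (outer–outer productive) swaps, which is exactly where local optimality enters.

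\textbf{The main obstacle} is step (2): proving that local optimality severely limits the number of swaps in which \emph{two} outer tokens simultaneously make progress. Intuitively, the $OPT = pq^2$ solution works precisely by having every odd token swap only with even tokens while both move the ``short'' way relative to one arc but in globally opposite directions around $C^{out}$ — and the subtle point is that ``toward the destination'' for an odd token moving counter-clockwise and an even token moving clockwise \emph{are compatible at the point of their swap}, so a naive local-optimality argument does not immediately forbid it. I expect the resolution is a global winding-number / potential argument: track $\Phi = \sum_t \mathrm{dist}(\text{current position of }t, f_2(t))$; each swap decreases $\Phi$ by $0$ or $2$ (local optimality forbids $+2$), and $\Phi$ starts at $\sum_t \mathrm{dist}(f_1(t),f_2(t)) = pq\cdot 2q$ hmm that only gives $k \ge \tfrac12 \sum \mathrm{dist} = pq^2 = OPT$, not enough — so the potential must be augmented, e.g.\ by a term that charges for the ``crowding'' of oppositely-directed tokens on $C^{out}$, measured so that on any locally optimal sequence a productive outer–outer swap on $C^{out}$ is impossible except near segment boundaries. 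Making this augmented potential precise, and verifying it drops by $O(1)$ per swap while starting $\Theta(pq^2)$ higher than the naive potential, is the crux; I would model it on the observation that two outer tokens headed in opposite directions around $C^{out}$ form an ``inversion'' that a locally optimal sequence cannot resolve by swapping them on $C^{out}$, so they must be separated via inner paths, and there are too few inner-path slots ($O(pq)$ paths, but using path $P_j$ productively costs $\ge 2q-2$ private edges on the stationary inner tokens) to absorb more than $O(q^2)$ of the needed separations cheaply.
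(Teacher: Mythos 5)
There is a genuine gap, and it sits exactly where you flag ``the main obstacle.'' The missing idea is a structural fact about the graph, not an augmented potential: every token starts in the same inner cycle $C^{in}_j$ as its destination, and (\cref{claim:stays-on-inner-cycle}) any path between two vertices of $C^{in}_j$ that uses an edge outside $C^{in}_j$ is at least $2$ longer than the distance. Consequently, stepping across any edge of $C^{out}$ strictly \emph{increases} a token's graph distance to its destination while it sits in its own inner cycle, so (considering the first such swap) a locally optimal sequence performs \emph{no} swaps on $C^{out}$ at all. Your stated worry --- that an odd token moving counter-clockwise and an even token moving clockwise are ``compatible at the point of their swap'' --- is false in this graph precisely because of the inner paths: the token on $v_j$ with target $v_{j+2q}$ is at distance $2q-2$ (through $P_j$), and after one $C^{out}$ step in either direction its distance is $2q-1$. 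So the cheap pairing that achieves $OPT \le pq^2$ is killed outright by local optimality, no winding-number or crowding potential needed; the instance decomposes into $2q$ vertex-disjoint inner cycles that must be solved independently.

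Even granting a fix for that step, your accounting target is too weak to reach the claimed bound. Your plan caps at $k \gtrsim \sum_t \ell(t) \approx pq\,(2q-2) \approx 2pq^2$ (walk lengths with no double counting), which is short of $4pq^2 - 5pq - 8q^2$ by roughly a factor of $2$. The extra factor comes from a second argument you do not have: inside a single inner cycle all $p/2$ movers want to rotate in the \emph{same} direction, and every swap moves one token clockwise and one counter-clockwise, so either some token is dragged counter-clockwise almost the whole cycle length $p(q-1)-2q+2$, or else nearly all of the $\approx (p/2)(2q-2)$ clockwise moves are unshared; this is the content of \cref{claim:clockwise-stuff}, giving more than $2pq - 5p/2 - 4q$ swaps per inner cycle and hence $2q(2pq - 5p/2 - 4q) = 4pq^2 - 5pq - 8q^2$ in total. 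Your step (2) (bounding ``doubly-productive'' outer--outer swaps and inner-path usage by $O(q)$ or $O(q^2)$) is both unproven and aimed at the wrong quantity, and the proposal as written acknowledges it cannot make the potential argument precise, so it does not establish the claim.
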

We observe that for $0 \leq j \leq 2q-1$, the inner paths $P_j, P_{j + 2q}, P_{j+4q}, ..., P_{j+q(p-1)}$ form a cycle, which we will call the $j$-th \emph{inner cycle}, denoted $C^{in}_j$. Moreover, every token begins on a vertex in the same inner cycle as its target vertex. \cref{claim:low-bound-barrier-1} will follow from the next two claims.
\begin{claim} \label{claim:stays-on-inner-cycle}
    Let $a,b\in V$ be in the same inner cycle $C^{in}_j$, and let $P_{a,b}$ be a path from $a$ to $b$ containing at least one edge not in $C^{in}_j$. Then $\text{dist}(a,b) + 2 \leq |P_{a,b}|$, where $|P_{a,b}|$ is the number of edges in $P_{a,b}$.
\end{claim}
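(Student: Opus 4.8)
The plan is to prove the stronger statement that any $a$--$b$ path $P_{a,b}$ containing an edge outside $C^{in}_j$ satisfies $|P_{a,b}| \ge \text{dist}_{C^{in}_j}(a,b) + 2$, where $\text{dist}_{C^{in}_j}$ denotes distance within the cycle $C^{in}_j$; since $\text{dist}(a,b) \le \text{dist}_{C^{in}_j}(a,b)$ the claim follows at once. The first step is to record the structure of $G$ near $C^{in}_j$. The outer vertices lying on $C^{in}_j$ are exactly $v_j, v_{j+2q}, v_{j+4q}, \dots$, which I will call the \emph{ports} of $C^{in}_j$; every other vertex of $C^{in}_j$ is an interior vertex of some inner path $P_m$ with $m \equiv j \pmod{2q}$ and hence has both of its neighbors on $C^{in}_j$. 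Thus a path can leave or enter $C^{in}_j$ only at a port, and the only edges out of a port into the rest of $G$ are its two $C^{out}$-edges. Two further facts: each inner path $P_c$ joins the outer vertices $v_c$ and $v_{c+2q}$, which are at $C^{out}$-distance $2q$, by $2q-2$ edges --- a ``shortcut'' saving exactly $2$ --- and for ports $s,s'$ one has $\text{dist}_{C^{in}_j}(s,s') = t(2q-2)$, where $t$ is the minimum number of inner paths of $C^{in}_j$ between $s$ and $s'$.

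Next I would reduce to a port-to-port estimate. Cut $P_{a,b}$ at the ports where it leaves and re-enters $C^{in}_j$, obtaining maximal segments lying inside $C^{in}_j$ together with at least one \emph{excursion}: a sub-path between two ports $s,s'$ that uses no edge of $C^{in}_j$. Each inside segment has length at least the $C^{in}_j$-distance between its endpoints, so by the triangle inequality inside the cycle $C^{in}_j$ it suffices to prove that every excursion $Q$ from a port $s$ to a port $s'$ satisfies $|Q| \ge \text{dist}_{C^{in}_j}(s,s') + 2$; summing over all inside segments and the (at least one) excursions then yields $|P_{a,b}| \ge \text{dist}_{C^{in}_j}(a,b) + 2$.

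For the port-to-port bound: because $Q$ avoids every edge of $C^{in}_j$, whenever it sits at a port it must move along $C^{out}$, and whenever it enters the interior of an inner path it must traverse that whole path (no branching) to the other endpoint, which is some $P_c$ with $c \not\equiv j \pmod{2q}$. So, read along its sequence of outer-vertex visits, $Q$ is a concatenation of single $C^{out}$-edges (displacement $\pm 1$) and full inner paths $P_c$, $c \not\equiv j$ (displacement $\pm 2q$, length $2q-2$). We may assume the net $C^{out}$-displacement of $Q$ equals $+2qt$, the only alternative being $-2q(\tfrac p2 - t)$, which only strengthens the bound. If $Q$ uses $s$ inner paths with signed count $d$ (forward minus backward) and $e$ outer edges, then the outer edges must net $2q(t-d)$, so $e \ge 2q\,|t-d|$ and $s \ge |d|$, giving
\[
|Q| \;=\; (2q-2)s + e \;\ge\; (2q-2)\,|d| + 2q\,|t-d|.
\]
A routine case check in $d$ shows the right-hand side is at least $t(2q-2)+2$ \emph{unless} $d=s=t$ and $e=0$. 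But in that case the $t$ inner paths form a consecutive chain of shortcuts starting at $s$, namely $P_s, P_{s+2q}, P_{s+4q}, \dots$ --- impossible, since $s$ is a port, so $s \equiv j \pmod{2q}$ and $P_s \subseteq C^{in}_j$, contradicting $c \not\equiv j$. Hence $|Q| \ge t(2q-2)+2 = \text{dist}_{C^{in}_j}(s,s')+2$.

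I expect the main obstacle to be exactly this tightness analysis of the displacement inequality: showing that no interleaving of outer edges and shortcut paths can match the in-cycle distance. The crucial ingredient is the residue obstruction --- the two shortcuts incident to a port of $C^{in}_j$ lie in $C^{in}_j$, so an excursion can neither begin nor end with a shortcut --- and pinning down the resulting excess of $2$ in all boundary cases of the case analysis (notably $d=t$ with $s>t$ versus $s=t$ with $e\ge 1$, where the outer edges come in a balanced pair and hence number at least two) is where the real work lies.
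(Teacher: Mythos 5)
Your proof is correct, and it takes a genuinely different route from the paper's. The paper argues by path surgery: it first rewrites $P_{a,b}$, without increasing its length, so that it uses no edges of inner cycles other than $C^{in}_j$ (each detour through a foreign inner path $P_{\ell'}$ is replaced by an in-cycle path plus outer edges), and then observes that any path using only $C^{in}_j$- and $C^{out}$-edges with at least one outer edge must contain a run of $2q$ consecutive $C^{out}$-edges, which can be shortcut by the $2q-2$ cycle edges between its endpoints, giving the excess of $2$. You instead prove the stronger inequality $|P_{a,b}| \geq \text{dist}_{C^{in}_j}(a,b) + 2$ directly, decomposing the path into in-cycle segments and port-to-port excursions and running a displacement count (full inner paths contribute $\pm 2q$ at cost $2q-2$, outer edges $\pm 1$ at cost $1$), with the residue obstruction at the ports excluding the only tight configuration; I checked the case analysis and it closes. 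Your version buys a cleaner global accounting with no iterated rewriting, and as a byproduct shows $\text{dist}(a,b) = \text{dist}_{C^{in}_j}(a,b)$, i.e.\ shortest paths between vertices of $C^{in}_j$ stay on $C^{in}_j$, which is exactly the form in which the claim gets used afterwards; the paper's exchange argument is shorter to state but leans on ad hoc replacements. Two points to make explicit in a write-up: you use simplicity of $P_{a,b}$ (to force full traversal of foreign inner paths and to rule out excursions returning to their starting port), which is harmless for how the claim is applied (the paper makes the same assumption); and the sentence ``the right-hand side is at least $t(2q-2)+2$ unless $d=s=t$ and $e=0$'' is slightly off as stated, since the displacement bound alone gives no excess whenever $d=t$ --- but the supplementary arguments you already supply (the balanced $\pm 1$ outer steps forcing $e \geq 2$, the case $s>t$, and the residue obstruction when $e=0$ and $s=t$) cover exactly those cases, so the argument stands.
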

\begin{proof}
    We will prove the claim in two steps. First, we will modify $P_{a,b}$ to obtain a path $P_{a,b}'$ which is the same length as $P_{a,b}$ and which does not contain any edges from inner cycles other than $C^{in}_j$, but which contains at least one edge from $C^{out}$. Then, we will show that any path from $a$ to $b$ which contains edges from (and only from) both $C^{in}_j$ and $C^{out}$ has length at least $\text{dist}(a,b) + 2$.

    Suppose $P_{a,b}$ contains edges outside of $C^{in}_j$. If $P_{a,b}$ does not contain edges from a different inner cycle, then we set $P_{a,b}'$ to $P_{a,b}$ and move on to the argument in the next paragraph. Otherwise, $P_{a,b}$ contains an edge $e$ from a different inner cycle. Let $\ell$ be such that $C^{in}_{\ell}$ is the inner cycle containing $e$. Suppose $e$ is, in particular, the first edge to appear in $P_{a,b}$ which is in an inner cycle other than $C^{in}_j$. Therefore, $e$ is incident to a vertex $v_{\ell'}$ so that $\ell' \equiv \ell \pmod{2q}$, and $e$ is contained in either $P_{\ell'}$ or $P_{\ell'-2q}$. We can assume by symmetry that $e$ is contained in $P_{\ell'}$. Because $P_{a,b}$ is a shortest path, it is simple. By the construction of the graph then, $P_{a,b}$ contains the entirety of $P_{\ell'}$. Let $v_{j'}$ be the last vertex in $C^{in}_j$ preceding $v_{\ell'}$  in $P_{a,b}$. Because $e$ is incident to an outer vertex, the path between $v_{j'}$ and $v_{\ell'}$ only has edges in $C^{out}$. 
    The length of the sub-path of $P_{a,b}$ from the appearance of $v_{j'}$ to $v_{\ell' + 2q}$ is then $|\ell' - j'| + 2q-2$.  We replace this sub-path with the following path: the shortest path in $C^{in}_j$ from $v_{j'}$ to $v_{j' + 2q}$, then the shortest path in $C^{out}$ from $v_{j' + 2q}$ to $v_{\ell' + 2q}$. The length of this path is at most $|\ell' - j'| + 2q-2$, so we have not increased the length of $P_{a,b}$, but we have decreased the number of edges from an inner cycle other than $C^{in}_j$. We repeat this procedure until we obtain a path $P_{a,b}'$ which does not contain any edges from inner cycles other than $C^{in}_j$.

    We have a simple path $P_{a,b}'$ which contains edges only from $C^{in}_j$ and $C^{out}$, and at least one edge from $C^{out}$. Then there exists a consecutive sub-path in $P_{a,b}'$ of edges from $C^{out}$ of length $2q$, as this is the number of edges needed to go from one vertex in $C^{in}_j$ to another while traversing $C^{out}$. However, this sub-path can be replaced by the sub-path of length $2q-2$ between its endpoints using only edges from $C^{in}_j$. Therefore, $\text{dist}(a,b) + 2 \leq |P_{a,b}'| = |P_{a,b}|$.
\end{proof}
For $a,b$ in $C^{in}_j$, and for $c$ not in $C^{in}_j$ which is a neighbor of $a$, it is the case that $\text{dist}(a,b) < \text{dist}(c,b)$. Otherwise, there would be a path from $a$ to $b$ containing the edge $(a,c)$ of length at most $\text{dist}(a,b) + 1$, even though $(a,c)$ is not in $C^{in}_j$, which contradicts~\cref{claim:stays-on-inner-cycle}. Because each token starts in the same inner cycle as its target vertex, this implies that any swap across an edge in $C^{out}$ is not locally optimal. We proceed by proving a lower bound on the number of swaps needed to bring all the tokens in $C^{in}_j$ to their target vertex without leaving $C^{in}_j$.
\begin{claim} \label{claim:clockwise-stuff}
For any $j$, the number of swaps needed to bring all of the tokens on $C^{in}_j$ to their target vertices while swapping only along edges in $C^{in}_j$ is greater than $2pq - 5p/2 - 4q$.
\end{claim}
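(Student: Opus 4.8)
\textbf{Proof plan for \cref{claim:clockwise-stuff}.}

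The plan is to analyze the token swapping problem restricted to a single inner cycle $C^{in}_j$, which is a cycle of length $pq$ (it consists of $p$ inner paths, each of $2q-2$ edges, glued at the $p$ outer vertices $v_j, v_{j+2q}, \dots, v_{j+q(p-1)}$, so its length is $p(2q-2) + p = p(2q-1)$ — I will double-check this count against the construction). On this cycle, the outer tokens sitting at the $p$ glue vertices each want to move to the adjacent glue vertex (the next one over in the cyclic order, in a consistent direction determined by even/odd parity), a distance of $2q-1$ edges along the cycle, while all inner tokens want to stay put. So the restricted instance is essentially "token swapping on a cycle where a bounded number of tokens each want to rotate by a fixed amount." The key point is that we want a lower bound, so I only need to argue that no swap sequence confined to $C^{in}_j$ can be too short.

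The main tool is the standard lower bound for token swapping on a cycle (or more simply a counting/potential argument): the number of swaps is at least the sum over all tokens of $\text{dist}(f_1(t), f_2(t))$ divided by $2$, and this is tight only in very special circumstances. More usefully here, since the outer tokens on $C^{in}_j$ all want to rotate in the \emph{same} direction around the cycle (all even tokens go one way, all odd tokens the other — within a single inner cycle the relevant tokens have a consistent orientation; I will verify the parity bookkeeping), a swap of an outer token with an inner token moves the outer token one step toward its goal but displaces the inner token one step the wrong way, so that displaced inner token must later be fixed. The cleanest version: assign to each token $t$ the value $\text{dist}_{C^{in}_j}(\text{current position of } t, f_2(t))$ and track the potential $\Phi = \sum_t \text{dist}$; each swap decreases $\Phi$ by at most $2$, and $\Phi$ starts at (number of outer tokens on the cycle) $\times (2q-1) = p(2q-1)$, giving a bound of roughly $p(2q-1)/2 = pq - p/2$ — but this is off by a factor of $2$ from what we want. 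To get the extra factor, I would instead use the cycle-specific argument: on a cycle, if $k$ tokens each want to shift by the same offset $s$ in the same direction, then because every swap that helps one outer token hurts either an inner token (which needs $\geq 1$ corrective swap) or another outer token moving oppositely, one shows the number of swaps is at least $k \cdot s$ minus lower-order terms; here $k = p$ outer tokens and $s = 2q-1$, giving $\approx p(2q-1) = 2pq - p$, and after accounting for the at most $p/2$ "free" swaps between two outer tokens of opposite type meeting head-on and the boundary effects at segment ends, one lands at the claimed $2pq - 5p/2 - 4q$.

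The hard part will be making the "every helpful swap for an outer token costs a corrective swap elsewhere" argument rigorous while confined to the cycle, and correctly booking the lower-order loss terms $5p/2$ and $4q$ — in particular identifying exactly which swaps can be "shared" between two outer tokens (the $p/2$ even/odd pairs) and how the finite segment structure at the transitions between even and odd segments lets up to $O(q)$ swaps escape the count. Concretely I would: (1) pin down that $C^{in}_j$ is a $p(2q-1)$-cycle carrying exactly $p$ outer tokens, alternating blocks of even-type and odd-type as one goes around, each wanting to rotate by $2q-1$; (2) set up the potential $\Phi$ and show each swap drops it by at most $2$, with a drop of exactly $2$ only for a swap between two outer tokens moving toward each other; (3) bound the number of such "double-credit" swaps by the number of adjacent even/odd outer-token pairs, which is at most $p/2$; (4) conclude $k \geq (\Phi_{\text{start}} - 0)/1 - (\text{double-credit swaps}) \geq p(2q-1) - p/2 - (\text{boundary slack})$ and absorb the remaining slack into the $-4q$ term; then simplify to exceed $2pq - 5p/2 - 4q$. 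The factor-of-two subtlety in step (2)–(4) is where I'd expect to spend the most care: the naive potential argument gives only half the bound, so the real content is showing that, on a cycle, corrective swaps for displaced inner tokens are \emph{not} already counted among the swaps credited to outer tokens.
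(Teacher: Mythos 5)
There is a genuine gap, and it starts with a miscount of the structure of $C^{in}_j$. By the construction, the $j$-th inner cycle is formed from the paths $P_j, P_{j+2q}, P_{j+4q}, \dots$, i.e.\ from $p/2$ inner paths (not $p$), each with $2q-2$ edges, glued at $p/2$ outer vertices; its length is $p(q-1)$, not $p(2q-1)$, and each outer token on it wants to shift $2q-2$ (not $2q-1$) along the cycle. Moreover, since consecutive outer vertices of one inner cycle have indices differing by $2q$ and segments have length $q$, all outer vertices of $C^{in}_j$ lie in segments of the same parity: within a single inner cycle \emph{all} outer tokens want to rotate in the same direction, so your step (3) about ``adjacent even/odd outer-token pairs meeting head-on'' has no counterpart here (even and odd tokens only interact via the outer cycle, which is excluded in this claim).

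This miscount is not cosmetic: it is what makes your target arithmetic appear to work. Your central engine is the bound ``number of swaps $\geq k\cdot s$'' for $k$ movers each shifting by $s$ (which is in fact easy: each mover must make at least $s$ clockwise moves, and each swap moves exactly one token clockwise, so these moves are never shared). With the correct parameters $k=p/2$, $s=2q-2$ this gives only $(p/2)(2q-2)=pq-p$, roughly \emph{half} of the claimed $2pq-5p/2-4q$; you reach $\approx 2pq$ only via the erroneous $k=p$, $s=2q-1$. The missing idea — which is the actual content of the paper's proof — is a winding/net-rotation argument: every swap moves exactly one token clockwise and one counterclockwise, so $\sum_t\bigl(clock(t)-counter(t)\bigr)=0$; since every token's required net rotation is congruent to $0$ or $2q-2$ modulo $p(q-1)$ and all movers want the same direction, some token must satisfy $counter(t)-clock(t)\geq p(q-1)-2q+2$, i.e.\ must travel essentially the whole inner cycle the wrong way, contributing $\approx pq$ swaps \emph{in addition to} the $\approx (p/2-1)(2q-2)\approx pq$ clockwise moves still owed by the remaining outer tokens. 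The lower-order constants then come from de-duplication (in a minimal sequence any pair of tokens swaps at most once, since a repeated pair swap can be excised), not from even/odd boundary effects. Without this ``someone must go the long way around'' step, no refinement of the potential or corrective-swap accounting as you describe it can exceed roughly $pq$, so the proposal as written does not prove the claim.
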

\begin{proof}
    Suppose without loss of generality that $v_j$ is in an even segment; by our construction, all the other outer vertices in $C^{in}_j$ are also in even segments. Moreover, each outer vertex in $C^{in}_j$ begins with a token which wants to move to the closest outer vertex in $C^{in}_j$ in the clockwise direction, of distance $2q-2$ away. All inner tokens want to stay on their start vertices. Note that $C^{in}_j$ has length $p(q-1)$, and contains $p/2$ outer vertices.
    
    In a given swap, one token in $C^{in}_j$ is moved clockwise, and the other is moved counter-clockwise. For a token $t$, let $clock(t)$ be the number of swaps in the optimal swap sequence on $C^{in}_j$ in which $t$ is moved clockwise, and $counter(t)$ the number of times its moved counter-clockwise. 

    Let $t_{in}$ be an inner token in $C^{in}_j$. Because $t_{in}$ has the same start and target vertex, $clock(t_{in}) - counter(t_{in}) \equiv 0 \pmod{p(q-1)}$. If $t_{out}$ is an outer vertex in $C^{in}_j$, then similarly $clock(t_{out}) - counter(t_{out}) \equiv 2q-2 \pmod{p(q-1)}$. Because each swap moves one token clockwise and one token counter-clockwise, $\sum_{t_{in} \in C^{in}_j} clock(t_{in}) - counter(t_{in}) = 0$. It follows that there is at least one token $t_{counter} \in C_{in}^j$ where $counter(t_{counter}) > clock(t_{counter})$ (else, each token in $C_{in}^j$ would end the swap sequence on its start vertex). Moreover, by our construction, the distance from $t_{counter}$'s start vertex to its target vertex in the counter-clockwise direction is at least $p(q-1) - 2q+2$, so $counter(t_{counter}) -clock(t_{counter}) \geq p(q-1) - 2q+2$. 
    
    If there is an additional token $t_{counter}'$ so that $p(q-1) - 2q + 2 \leq counter(t')$, then there are at least $2(p(q-1) - 2q) > 2pq - 5p/2 - 4q$ swaps in the optimal swap sequence, as desired.  Otherwise, there are at least $p/2-1$ outer tokens in $C^{in}_j$ which are not moved counter-clockwise at least $p(q-1) - 2q$ times. Each is involved in at least $2q-2$ swaps where it is the token being moved clockwise. Moreover, at most 1 of these swaps is with $t_{counter}$, because if a pair of tokens swap at least twice, then both swaps can be removed to yield an equivalent swap sequence. This results in a total of at least 
    \begin{align*}
        (p/2-1)(2q-3) + counter(t_{counter}) &\geq (p/2-1)(2q-3) + p(q-1) - 2q \\
        &> 2pq - 5p/2 - 4q
    \end{align*}
    swaps, as desired.
\end{proof}
Because there are $2q$ inner cycles, the total number of swaps taken by a sequence with only locally optimal swaps is at least $2q(2pq - 5p/2 - 4q) =  4pq^2 - 5pq -8q^2$, proving \cref{claim:low-bound-barrier-1}. We set $p,q$ to be large enough so that $4pq^2$ overtakes the smaller-order terms, so that for the parameter $\delta$,
\[ 4-\delta < (4pq^2 - 5pq -8q^2) / pq^2 < 4. \]
This proves \cref{thm:local-optimal}.
\subsection{A barrier against a proof technique}\label{sec:bar2}
All known approximation algorithms for token swapping on general graphs and trees \cite{miltzow_et_al:LIPIcs.ESA.2016.66} \cite{YAMANAKA:2015} \cite{akers:1989} use the following approach to bound the approximation factor. Given a \textsc{Token Swapping} instance $K$, consider the quantity $total(K)$, the sum across $K$'s tokens of the distance from the token's start vertex to its target vertex:
\[ total(K) \coloneq \sum_{t \in T} \text{dist}(f_1(t), f_2(t)). \]
We observe that $\frac{1}{2} total(K) \leq OPT(K)$, as each swap moves at most two tokens closer to their destinations. Proofs of correctness for approximation algorithms then show that they yield swap sequences of length at most $c \cdot total(K)$, which implies that the given algorithm is a $2c$-approximation. A natural approach to obtaining a $(4-\eps)$-approximation for token swapping on graphs, then, is to show that on any input an algorithm yields a swap sequence of length at most $c \cdot total(K)$, for some $c < 2$. Below, we show that this approach will not work.
\begin{theorem} \label{thm:apx-factor-barrier}
    For any $\delta > 0$, there exists a token swapping instance $K$ so that $OPT(K) \geq (2-\delta) \cdot total(K)$.
\end{theorem}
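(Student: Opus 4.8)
The plan is to reuse the ``single inner cycle'' instance from the proof of \cref{thm:local-optimal}, but \emph{without} the surrounding outer cycle, as a standalone token swapping instance. This is the crucial observation: in the full construction for \cref{thm:local-optimal}, the outer cycle $C^{out}$ is exactly what lets an optimal solution be cheap (for that instance $OPT \le pq^2$ while $total \approx 2pq^2$, so $OPT \approx \tfrac12 total$ there --- useless for us); once $C^{out}$ is deleted, every swap sequence is confined to a single cycle, and \cref{claim:clockwise-stuff} already tells us that this is expensive.

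Concretely, I would let $p$ (even) and $q$ be large parameters, set $s = 2q-2$ and $m = (p/2)\cdot s = p(q-1)$, and take $G = C_m$, the cycle on vertices $v_0, \dots, v_{m-1}$. I would place a ``moving'' token on each $v_{is}$ for $0 \le i \le p/2 - 1$, with the token on $v_{is}$ targeting $v_{(i+1)s \bmod m}$, and give every other token equal start and target vertex. For $p \ge 6$ each moving token sits at distance exactly $s$ from its target, so $total(K) = (p/2)\cdot s = p(q-1)$. Because $G$ is a single cycle, every swap in any swap sequence is along an edge of $G$, so this instance is precisely the situation of \cref{claim:clockwise-stuff} with the ``inner cycle'' there being all of $G$; hence $OPT(K) > 2pq - \tfrac52 p - 4q$. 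Then
\[
\frac{OPT(K)}{total(K)} \;>\; \frac{2pq - \tfrac52 p - 4q}{p(q-1)},
\]
and taking $p = q = N$ makes the right side equal to $\frac{2N - 13/2}{N-1} = 2 - \frac{9/2}{N-1}$, which is $\ge 2 - \delta$ once $N \ge 1 + \tfrac{9}{2\delta}$ (round $N$ up to an even integer). This produces the desired instance.

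I expect the only real obstacle to be checking that the proof of \cref{claim:clockwise-stuff} never secretly used the ambient graph --- i.e. that it goes through verbatim when the cycle in question is the \emph{entire} instance. Inspecting that argument, it only uses the congruences $clock(t) - counter(t) \equiv (\text{net displacement}) \pmod{m}$ together with $\sum_t (clock(t)-counter(t)) = 0$, plus the standard fact that an optimal swap sequence swaps no pair of tokens more than once; all of this survives unchanged. The one genuinely new (but trivial) check is the distance computation for $total(K)$: one needs $p \ge 6$ so that the ``short way'' around $C_m$ (length $s$) beats the ``long way'' (length $m-s = (p/2-1)s$) for every moving token, i.e. $p > 4$. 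With that in hand the theorem is immediate.
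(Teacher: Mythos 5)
Your proof is correct and takes essentially the same route as the paper: both exhibit a single cycle with a sparse set of tokens that each want to rotate a short distance in the same direction, and both get the lower bound from the clockwise/counter-clockwise counting argument. The only difference is that you reuse \cref{claim:clockwise-stuff} verbatim on the stripped-down inner cycle (correctly checking that its proof never uses the ambient graph), whereas the paper builds a fresh, slightly simpler cycle instance (length $pq$, with $p$ tokens each shifting $q$ clockwise) and re-proves the analogous bound $OPT(K) \geq 2pq - 2q - p + 1$ directly.
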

We give the following construction. Let $p,q$ be positive integer parameters to be specified later. Our \textsc{Token Swapping} instance $K$ consists of a cycle of length $p \cdot q$. Starting with an arbitrary vertex, we label the vertices $v_0, ..., v_{p \cdot q - 1}$ clockwise around the cycle. For each $0 \leq i \leq p-1$, the token starting on $v_{i \cdot q}$ has destination $v_{(i+1) \cdot q} \pmod{pq}$ (that is, it wants to move $q$ spots clockwise). All other tokens begin on their target vertices. 

In this construction, $total(K) = pq$. The next claim will allow us to prove \cref{thm:apx-factor-barrier}.
\begin{claim}
    $OPT(K) \geq 2pq - 2q - p + 1$.
\end{claim}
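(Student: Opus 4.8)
The plan is to analyze any swap sequence on the cycle $C$ of length $pq$ and show that the $p$ ``mover'' tokens (the ones starting on $v_{iq}$ wanting to reach $v_{(i+1)q}$) collectively force at least $2pq - 2q - p + 1$ swaps. The key observation is that in a cycle, a swap moves one participating token clockwise and the other counter-clockwise. For each token $t$, let $\mathrm{clock}(t)$ and $\mathrm{counter}(t)$ denote the number of swaps in which $t$ moves clockwise resp.\ counter-clockwise, and note that the net displacement $\mathrm{clock}(t) - \mathrm{counter}(t)$ must be congruent mod $pq$ to the signed clockwise distance from $f_1(t)$ to $f_2(t)$ — which is $q$ for each of the $p$ movers and $0$ for every stationary token. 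Since each swap contributes $+1$ to some token's clockwise count and $+1$ to some (other) token's counter-clockwise count, we have $\sum_t \mathrm{clock}(t) = \sum_t \mathrm{counter}(t)$, and this common value is exactly the total number of swaps $k$.

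First I would handle the ``no token goes all the way around'' regime: if every token $t$ satisfies $\mathrm{clock}(t) - \mathrm{counter}(t) \in \{q, 0\}$ exactly (i.e.\ the movers have net displacement exactly $q$ and everyone else exactly $0$, with no full loops), then I claim $k$ is still large. The point is that a mover's clockwise progress of $q$ must be achieved by swapping with $q$ distinct tokens along its path (roughly), but these $q$ tokens each get pushed one step counter-clockwise, and to restore them to their start vertices they must be pushed back clockwise by later swaps — and crucially, two tokens can swap at most once in an optimal sequence (a repeated swap between the same pair can be deleted), so the ``restoring'' swaps are largely disjoint from the ``progress'' swaps. Carefully accounting: the $p$ movers contribute $pq$ clockwise steps, the stationary tokens that got displaced must contribute matching counter-clockwise-then-clockwise correction, and a bookkeeping argument of the flavor used in Claim~\ref{claim:clockwise-stuff} (counting $\mathrm{clock}$ and $\mathrm{counter}$ over all tokens, subtracting off the at-most-one-shared-swap overlaps, and using that the $p$ movers' paths are spread $q$ apart around the cycle) should yield $k \geq 2pq - 2q - p + 1$. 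The small subtracted terms $2q$ and $p$ come from boundary slack: one mover can ``borrow'' progress from an adjacent mover's displacement, saving up to $q$ swaps, and the $p$ separate mover-blocks each lose a constant in the disjointness accounting.

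Second I would rule out (or absorb) the case where some token makes net displacement off by a multiple of $pq$ — i.e.\ a token that winds fully around the cycle. Such a token alone contributes at least $pq - q$ swaps in one rotational direction just for itself (and more precisely $\geq pq$ if it's stationary, $\geq pq - q$ if it's a mover going the ``long way''), so if any token does this, combined with the baseline contributions of the other tokens we again clear $2pq - 2q - p + 1$ for $p, q$ large; this mirrors the ``additional token $t'_{counter}$'' subcase in the proof of Claim~\ref{claim:clockwise-stuff}. Then, with the claim $OPT(K) \geq 2pq - 2q - p + 1$ in hand and $total(K) = pq$, Theorem~\ref{thm:apx-factor-barrier} follows by choosing $p, q$ large enough that $(2pq - 2q - p + 1)/(pq) > 2 - \delta$, e.g.\ take $q$ large then $p$ large.

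The main obstacle I anticipate is making the ``disjointness of progress swaps and restoring swaps'' argument fully rigorous without losing more than $O(p + q)$ in the additive slack — in a cycle, a single swap between a mover and a displaced stationary token simultaneously counts as progress for the mover and as the first half of that stationary token's round trip, so the two types of swaps are not literally disjoint, and one must argue via the pigeonhole/flow-conservation identity $k = \sum_t \mathrm{clock}(t) = \sum_t \mathrm{counter}(t)$ rather than by naive double counting. Concretely: $\sum_t \mathrm{clock}(t) - \sum_t \mathrm{counter}(t) = 0$ forces $\sum_{\text{movers}} (\mathrm{clock} - \mathrm{counter}) = pq$ to be exactly cancelled by $\sum_{\text{stationary}} (\mathrm{counter} - \mathrm{clock}) = pq$ (assuming no full loops), so $\sum_{\text{stationary}} \mathrm{counter}(t) \geq pq$ and separately $\sum_{\text{movers}} \mathrm{clock}(t) \geq pq$; since these two sums count disjoint ``slots'' of swaps only up to the overlap where a mover-stationary swap is counted in both, and that overlap is at most $pq$ in the worst case but the structure of the cycle (movers spaced $q$ apart, each restoring swap usable once) limits it — this is where the $-2q - p + 1$ correction is extracted, and pinning down that constant is the delicate part.
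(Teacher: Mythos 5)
Your bookkeeping machinery ($\mathrm{clock}(t)$, $\mathrm{counter}(t)$, the identity $\sum_t \mathrm{clock}(t)=\sum_t \mathrm{counter}(t)=k$, and deleting repeated swaps between the same pair) is exactly the right toolkit and matches the paper's, but your case structure is inverted and the case you spend most of your effort on is vacuous. If every mover had net displacement exactly $q$ and every other token exactly $0$ (your ``no full loops'' regime), then $\sum_t\bigl(\mathrm{clock}(t)-\mathrm{counter}(t)\bigr)=pq\neq 0$, contradicting the conservation identity; your own ``concretely'' step shows the inconsistency, since under that assumption every stationary token has $\mathrm{counter}-\mathrm{clock}=0$ and so $\sum_{\text{stationary}}(\mathrm{counter}-\mathrm{clock})$ cannot equal $pq$. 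The correct deduction from conservation plus the mod-$pq$ congruences is that wrapping is \emph{forced}: there must exist a token $t_{counter}$ with $\mathrm{counter}(t_{counter})-\mathrm{clock}(t_{counter})\ge pq-q$ (a mover going the long way around, or a stationary token looping, which costs at least $pq$). That existence statement is the paper's key step, and it is precisely the scenario you relegate to a secondary ``rule out or absorb'' case.

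In that (only) case your proposal does not carry out the accounting that yields the stated constant, and you hedge with ``for $p,q$ large,'' whereas the claim is the exact inequality $OPT(K)\ge 2pq-2q-p+1$. The paper finishes as follows: if a second token also has $\mathrm{counter}\ge pq-q$, these counter-clockwise moves lie in distinct swaps (no swap moves two tokens counter-clockwise), giving at least $2pq-2q$ swaps; otherwise each of the remaining $p-1$ movers has $\mathrm{counter}<pq-q$, hence by the congruence $\mathrm{clock}\ge q$, contributing $(p-1)q$ clockwise moves, and a swap is double-counted only when such a mover swaps with $t_{counter}$, which (after deleting repeated pair-swaps) happens at most once per mover, so the overlap is at most $p-1$; in total $(pq-q)+(p-1)q-(p-1)=2pq-2q-p+1$, exactly as in \cref{claim:clockwise-stuff}. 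Your sketch gestures at ``progress'' versus ``restoring'' swaps and their near-disjointness but never pins this overlap bound down---indeed you name it as the unresolved obstacle---so as written the proposal does not establish the claim.
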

\begin{proof}
    The proof is similar to that for \cref{claim:clockwise-stuff}. For a token $t$, let $clock(t)$ be the number of times $t$ is involved in a swap that moves it clockwise, and $counter(t)$ then number of times it is moved counter-clockwise. Because each swap moves one token clockwise and one token counter-clockwise, $\sum_{t \in T} clock(t) - counter(t) = 0$. It follows that there is at least one token $t_{counter}$ where $counter(t_{counter}) > clock(t_{counter})$ (else, every token would end the swap sequence on its start vertex). Moreover, $counter(t_{counter})-clock(t_{counter}) \geq pq-q$, as by our construction the distance from each token's start vertex to target vertex is at least $pq-q$ in the counter-clockwise direction. 

    If there is an additional token $t_{counter}'$ so that $pq-q \leq counter(t_{counter}')$, then $t_{counter}$ and $t_{counter}'$ combined participate in at least $2pq-2q$ swaps, proving the claim.  Otherwise, there are at least $p-1$ tokens which have the property that they do not start on their target vertex, and they are not moved counter-clockwise at least $pq-q$ times, which, by our construction, means each needs to be moved clockwise at least $q$ times. Because each swap moves exactly one token clockwise, there are at least $(p-1)q$ swaps involving these tokens. Moreover, each such token swaps at most once with $t_{counter}$, because if a pair of tokens swap at least twice, then both swaps can be removed to yield an equivalent swap sequence. This results in a total of at least $pq-q + (p-1)q - (p-1) = 2pq - p - 2q + 1$ swaps, as desired.
\end{proof}
Thus, we can set $p,q$ to be large enough that the following inequality holds:
\[ (2qp - p - 2q + 1)/pq > (2 - \delta). \]
This proves \ref{thm:apx-factor-barrier}. Therefore, any $(4-\eps)$-approximation for \textsc{Token Swapping} would require a strategy for lower-bounding $OPT(K)$ that differs substantially from present techniques. 

\newpage
\bibliographystyle{plain}
\bibliography{ref.bib}

\begin{thebibliography}{10}

\bibitem{aicholzer:2021}
Oswin Aichholzer, Erik~D. Demaine, Matias Korman, Anna Lubiw, Jayson Lynch, Zuzana Mas\'{a}rov\'{a}, Mikhail Rudoy, Virginia Vassilevska~Williams, and Nicole Wein.
\newblock Hardness of token swapping on trees.
\newblock In {\em 30th annual {E}uropean {S}ymposium on {A}lgorithms}, volume 244 of {\em LIPIcs. Leibniz Int. Proc. Inform.}, pages Art. No. 3, 15. Schloss Dagstuhl. Leibniz-Zent. Inform., Wadern, 2022.

\bibitem{akers:1989}
S.B. Akers and B.~Krishnamurthy.
\newblock A group-theoretic model for symmetric interconnection networks.
\newblock {\em IEEE Transactions on Computers}, 38(4):555--566, 1989.

\bibitem{MR1285588}
Noga Alon, F.~R.~K. Chung, and R.~L. Graham.
\newblock Routing permutations on graphs via matchings.
\newblock {\em SIAM J. Discrete Math.}, 7(3):513--530, 1994.

\bibitem{arora:1996}
Sanjeev Arora and Carsten Lund.
\newblock Hardness of approximation.
\newblock In Dorit~S. Hochbaum, editor, {\em Approximation Algorithms for NP-Hard Problems}, chapter~10, pages 399--446. PWS Publishing, Boston, 2004.

\bibitem{banerjee2022locality}
Avah Banerjee, Xin Liang, and Rod Tohid.
\newblock Locality-aware qubit routing for the grid architecture.
\newblock In {\em 2022 IEEE International Parallel and Distributed Processing Symposium Workshops (IPDPSW)}, pages 607--613. IEEE, 2022.

\bibitem{MR3710080}
Indranil Banerjee and Dana Richards.
\newblock New results on routing via matchings on graphs.
\newblock In {\em Fundamentals of computation theory}, volume 10472 of {\em Lecture Notes in Comput. Sci.}, pages 69--81. Springer, Berlin, 2017.

\bibitem{bapat2023advantages}
Aniruddha Bapat, Andrew~M Childs, Alexey~V Gorshkov, and Eddie Schoute.
\newblock Advantages and limitations of quantum routing.
\newblock {\em PRX Quantum}, 4(1):010313, 2023.

\bibitem{MR4541302}
Ahmad Biniaz, Kshitij Jain, Anna Lubiw, Zuzana Mas\'{a}rov\'{a}, Tillmann Miltzow, Debajyoti Mondal, Anurag~Murty Naredla, Josef Tkadlec, and Alexi Turcotte.
\newblock Token swapping on trees.
\newblock {\em Discrete Math. Theor. Comput. Sci.}, 24(2):Paper No. 9, 37, 2022.

\bibitem{MR3805577}
\'{E}douard Bonnet, Tillmann Miltzow, and Pawe\l{} Rz\k{a}\.{z}ewski.
\newblock Complexity of token swapping and its variants.
\newblock {\em Algorithmica}, 80(9):2656--2682, 2018.

\bibitem{cayley1849lxxvii}
Arthur Cayley.
\newblock Lxxvii. note on the theory of permutations.
\newblock {\em The London, Edinburgh, and Dublin Philosophical Magazine and Journal of Science}, 34(232):527--529, 1849.

\bibitem{MR3964104}
Andrew~M. Childs, Eddie Schoute, and Cem~M. Unsal.
\newblock Circuit transformations for quantum architectures.
\newblock In {\em 14th {C}onference on the {T}heory of {Q}uantum {C}omputation, {C}ommunication and {C}ryptography}, volume 135 of {\em LIPIcs. Leibniz Int. Proc. Inform.}, pages Art. No. 3, 24. Schloss Dagstuhl. Leibniz-Zent. Inform., Wadern, 2019.

\bibitem{MR4036097}
Erik~D. Demaine, S\'{a}ndor~P. Fekete, Phillip Keldenich, Henk Meijer, and Christian Scheffer.
\newblock Coordinated motion planning: reconfiguring a swarm of labeled robots with bounded stretch.
\newblock {\em SIAM J. Comput.}, 48(6):1727--1762, 2019.

\bibitem{dinur:2014}
Irit Dinur and David Steurer.
\newblock Analytical approach to parallel repetition.
\newblock In {\em Proceedings of the Forty-Sixth Annual ACM Symposium on Theory of Computing}, STOC '14, page 624–633, New York, NY, USA, 2014. Association for Computing Machinery.

\bibitem{gourves2017object}
Laurent Gourv{\`e}s, Julien Lesca, and Ana{\"e}lle Wilczynski.
\newblock Object allocation via swaps along a social network.
\newblock In {\em 26th International Joint Conference on Artificial Intelligence (IJCAI’17)}, pages 213--219, 2017.

\bibitem{MR4638397}
Takehiro Ito, Naonori Kakimura, Naoyuki Kamiyama, Yusuke Kobayashi, and Yoshio Okamoto.
\newblock Algorithmic theory of qubit routing.
\newblock In {\em Algorithms and data structures}, volume 14079 of {\em Lecture Notes in Comput. Sci.}, pages 533--546. Springer, Cham, [2023] \copyright 2023.

\bibitem{MR796304}
Mark~R. Jerrum.
\newblock The complexity of finding minimum-length generator sequences.
\newblock {\em Theoret. Comput. Sci.}, 36(2-3):265--289, 1985.

\bibitem{MR3917574}
Jun Kawahara, Toshiki Saitoh, and Ryo Yoshinaka.
\newblock The time complexity of permutation routing via matching, token swapping and a variant.
\newblock {\em J. Graph Algorithms Appl.}, 23(1):29--70, 2019.

\bibitem{miltzow_et_al:LIPIcs.ESA.2016.66}
Tillmann Miltzow, Lothar Narins, Yoshio Okamoto, G\"{u}nter Rote, Antonis Thomas, and Takeaki Uno.
\newblock Approximation and hardness of token swapping.
\newblock In {\em 24th {A}nnual {E}uropean {S}ymposium on {A}lgorithms (ESA)}, volume~57 of {\em LIPIcs. Leibniz Int. Proc. Inform.}, pages Art. No. 66, 15. Schloss Dagstuhl. Leibniz-Zent. Inform., Wadern, 2016.

\bibitem{molavi2022qubit}
Abtin Molavi, Amanda Xu, Martin Diges, Lauren Pick, Swamit Tannu, and Aws Albarghouthi.
\newblock Qubit mapping and routing via maxsat.
\newblock In {\em 2022 55th IEEE/ACM international symposium on Microarchitecture (MICRO)}, pages 1078--1091. IEEE, 2022.

\bibitem{MR1691876}
Igor Pak.
\newblock Reduced decompositions of permutations in terms of star transpositions, generalized {C}atalan numbers and {$k$}-ary trees.
\newblock {\em Discrete Math.}, 204(1-3):329--335, 1999.

\bibitem{portier1990whitney}
Frederick~J Portier and Theresa~P Vaughan.
\newblock Whitney numbers of the second kind for the star poset.
\newblock {\em European Journal of Combinatorics}, 11(3):277--288, 1990.

\bibitem{doi:10.1137/S0097539795280895}
Ran Raz.
\newblock A parallel repetition theorem.
\newblock {\em SIAM Journal on Computing}, 27(3):763--803, 1998.

\bibitem{MR4261033}
Bruno Schmitt, Mathias Soeken, and Giovanni De~Micheli.
\newblock Symbolic algorithms for token swapping.
\newblock In {\em 2020 {IEEE} 50th {I}nternational {S}ymposium on {M}ultiple-{V}alued {L}ogic---{ISMVL} 2020}, pages 28--33. IEEE Computer Soc., Los Alamitos, CA, [2020] \copyright 2020.

\bibitem{sharma2023noise}
Asim Sharma and Avah Banerjee.
\newblock Noise-aware token swapping for qubit routing.
\newblock In {\em 2023 IEEE International Conference on Quantum Computing and Engineering (QCE)}, volume~1, pages 82--88. IEEE, 2023.

\bibitem{MR2431751}
Zhizhang Shen and Ke~Qiu.
\newblock On the {W}hitney numbers of the second kind for the star poset: comment on [{E}uropean {J}. {C}ombin. {\bf 11} (1990), no. 3, 277--288; mr1059558] by {F}. {J}. {P}ortier and {T}. {P}. {V}aughan.
\newblock {\em European J. Combin.}, 29(7):1585--1586, 2008.

\bibitem{siraichi2019qubit}
Marcos~Yukio Siraichi, Vin{\'\i}cius Fernandes~dos Santos, Caroline Collange, and Fernando Magno~Quint{\~a}o Pereira.
\newblock Qubit allocation as a combination of subgraph isomorphism and token swapping.
\newblock {\em Proceedings of the ACM on Programming Languages}, 3(OOPSLA):1--29, 2019.

\bibitem{surynek2018finding}
Pavel Surynek.
\newblock Finding optimal solutions to token swapping by conflict-based search and reduction to sat.
\newblock In {\em 2018 IEEE 30th International Conference on Tools with Artificial Intelligence (ICTAI)}, pages 592--599. IEEE, 2018.

\bibitem{surynek2019multi}
Pavel Surynek.
\newblock Multi-agent path finding with generalized conflicts: An experimental study.
\newblock In {\em International Conference on Agents and Artificial Intelligence}, pages 118--142. Springer, 2019.

\bibitem{MR1137822}
Theresa~P. Vaughan.
\newblock Bounds for the rank of a permutation on a tree.
\newblock {\em J. Combin. Math. Combin. Comput.}, 10:65--81, 1991.

\bibitem{MR1705338}
Theresa~P. Vaughan.
\newblock Factoring a permutation on a broom.
\newblock {\em J. Combin. Math. Combin. Comput.}, 30:129--148, 1999.

\bibitem{MR1334632}
Theresa~P. Vaughan and Frederick~J. Portier.
\newblock An algorithm for the factorization of permutations on a tree.
\newblock {\em J. Combin. Math. Combin. Comput.}, 18:11--31, 1995.

\bibitem{MR4594484}
Friedrich Wagner, Andreas B\"{a}rmann, Frauke Liers, and Markus Weissenb\"{a}ck.
\newblock Improving quantum computation by optimized qubit routing.
\newblock {\em J. Optim. Theory Appl.}, 197(3):1161--1194, 2023.

\bibitem{MR3917573}
Katsuhisa Yamanaka, Erik~D. Demaine, Takashi Horiyama, Akitoshi Kawamura, Shin-ichi Nakano, Yoshio Okamoto, Toshiki Saitoh, Akira Suzuki, Ryuhei Uehara, and Takeaki Uno.
\newblock Sequentially swapping colored tokens on graphs.
\newblock {\em J. Graph Algorithms Appl.}, 23(1):3--27, 2019.

\bibitem{MR3349550}
Katsuhisa Yamanaka, Erik~D. Demaine, Takehiro Ito, Jun Kawahara, Masashi Kiyomi, Yoshio Okamoto, Toshiki Saitoh, Akira Suzuki, Kei Uchizawa, and Takeaki Uno.
\newblock Swapping labeled tokens on graphs.
\newblock {\em Theoret. Comput. Sci.}, 586:81--94, 2015.

\bibitem{YAMANAKA:2015}
Katsuhisa Yamanaka, Erik~D. Demaine, Takehiro Ito, Jun Kawahara, Masashi Kiyomi, Yoshio Okamoto, Toshiki Saitoh, Akira Suzuki, Kei Uchizawa, and Takeaki Uno.
\newblock Swapping labeled tokens on graphs.
\newblock {\em Theoretical Computer Science}, 586:81--94, 2015.
\newblock Fun with Algorithms.

\bibitem{yasui2015swapping}
Gaku Yasui, Kouta Abe, Katsuhisa Yamanaka, and Takashi Hirayama.
\newblock Swapping labeled tokens on complete split graphs.
\newblock {\em Inf. Process. Soc. Japan. SIG Tech. Rep}, 14:1--4, 2015.

\bibitem{MR1666061}
Louxin Zhang.
\newblock Optimal bounds for matching routing on trees.
\newblock {\em SIAM J. Discrete Math.}, 12(1):64--77, 1999.

\end{thebibliography}

\appendix

\section{New 4-approximation}

We give a new 4-approximation for \textsc{Token Swapping} on graphs, inspired by the 2-approximation for trees given by \cite{YAMANAKA:2015}.

Given a token swapping instance $K = (G, T, f_1, f_2)$ on $n$ vertices, we observe that the functions $f_1$ and $f_2$ induce a permutation $\pi$ on $[n]$, which maps a vertex $v$ to the target of the token starting on $v$. Moreover, $\pi$ can be decomposed in polynomial time into cycles $\pi = C_1 \circ ... \circ C_k$. The cycle $C_i$ consists of vertices $v_{i, 0}, ..., v_{i, \ell_i-1}$, where the destination for the token starting on $v_{i, j}$ is $v_{i,j+1 \pmod{\ell_i}}$. That is, $f_2(f_1^{-1}(v_{i, j})) = v_{i,j+1 \pmod{\ell_i}}$. Our algorithm proceeds in $k$ phases. In the $i$-th phase, the tokens in $C_i$ are brought to their targets, while all other tokens end the phase on the same vertices they started it on. Below is a formal description.
\begin{enumerate}
    \item For each $i = 1, ..., k$:
    \begin{enumerate}
        \item For each $j = \ell_i-1, \ell_i-2, ..., 1$:
        \begin{enumerate}
            \item Set $t_{i,j-1}$ to be the token currently on $v_{i,j-1}$, set $t_{i,j}$ to be the token currently on $v_{i,j}$, and set $p_{i,j}$ to be a shortest path between them.
            \item Bubble $t_{i,j-1}$ along $p_{i,j}$ until it reaches $v_{i,j}$.
            \item Bubble $t_{i,j}$ in reverse order along $p_{i,j}$ until it reaches $v_{i,j-1}$.
        \end{enumerate}
    \end{enumerate}
\end{enumerate}
\begin{claim} \label{claim:new-algo}
    For $i \in [k]$, for $j' \in \{0, ..., \ell_1 -1\}$, the token which begins the $i$-th phase on $v_{i,j'}$ ends the phase on $v_{i,j'+1}$.
\end{claim}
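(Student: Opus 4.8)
\textbf{Proof plan for \cref{claim:new-algo}.}
The plan is to track, within a single fixed phase $i$, how one sub-iteration of the inner loop (indexed by $j$, running $\ell_i - 1$ down to $1$) transforms the configuration restricted to the cycle $C_i$, and then compose these transformations. First I would fix $i$ and set up the invariant: immediately before the sub-iteration with index $j$, the token that began the phase on $v_{i,j'}$ currently sits on $v_{i,j'+1}$ for every $j'$ with $j \le j' \le \ell_i - 1$, while for $j' < j$ the token that began the phase on $v_{i,j'}$ still sits on $v_{i,j'}$. At the very start of the phase (before $j = \ell_i - 1$) this invariant holds vacuously on the "already shifted" part and trivially on the "not yet shifted" part. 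I would also record the companion fact that tokens outside $C_i$ are untouched during phase $i$, since every path $p_{i,j}$ is a shortest path in $G$ and the only swaps performed are bubbles along these paths; this is needed so that the statement "ends the phase on the same vertices" from the algorithm description is consistent, but more importantly so that $t_{i,j-1}$ and $t_{i,j}$ as defined in step (i) are exactly the tokens we think they are.

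The core of the argument is the inductive step. Assume the invariant holds before sub-iteration $j$. Then the token currently on $v_{i,j-1}$ is the token that began the phase on $v_{i,j-1}$ (since $j-1 < j$), and the token currently on $v_{i,j}$ is the token that began the phase on $v_{i,j-1}$'s predecessor in the shift — wait, more carefully: by the invariant the token on $v_{i,j}$ is the one that began on $v_{i,j-1}$ only if $j-1$ lies in the shifted range, which it does not; rather the token currently on $v_{i,j}$ is the one that began the phase on $v_{i,j}$ is false too. Let me restate: before sub-iteration $j$, $v_{i,j}$ holds the token that started the phase on $v_{i,j}$? No — the shifted range is $j \le j' \le \ell_i-1$, meaning $v_{i,j'+1}$ holds the phase-start token of $v_{i,j'}$; so $v_{i,j}$ holds the phase-start token of $v_{i,j-1}$ precisely when $j-1$ is in the range, i.e. never, and $v_{i,j}$ is the image vertex of $v_{i,j-1}$ only for... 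Actually the clean statement is: before sub-iteration $j$, vertex $v_{i,j}$ holds the token that started the phase on $v_{i,j}$ if $j = \ell_i-1$ (nothing shifted onto it yet), and in general $v_{i,j}$ holds whatever token started on $v_{i,j}$ because the shifted tokens occupy $v_{i,j+1},\dots,v_{i,\ell_i}$ cyclically. So $t_{i,j-1}$ is the phase-start token of $v_{i,j-1}$ and $t_{i,j}$ is the phase-start token of $v_{i,j}$. Steps (ii) and (iii) swap these two tokens along $p_{i,j}$ (bubbling one forward to $v_{i,j}$ and the other backward to $v_{i,j-1}$), and crucially leave every other token on $p_{i,j}$ in place, since a forward bubble followed by a reverse bubble along the same path is a net transposition of the two endpoint tokens. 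Hence after the sub-iteration, $v_{i,j}$ holds the phase-start token of $v_{i,j-1}$ and $v_{i,j-1}$ holds the phase-start token of $v_{i,j}$ — but the latter is irrelevant because $v_{i,j-1}$ will be revisited at the next step; what matters is that now the range of vertices holding a "correctly shifted" token has grown to include $v_{i,j}$, giving the invariant before sub-iteration $j-1$. The boundary case $j=1$ terminates the loop with $v_{i,1},\dots,v_{i,\ell_i-1}$ all correctly shifted and $v_{i,0}$ holding the token that was last bubbled backward, which by a final check (the token that started on $v_{i,\ell_i-1}$ should land on $v_{i,0} = v_{i,\ell_i \bmod \ell_i}$) is exactly the phase-start token of $v_{i,\ell_i-1}$; so all $\ell_i$ tokens of $C_i$ are shifted by one, which is the claim.

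The step I expect to be the main obstacle is pinning down exactly which token occupies which vertex just before each sub-iteration and verifying that the "forward bubble then reverse bubble along $p_{i,j}$" composition really does act as the transposition $(t_{i,j-1}\ t_{i,j})$ with no side effects on the rest of $C_i$ — in particular one must be careful when $p_{i,j}$ passes through other vertices of $C_i$ (which it may, since it is a shortest path in $G$, not in the cycle). The resolution is that bubbling $t_{i,j-1}$ forward to $v_{i,j}$ shifts every intermediate token one step back along $p_{i,j}$, and then bubbling $t_{i,j}$ backward along the reversed path undoes exactly those shifts, restoring every intermediate token; I would state this as a short standalone sub-lemma ("bubble-and-unbubble along a common path is a transposition of the endpoints") and invoke it, keeping the index bookkeeping for the main induction clean. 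Everything else is routine: careful modular arithmetic on the indices $j' + 1 \pmod{\ell_i}$ and the observation that distinct phases do not interfere because phase $i$ confines all its swaps to the (finite, shortest) paths joining consecutive vertices of $C_i$, which I would note are disjoint from the tokens of $C_{i'}$ for $i' \ne i$ only in the sense that those tokens sit on their home vertices and are never selected as a $t_{i,j-1}$ or $t_{i,j}$ — hence the phases can indeed be analyzed independently.
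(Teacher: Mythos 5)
Your ``bubble forward, then reverse-bubble the other endpoint'' sub-lemma is correct, and it is indeed the right tool for handling the fact that $p_{i,j}$ is a shortest path in $G$ and may pass through other vertices of $C_i$ or of other cycles. The problem is the bookkeeping invariant at the heart of your induction, and the error lands exactly on the one token whose trajectory is nontrivial. Trace $\ell_i=4$: iterations $j=3,2,1$ transpose the endpoint tokens on $(v_{i,2},v_{i,3})$, then $(v_{i,1},v_{i,2})$, then $(v_{i,0},v_{i,1})$. Before iteration $j=2$, vertex $v_{i,2}$ does \emph{not} hold the phase-start token of $v_{i,2}$ (that token was already shifted to $v_{i,3}$ in the previous iteration); it holds the phase-start token of $v_{i,\ell_i-1}$, which cascades one vertex per iteration, $v_{i,\ell_i-1}\rightarrow v_{i,\ell_i-2}\rightarrow\cdots\rightarrow v_{i,0}$. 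So your ``clean statement'' that $t_{i,j}$ is the phase-start token of $v_{i,j}$ is false for every $j<\ell_i-1$, and your invariant (read with indices mod $\ell_i$) asserts that the phase-start token of $v_{i,\ell_i-1}$ sits on $v_{i,0}$ from the first iteration onward, which is also false; it only reaches $v_{i,0}$ at the very last step. Consequently the case $j'=\ell_i-1$ of \cref{claim:new-algo} is never actually established: your closing sentence asserts that the token bubbled backward at $j=1$ is the phase-start token of $v_{i,\ell_i-1}$, but under your own identification it would be the phase-start token of $v_{i,1}$, so that assertion contradicts your invariant rather than following from it.

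The repair is to state the invariant the way the algorithm actually behaves, which is how the paper argues via two cases: before iteration $j$ of phase $i$, (a) the phase-start token of $v_{i,j'}$ sits on $v_{i,j'+1}$ for $j\le j'\le \ell_i-2$; (b) the phase-start token of $v_{i,\ell_i-1}$ sits on $v_{i,j}$, i.e.\ it is the token selected as $t_{i,j}$ in \emph{every} iteration and is carried to $v_{i,j-1}$ by Step iii; and (c) the phase-start token of $v_{i,j'}$ sits on $v_{i,j'}$ for $j'<j$, any temporary displacement by Step ii being undone by Step iii (your sub-lemma). With (b) in place, the terminal iteration $j=1$ deposits that token on $v_{i,0}$ as required, and (a) together with (c) gives the claim for $j'\le\ell_i-2$ exactly along the lines you wrote. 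The remaining points in your proposal (tokens outside $C_i$ are restored by the end of the phase, so phases can be analyzed independently) are fine and match the paper's discussion.
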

\begin{proof}
    If $j' \neq \ell_{i-1}$, then during the iteration of the inner loop where $j = j'+1$, the token beginning the iteration on $v_{i,j'}$ is moved during Step ii to $v_{i,j'+1}$. If, during any other iteration of the inner loop, this token is knocked from its current vertex by Step ii, it is moved back by Step iii.

    If $j' = \ell_{i-1}$, then the token which begins the $i$-th phase on $v_{i,j'}$ is set to $t_{i,j}$ during every iteration of the inner loop. In the first iteration, $j = \ell_i-1$, and this token is on $v_{i,\ell_i-1}$. In the inductive case, it is brought from $v_{i,j}$ to $v_{i,j-1}$ by Step iii. of the iteration.
\end{proof}
Moreover, each token which begins the $i$-th phase on a vertex not in $C_i$ ends the phase on the same vertex. If it is displaced by Step ii during a given iteration of the inner loop, then it is moved back to the vertex it was previously on during Step iii. As a consequence, \cref{claim:new-algo} implies that each token that starts on a vertex in $C_i$ is moved to its destination during the $i$-th phase. Next, we will show that the algorithm is indeed a 4-approximation.
\begin{claim}
    Let $OPT(K)$ be length of the shortest swap sequence for $K$, and let $ALG(K)$ be the length of the swap sequence returned by the above algorithm. Then $OPT(K) \leq ALG(K) \leq 4 \cdot OPT(K)$.
\end{claim}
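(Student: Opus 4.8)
The plan is to establish the two inequalities separately. The lower bound $OPT(K) \leq ALG(K)$ is trivial: the algorithm produces a valid swap sequence taking $f_1$ to $f_2$ (this is exactly the content of \cref{claim:new-algo} together with the observation that tokens outside $C_i$ return to their starting vertices in phase $i$), so its length is at least that of the shortest such sequence. The substantive part is $ALG(K) \leq 4 \cdot OPT(K)$, and for this I would follow the standard framework described in \cref{sec:bar2}: bound $ALG(K)$ above by a constant times $total(K) = \sum_{t} \text{dist}(f_1(t), f_2(t))$, and then invoke $\frac{1}{2} total(K) \leq OPT(K)$.

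First I would count the swaps performed in a single iteration of the inner loop, the one with cycle index $i$ and position index $j$. That iteration picks a shortest path $p_{i,j}$ between the current locations of the tokens on $v_{i,j-1}$ and $v_{i,j}$, and bubbles one token forward along it and the other back along it, for a total of roughly $2 \cdot |p_{i,j}|$ swaps. So I need a bound on $|p_{i,j}|$, i.e. on the distance between the two tokens being exchanged at that moment. The key point is that at the start of the iteration with index $j$, the token on $v_{i,j}$ is the \emph{original} token that began the phase on $v_{i,j}$ (since nothing in earlier iterations $j+1, \ldots, \ell_i - 1$ permanently displaces it — by \cref{claim:new-algo}'s proof it is knocked off and restored each time, except in the very iteration that moves it, which hasn't happened yet), and likewise the token currently on $v_{i,j-1}$ is the original token from $v_{i,j-1}$ — wait, more carefully: $v_{i,j-1}$ may have been touched, so I would track that $t_{i,j-1}$ is whatever token currently sits there, but crucially after the iteration it will be the original $t_{i,j}$ token, which has destination $v_{i,j+1}$, so things chain correctly. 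In any case, $|p_{i,j}| = \text{dist}(\text{loc}(t_{i,j-1}), \text{loc}(t_{i,j})) = \text{dist}(v_{i,j-1}, v_{i,j})$ at that instant, and I'd bound this by the triangle inequality in terms of distances of the relevant tokens to their targets.

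The cleanest accounting: in phase $i$, summing $2|p_{i,j}|$ over $j = \ell_i - 1, \ldots, 1$, and using $\text{dist}(v_{i,j-1}, v_{i,j}) \leq \text{dist}(f_1^{-1}\text{-original-of-}v_{i,j-1} \text{ to its target}) + \cdots$ — the standard telescoping used in the tree cycle-algorithm analysis of \cite{YAMANAKA:2015} — shows that phase $i$ uses at most $2 \sum_{j} \text{dist}(v_{i,j}, v_{i,j+1})$ swaps, which is exactly $2 \cdot total_i(K)$ where $total_i(K)$ is the contribution of $C_i$'s tokens to $total(K)$ (each original token on $v_{i,j}$ has target $v_{i,j+1}$). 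Summing over all phases gives $ALG(K) \leq 2 \cdot total(K) \leq 4 \cdot OPT(K)$. The main obstacle I anticipate is making the "the token on $v_{i,j}$ at the start of iteration $j$ is still the original phase-$i$ token on $v_{i,j}$, hence $|p_{i,j}| = \text{dist}(v_{i,j-1},v_{i,j})$" claim fully rigorous, since it requires reasoning about which vertices have been disturbed by Steps ii/iii of the earlier, higher-index iterations; this is essentially a strengthening of \cref{claim:new-algo} that I would state as an invariant and prove by the same downward induction on $j$.
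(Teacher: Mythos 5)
Your argument is essentially the paper's: count each inner-loop iteration as $2\,\text{dist}(v_{i,j-1},v_{i,j})-1$ swaps, sum over iterations to get $ALG(K)\le 2\cdot total(K)$, and combine with $\tfrac12\, total(K)\le OPT(K)$; the direction $OPT(K)\le ALG(K)$ is, as you say, just the validity of the produced sequence (\cref{claim:new-algo} plus the observation about tokens outside $C_i$). One correction, though: the invariant you single out as the main thing to make rigorous --- that at the start of the iteration with index $j$ the token on $v_{i,j}$ is still the original phase-$i$ token of $v_{i,j}$ --- is false for $j<\ell_i-1$: after the iteration with index $j+1$, the token sitting on $v_{i,j}$ is the original token of $v_{i,\ell_i-1}$, which the algorithm carries backwards around the cycle until it reaches its destination $v_{i,0}$ (this is exactly the second case in the proof of \cref{claim:new-algo}). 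Fortunately you need no such invariant, nor the triangle-inequality/telescoping step you sketch: $p_{i,j}$ is by definition a shortest path between the current locations of $t_{i,j-1}$ and $t_{i,j}$, which are the fixed \emph{vertices} $v_{i,j-1}$ and $v_{i,j}$, so $|p_{i,j}|=\text{dist}(v_{i,j-1},v_{i,j})$ no matter which tokens happen to sit there, and this quantity equals $\text{dist}(f_1(t),f_2(t))$ \emph{exactly} for the token $t$ with $f_1(t)=v_{i,j-1}$, since $f_2(t)=v_{i,j}$ by the definition of the cycle decomposition. With that observation the per-phase bound $\sum_{j=1}^{\ell_i-1}\bigl(2\,\text{dist}(v_{i,j-1},v_{i,j})-1\bigr)\le 2\sum_{j=0}^{\ell_i-1}\text{dist}(v_{i,j-1\bmod \ell_i},v_{i,j})$ is immediate, and summing over $i$ gives $ALG(K)\le 2\cdot total(K)\le 4\cdot OPT(K)$, which is precisely the paper's computation.
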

\begin{proof}
    Let $total(K)$ denote the sum of the distances between tokens and their targets in $K$:
    \[ total(K) \coloneq \sum_{t \in T} \text{dist}(f_1(t), f_2(t)).\]
    Then $\frac{1}{2} total(k) \leq OPT(K)$, as each swap can bring at most two tokens closer to their target vertex. We will show that $ALG(K) < 2 \cdot total(K)$.

    Because each vertex is contained in exactly one cycle in $\pi$, we can rewrite $total(K)$ as:
    \begin{align*}
        total(K) &= \sum_{i = 1}^k \sum_{j = 0}^{\ell_i - 1} \text{dist}(f_1(f_2^{-1}(v_{i,j})),v_{i,j}) \\
        &= \sum_{i = 1}^k \sum_{j = 0}^{\ell_i - 1} \text{dist}(v_{i,j-1\pmod{\ell_i}},v_{i,j}).
    \end{align*}
    For each iteration of the inner loop in the swap sequence, there is a unique choice of $i \in [k]$, $j' \in \{ 0, ..., \ell_i - 2\}$, so that $i = i'$ and $j = j'+1$. During this iteration, Step ii performs 
    $\text{dist}(v_{i,j},v_{i,j+1})$ swaps, and Step iii performs $\text{dist}(v_{i,j},v_{i,j+1}) -1$ swaps. Therefore, the total number of swaps in the sequence is:
    \begin{align*}
        ALG(K) &= \sum_{i = 1}^k \sum_{j = 1}^{\ell_i - 1} 2 \cdot \text{dist}(v_{i,j-1},v_{i,j}) - 1 \\
        &< \sum_{i = 1}^k \sum_{j = 0}^{\ell_i - 1} 2 \cdot \text{dist}(v_{i,j-1 \pmod{\ell_i}},v_{i,j}) \\
        &= 2 \cdot total(K),
    \end{align*}
    as desired.
\end{proof}

\end{document}